\documentclass[a4paper,10pt]{article}

\usepackage{geometry} 

\usepackage{comment}

\usepackage[T1]{fontenc}
\usepackage[utf8]{inputenc}

\usepackage{amssymb, amsthm, amsmath}

\usepackage{authblk}
\usepackage[hidelinks]{hyperref} 
\usepackage[dvipsnames]{xcolor}
\usepackage[biblabel]{cite}

\usepackage{mathdots}

\usepackage{tikz-cd}

\usepackage{bm,xcolor,tikz,hyperref, float}

\newtheorem{theorem}{Theorem}[section]
\newtheorem{lemma}[theorem]{Lemma}
\newtheorem{proposition}[theorem]{Proposition}
\newtheorem{corollary}[theorem]{Corollary}

\theoremstyle{definition}
\newtheorem{definition}[theorem]{Definition} 
\newtheorem{remark}[theorem]{Remark}
\newtheorem{example}[theorem]{Example}

\newcommand{\wt}{{\rm wt}}
\newcommand{\dd}{{\rm d}}
\newcommand{\C}{\mathcal{C}}
\newcommand{\M}{\mathcal{M}}
\newcommand{\V}{\mathcal{V}}
\newcommand{\K}{\mathcal{K}}
\newcommand{\A}{\mathcal{A}}
\newcommand{\F}{\mathbb{F}}
\newcommand{\N}{\mathbb{N}}
\newcommand{\ones}{\mathbf{1}}

\newcommand{\GL}{{\rm GL}}
\newcommand{\PGL}{{\rm PGL}}
\newcommand{\Aut}{{\rm Aut}}
\newcommand{\PAut}{{\rm PAut}}
\newcommand{\FAut}{{\rm FAut}}

\newcommand{\diag}{{\rm diag}}
\newcommand{\ee}{\mathbf{e}}
\newcommand{\cc}{\mathbf{c}}
\newcommand{\mm}{\mathbf{m}}

\title{The permutation and automorphism groups of\\univariate multiplicity codes}
\author[1]{Eduardo Camps Moreno}
\author[2]{Hiram H. L{\'o}pez}
\author[3]{Umberto Mart{\'i}nez-Pe\~{n}as}
\author[2]{Rodrigo San-Jos{\'e}}
\affil[1]{Institut de Mathématiques de Bordeaux, France\footnote{Email: eduardo.camps-moreno@math.u-bordeaux.fr}}
\affil[2]{Department of Mathematics\\Virginia Tech, VA, USA\footnote{Emails: \{hhlopez,rsanjose\}@vt.edu}}
\affil[3]{IMUVa-Mathematics Research Institute\\University of Valladolid, Spain\footnote{Email: umberto.martinez@uva.es}}

\date{}

\begin{document}
\maketitle
\begin{abstract}
We give explicit characterizations of the permutations and linear isometries that leave (univariate) multiplicity codes invariant. We consider any evaluation subset of the finite field, any multiplicity and dimension, and the automorphisms are with respect to the Niederreiter-Rosenblom-Tsfasman (NRT) metric. For multiplicity one, we recover the permutation and automorphism groups of classical Reed--Solomon codes in the classical Hamming metric. In the main dimension range, automorphisms are uniquely determined by a projectivity that permutes the evaluation subset, up to multiplication by a nonzero scalar. For multiplicity one, permutations are given by affine maps that permute the evaluation subset. In contrast, for multiplicity at least two, permutations are only given by translations that permute the evaluation subset. Finally, we give an algebraic characterization of automorphisms of multiplicity codes for the folded Hamming metric.

\textbf{Keywords:} Automorphism group, evaluation codes, folded Hamming metric, Hasse derivatives, hyperderivative Reed--Solomon codes, multiplicity codes, NRT metric, permutation group, polynomial codes, polynomial ideal codes, Reed--Solomon codes.

\end{abstract}

\section{Introduction} \label{sec intro}

The automorphism groups of classical Reed--Solomon codes \cite{RS} in the Hamming metric were obtained in \cite[Th. 4]{Dur1987}. A natural generalization of Reed--Solomon codes, obtained by evaluating also derivatives, was first defined (for a single variable) by Rosenbloom and Tsfasman in \cite{RT1997}. There they also introduce the $ m $-metric, nowadays known as the Niederreiter-Rosenblom-Tsfasman (or NRT) metric. Such codes were later generalized to several variables in \cite{KSY2014}, where they are called multiplicity codes. However, there they are considered with respect to the folded Hamming metric. For this metric, (univariate) multiplicity codes achieve the optimal rate versus list-decodability tradeoff \cite{guruswami2013linear}. Later, multiplicity codes have independently been studied under the name hyperderivative Reed--Solomon (HRS) codes \cite{CanHorowitz2026,GuZhang2026}. Multiplicity codes (for one variable) attain the Singleton bound for both the NRT metric and the folded Hamming metric. Both metrics particularize to the classical Hamming metric for multiplicity one (the Reed--Solomon case).

To the best of our knowledge, the permutation and automorphism groups of multiplicity codes have not yet been investigated. Whereas there is a clear notion of permutation group, automorphisms could be studied for the NRT metric or the folded Hamming metric. Linear NRT isometries are given by permutations and invertible upper triangular matrices, whereas linear folded-Hamming isometries are given by permutations and arbitrary invertible matrices. 

In this manuscript, we explicitly describe the NRT automorphisms of multiplicity codes since they have exactly the same group structure as the Hamming-metric automorphism group of classical Reed--Solomon codes (in the central dimension range), as we will show. However, we also briefly discuss an algebraic characterization of the automorphism group of multiplicity codes for the folded Hamming metric, which is generally much larger. 

We treat univariate multiplicity codes of arbitrary multiplicity and dimension and evaluated over arbitrary subsets of a finite field. For most dimensions, automorphisms for the NRT metric are given by projectivities that stabilize the evaluation set and a global nonzero scalar. Permutations are given by affine maps that stabilize the evaluation set for multiplicity one (Reed--Solomon codes), and only affine translations for multiplicity at least two. Finally, folded Hamming automorphisms are given by those whose associated polynomial map stabilizes (after permutation) the subspaces of polynomials with prescribed multiplicities.

The manuscript is organized as follows. Section \ref{sec preliminaries} contains preliminaries on multiplicity codes, the NRT metric and the finite projective line. Section \ref{sec projectivities autom} shows that projectivities that stabilize the evaluation set yield NRT automorphisms. Section \ref{sec permutation is projectivity} shows that for an NRT automorphism, the permutation is necessarily given by a projectivity. Section \ref{sec d degree graph} shows that, when the permutation is the identity, an NRT automorphism consists simply in multiplying by a global nonzero scalar. Section \ref{sec classification} collects these results to fully characterize the NRT automorphism group of multiplicity codes. Section \ref{sec permutation} uses such an NRT characterization to describe the permutation group. Finally, Section \ref{sec folded Hamming} gives an algebraic characterization of the automorphisms for the folded Hamming metric. The appendices give the low and high dimensional ranges for the NRT automorphism groups, and the extreme cases of dimension one, codimension one, and one or two evaluation points.

\section{Preliminaries} \label{sec preliminaries}

\textbf{Basic notation.} Throughout this manuscript, $ \F_q $ denotes the finite field of size $ q $, which is a power of a prime. We set $ \mathbb{N} = \{ 0,1,2, \ldots \} $. For $ k \in \mathbb{N} $, we denote by $ \F_q[x] $ the ring of polynomials over $ \F_q $, and by $ \F_q[x]_{< k} $ and $ \F_q[x]_{\leq k} $ the sets of polynomials in $ \F_q[x] $ of degree less than $ k $ and at most $ k $, respectively (where $ \F_q[x]_{<0} = \{0\} $). We indistinctively write $ f $ and $ f(x) $ for polynomials in $ \F_q[x] $. We denote by $ \langle \A \rangle_{\F_q} $ the vector space over $ \F_q $ generated by a set $ \A $. For integers $ m,n $, we denote $ [m,n] = \{ m,m+1, \ldots, n\} $ (if $ m \leq n $) and $ [n] = [1,n] $, we let $ \F_q^{m \times n} $ be the set of $ m \times n $ matrices over $ \F_q $, and we define $ \F_q^n = \F_q^{1 \times n} $.

Throughout this paper, we fix an arbitrary nonempty evaluation set $\A \subseteq \F_q$, where we denote $n = |\A| > 0$.

\subsection{Multiplicity codes} \label{subsec multiplicity codes}

As usual in the multiplicity code literature, we will consider Hasse derivatives in this work.

\begin{definition} [\textbf{Hasse derivatives} \cite{hasse}] \label{def hasse derivative}
Take $ f \in \F_q[x] $. Given a variable $ y $, algebraically independent of $ x $, the polynomial $ f(x+y) \in \F_q[x,y] $ can be written uniquely as
$$ f(x+y) = \sum_{i \in \N} f^{(i)}(x) y^i, $$
for some polynomials $ f^{(i)}(x) \in \F_q[x] $, for $ i \in \N $. We define the $ i $th Hasse derivative of $ f(x) \in \F_q[x] $ as the polynomial $ f^{(i)}(x) \in \F_q[x] $, for $ i \in \N $.
\end{definition}

Notice that $ f^{(0)} = f $, and $ f^{(1)} = f^\prime $ is the usual derivative of $ f \in \F_q[x] $. The following relationship between Hasse derivatives and multiplicities of roots of polynomials is trivial and holds even though $ \F_q $ has positive characteristic (that is why Hasse derivatives are used).

\begin{proposition} \label{prop multiplicity hasse derivatives}
Given $ f \in \F_q[x] $, $ a \in \F_q $ and a positive integer $ m $, then $ (x-a)^m $ divides $ f $ if and only if $ f^{(0)}(a) = f^{(1)}(a) = \ldots = f^{(m-1)}(a) = 0 $.
\end{proposition}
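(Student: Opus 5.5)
The idea is to specialize the defining identity of Hasse derivatives at $x=a$ and read it as a Taylor expansion of $f$ around $a$. Setting $x=a$ in Definition~\ref{def hasse derivative}, which is a ring homomorphism $\F_q[x,y]\to\F_q[y]$, gives
$$ f(a+y) = \sum_{i \in \N} f^{(i)}(a)\, y^i \in \F_q[y]. $$
Next apply the $\F_q$-algebra isomorphism $\F_q[y] \to \F_q[x]$, $y \mapsto x-a$, whose inverse is $x \mapsto y+a$. Under it $f(a+y)$ goes to $f(x)$, so
$$ f(x) = \sum_{i \in \N} f^{(i)}(a)\,(x-a)^i . $$
The sum is finite because only finitely many $f^{(i)}$ are nonzero. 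This expresses $f$ in the basis $\{(x-a)^i\}_{i\in\N}$ of $\F_q[x]$, and the coefficients are exactly the values $f^{(i)}(a)$.

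Both implications now follow from this expansion. If $f^{(0)}(a)=\cdots=f^{(m-1)}(a)=0$, then every surviving term has $i\ge m$, so $(x-a)^m$ divides $f$. Conversely, suppose $f=(x-a)^m g$. Write $g$ in the same basis as $g=\sum_j c_j (x-a)^j$. Then $f=\sum_j c_j (x-a)^{j+m}$ has no $(x-a)^i$ terms with $i<m$. Since the family $\{(x-a)^i\}$ is linearly independent, the expansion is unique, so comparing coefficients gives $f^{(i)}(a)=0$ for $i<m$.

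No step here is a real obstacle. The only point needing care is the justification for substituting $x=a$ and for the change of variables: both are ring homomorphisms of polynomial rings, so the identity of Definition~\ref{def hasse derivative} is preserved. No division by factorials is needed, which is exactly why the statement holds in positive characteristic.
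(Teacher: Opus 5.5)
Your proof is correct: specializing $x=a$ in the defining identity and then substituting $y=x-a$ gives $f=\sum_{i\in\N} f^{(i)}(a)(x-a)^i$, and both implications follow from the uniqueness of coordinates in the basis $\{(x-a)^i\}_{i\in\N}$. The paper gives no proof and simply calls the statement trivial; your Taylor-expansion argument is the standard justification behind that remark, including why no factorials (and hence no characteristic restriction) are involved.
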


We may define (univariate) multiplicity codes as follows. For the rest of the manuscript, we fix integers $r\geq 1$ and $ k \in [rn]$.

\begin{definition}[\textbf{Multiplicity codes} \cite{RT1997}]
For $ f \in \F_q[x] $, we define its $ r $th jet on $ a \in \F_q $ as
$$ J^{(r)}_a(f) = (f(a), f^{(1)}(a), \ldots, f^{(r-1)}(a)) \in \F_q^r. $$
Then we define the $ k $-dimensional multiplicity code of multiplicity $ r $ over the evaluation set $ \A $ as
$$ \M_\A(r,k) = \left\lbrace \left( J^{(r)}_a(f) \right)_{a \in \A} : f \in \V_k \right\rbrace \subseteq (\F_q^r)^n, $$
where we will repeatedly use the abbreviated notation
$ \V_k = \F_q[x]_{< k}$.
\end{definition}

Technically, we should write $ (\F_q^r)^\A $ instead of $ (\F_q^r)^n $, but we implicitly assume some ordering of $ \A $.

Clearly, $ \M_\A(1,k) $ (i.e., multiplicity codes for $ r=1 $) correspond to classical Reed--Solomon codes \cite{RS}. The following result follows since the polynomials $ x-a $, for different $ a \in \F_q $, are pairwise coprime and $ k \leq rn $. 

\begin{proposition} \label{prop ev is injective}
The evaluation map
$$ {\rm ev}_\A : \V_k \longrightarrow \M_\A(r,k), $$
which sends $ f \in \V_k $ to $ (J^{(r)}_a(f))_{a \in \A} $, is a vector space isomorphism. In particular,
$$ \dim(\M_\A(r,k)) = k. $$
\end{proposition}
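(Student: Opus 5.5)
The map ${\rm ev}_\A$ is $\F_q$-linear. Indeed, the assignment $f \mapsto f(x+y)$ is linear, so by uniqueness of the expansion in Definition~\ref{def hasse derivative} each map $f \mapsto f^{(i)}$ is linear. Composing with evaluation at $a$, which is also linear, shows that every jet map $J^{(r)}_a$ is linear. Surjectivity onto $\M_\A(r,k)$ holds by the definition of the code as the image of ${\rm ev}_\A$. So it remains to prove injectivity, and the dimension formula then follows immediately since $\dim \V_k = k$.

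For injectivity, take $f \in \V_k$ with ${\rm ev}_\A(f) = 0$. Then for each $a \in \A$ we have $f^{(0)}(a) = \cdots = f^{(r-1)}(a) = 0$. By Proposition~\ref{prop multiplicity hasse derivatives}, $(x-a)^r$ divides $f$ for every $a \in \A$.

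Distinct linear polynomials $x-a$ are pairwise coprime, so the powers $(x-a)^r$ are pairwise coprime as well. Hence their product $\prod_{a \in \A}(x-a)^r$ divides $f$. This can be argued by unique factorization in $\F_q[x]$, or by induction using the fact that coprime divisors have product dividing $f$.

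The product has degree $rn \geq k > \deg f$ whenever $f \neq 0$. A nonzero polynomial cannot be divisible by a polynomial of strictly larger degree, so $f = 0$. This proves injectivity.

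There is no real obstacle here. The only point needing care is the linearity of Hasse derivatives, which follows directly from the uniqueness of the coefficients in the expansion of $f(x+y)$.
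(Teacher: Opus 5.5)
Your proof is correct and is the paper's argument: the paper justifies the statement in one line, noting that the polynomials $x-a$ are pairwise coprime and that $k \leq rn$, so a kernel element of degree less than $k$ divisible by $\prod_{a \in \A}(x-a)^r$ must vanish. You spell out the details the paper leaves implicit, namely the linearity of Hasse derivatives and the use of Proposition~\ref{prop multiplicity hasse derivatives}.
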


\subsection{The NRT metric} \label{subsec NRT metric}

In this article, we will consider multiplicity codes with respect to the NRT metric, which was introduced in \cite[Sec. 1]{RT1997}.

\begin{definition} [\textbf{NRT metric} \cite{RT1997}]
For $ z = (z_0,z_1, \ldots, z_{r-1}) \in \F_q^r $, we define its NRT weight as
$$ \wt(z) = \left\lbrace
\begin{array}{ll}
0, & \textrm{ if } z = 0, \\
r-j, & \textrm{ if } j = \min \{ i \in [0,r-1] : z_i \neq 0 \}. 
\end{array}  \right. $$
For $ c = (c_a)_{a \in \A} \in (\F_q^r)^n $, we define its NRT weight as
$$ \wt(c) = \sum_{a \in \A} \wt(c_a). $$
Finally we define the NRT metric by $ \dd(c,c^\prime) = \wt(c-c^\prime) $, for all $ c,c^\prime \in \F_q^r $ and $ c,c^\prime \in (\F_q^r)^n $. Context will clarify in which of the two cases we are in.

As usual, given a code $ \C \subseteq (\F_q^r)^n $, we define its minimum NRT distance as
$$ \dd(\C) = \min \{ \dd(c,c^\prime) : c,c^\prime \in \C, c \neq c^\prime \}, $$
which coincides with the minimum NRT weight of a nonzero codeword if $ \C $ is linear.
\end{definition}

Observe that, for $ r=1 $, the NRT metric coincides with the classical Hamming metric in $ \F_q^n $. The NRT metric also satisfies a Singleton bound for every $ r \geq 1 $. See \cite[Th.~1]{RT1997}.

\begin{proposition} [\textbf{Singleton bound} \cite{RT1997}]
Given a code $ \C \subseteq (\F_q^r)^n $ with $ d = \dd(\C) $ and $ k = \log_q|\C| $ (i.e., the dimension over $ \F_q $ of $ \C $ if it is linear), then
$$ d \leq rn - k + 1. $$
\end{proposition}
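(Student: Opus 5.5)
We prove the Singleton bound by a pigeonhole (puncturing) argument that only uses how the NRT weight is defined. Set $ d = \dd(\C) $ and assume $ |\C| \geq 2 $, since otherwise $ d $ is not defined or the bound holds trivially. Order the coordinates of $ (\F_q^r)^n $ as the $ rn $ scalar entries $ z_{a,i} $, for $ a \in \A $ and $ i \in [0,r-1] $. The plan is to find a set $ S $ of $ d-1 $ scalar positions with the following property: any nonzero vector that vanishes outside $ S $ has NRT weight at most $ d-1 $. Given such an $ S $, the projection $ \pi : (\F_q^r)^n \to \F_q^{rn-(d-1)} $ that deletes the positions in $ S $ is injective on $ \C $. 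Indeed, if $ \pi(c) = \pi(c^\prime) $ with $ c \neq c^\prime $, then $ c - c^\prime $ is nonzero and supported in $ S $, so $ \dd(c,c^\prime) \leq d-1 < d $, which is impossible. Injectivity gives $ |\C| \leq q^{rn-d+1} $, that is, $ k = \log_q|\C| \leq rn-d+1 $, which is the claimed bound.

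The key step is choosing $ S $ compatibly with the NRT weight. In a block $ z = (z_0, \ldots, z_{r-1}) \in \F_q^r $, take the \emph{last} $ t $ entries $ z_{r-t}, \ldots, z_{r-1} $. If $ z $ vanishes outside these positions, then its first nonzero index is $ j \geq r-t $, so $ \wt(z) = r-j \leq t $. Now write $ d-1 = sr + t $ with $ 0 \leq t < r $. Let $ S $ consist of all $ r $ positions in $ s $ chosen blocks, together with the last $ t $ positions of one further block. A vector supported in $ S $ then has weight at most $ sr + t = d-1 $, because weights add over blocks and each block contributes at most the number of its positions lying in $ S $ (with those positions being terminal ones).

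We also have to check that $ S $ fits, that is, $ d-1 \leq rn $ and enough blocks exist. Any nonzero vector in $ (\F_q^r)^n $ has weight at most $ rn $, so $ d \leq rn $ and hence $ d-1 \leq rn-1 $. If $ t > 0 $, then $ s \leq n-1 $, so the extra partial block is available. If $ t = 0 $, then $ s \leq n $ and no partial block is needed.

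The only delicate point is making the partial block consist of the tail positions and not arbitrary ones. Choosing leading positions would allow weight up to $ r $ in that block, and the bound would fail. With tail positions, each block contributes exactly as much weight as the number of deleted positions it contains, so the argument reduces to the Hamming-metric proof.
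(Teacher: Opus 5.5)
Your proof is correct. The paper gives no proof of this proposition; it only quotes it from \cite[Th.~1]{RT1997}, so your argument is a self-contained replacement for a citation rather than an alternative to an argument in the text.

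The puncturing set is the right one for the paper's convention $\wt(z) = r - j$, where $j$ is the first nonzero index. A vector supported on the last $t$ entries of a block has weight at most $t$, each full block contributes at most $r$, and $d-1 = sr+t \le rn-1$ guarantees that enough blocks exist. In fact $s \le n-1$ also when $t=0$, which is more than you need.

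Your argument never uses linearity. Injectivity of the deletion map on $\C$ is a pure pigeonhole statement, so you obtain $|\C| \le q^{rn-d+1}$ for arbitrary codes. That is exactly what the statement requires, since it defines $k = \log_q|\C|$.

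Two cosmetic points. First, rename the projection, because the paper reserves $\pi$ for permutations of $\A$. Second, in your last paragraph it is your argument, not the bound, that would fail if you chose leading positions in the partial block.
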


It was shown in \cite[Th.~6]{RT1997} that multiplicity codes attain such a Singleton bound.

\begin{proposition} [\hspace{1pt}\cite{RT1997}]
For the multiplicity code $ \M_\A(r,k) $, we have
$$ \dd(\M_\A(r,k)) = rn - k + 1. $$
\end{proposition}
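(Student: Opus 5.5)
We prove the two inequalities separately. The upper bound $\dd(\M_\A(r,k)) \leq rn-k+1$ is exactly the Singleton bound stated above, applied with $\C = \M_\A(r,k)$, which is linear of dimension $k$ by Proposition~\ref{prop ev is injective}. So the content is the lower bound: every nonzero codeword has NRT weight at least $rn-k+1$. By linearity, it suffices to take a nonzero $f \in \V_k$, so $\deg f \leq k-1$, and bound $\wt({\rm ev}_\A(f))$ from below. The codeword is nonzero by the injectivity in Proposition~\ref{prop ev is injective}.

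The key step is to translate NRT weight into root multiplicities. For $a \in \A$, let $j_a = \min\{ i \in [0,r-1] : f^{(i)}(a) \neq 0 \}$, and set $j_a = r$ if $J^{(r)}_a(f) = 0$. By definition the local weight is $\wt(J^{(r)}_a(f)) = r - j_a$ in both cases. Since $f^{(0)}(a) = \cdots = f^{(j_a-1)}(a) = 0$, Proposition~\ref{prop multiplicity hasse derivatives} gives that $(x-a)^{j_a}$ divides $f$. The polynomials $(x-a)^{j_a}$ for distinct $a \in \A$ are pairwise coprime, so their product divides $f$. Because $f \neq 0$, comparing degrees gives
$$ \sum_{a \in \A} j_a \leq \deg f \leq k-1. $$

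Therefore
$$ \wt({\rm ev}_\A(f)) = \sum_{a\in\A} (r - j_a) = rn - \sum_{a \in \A} j_a \geq rn - k + 1, $$
and combining this with the Singleton bound gives equality.

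There is no real obstacle. The only point needing care is the convention for a zero jet: setting $j_a = r$ makes the weight formula uniform, and Proposition~\ref{prop multiplicity hasse derivatives} still applies with $m = r$. Note also that $k \leq rn$ is what guarantees the bound $rn-k+1 \geq 1$ is meaningful.
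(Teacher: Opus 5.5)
Your proof is correct. The upper bound is the Singleton bound applied to the $k$-dimensional linear code, and the lower bound comes from counting roots with multiplicity: $\prod_{a\in\A}(x-a)^{j_a} \mid f$ forces $\sum_{a} j_a \leq k-1$. The paper gives no proof of its own here and simply cites \cite[Th.~6]{RT1997}, but your argument is the standard one, using the same coprimality-and-degree reasoning the paper relies on for Proposition~\ref{prop ev is injective} and Lemma~\ref{lemma K_ee is P_ee V k - ee}.
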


As usual, we say that a map $ T : (\F_q^r)^n \longrightarrow (\F_q^r)^n $ is a linear isometry for the NRT metric if it is linear and $ \wt(T(c)) = \wt(c) $, for all $ c \in (\F_q^r)^n $. We have the following characterization of the linear isometries for the NRT metric. We will use the elements $ a \in \A $ as the subindices for tuples in $ (\F_q^r)^n $. Later this notation will prove quite useful. The following result, obtained first in \cite{Lee2003}, was generalized to poset metrics in \cite[Th. 1]{PanekEtAl2008}.

\begin{proposition} [\hspace{1pt}\cite{Lee2003}] \label{prop isometries NRT charact}
A linear map $ T : (\F_q^r)^n \longrightarrow (\F_q^r)^n $ is an isometry for the NRT metric if and only if there exists a permutation $ \pi : \A \longrightarrow \A $ and upper triangular matrices $ A_a \in \GL_r(\F_q) $, for $ a \in \A $, such that 
$$ T((z_a)_{a \in \A}) = (z_{\pi(a)} A_a)_{a \in \A}, $$
for all $ (z_a)_{a \in \A} \in (\F_q^r)^n $. 
\end{proposition}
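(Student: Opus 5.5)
The plan is to prove both directions, with the converse carrying almost all of the work. The "if" direction is a direct check. For $T((z_a)_a) = (z_{\pi(a)} A_a)_a$ with each $A_a$ upper triangular and invertible, it suffices to show $\wt(z A) = \wt(z)$ for every $z \in \F_q^r$ and every upper triangular $A \in \GL_r(\F_q)$, because $\pi$ only reorders the summands in $\wt(c)=\sum_a \wt(c_a)$. Let $j$ be the first nonzero index of $z$. Then for $i<j$ we have $(zA)_i = \sum_{l \leq i} z_l A_{l,i} = 0$, and $(zA)_j = z_j A_{j,j} \neq 0$, since the diagonal entries of an invertible triangular matrix are nonzero. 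Hence $\wt(zA) = r-j = \wt(z)$.

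For the "only if" direction, let $T$ be a linear NRT isometry. I will first use only the weights of block-supported vectors to force a block structure. For $a \in \A$, let $W_a$ be the subspace of vectors supported on block $a$, and let $W_a^{(\geq j)}$ be the vectors in $W_a$ whose first $j$ coordinates vanish. A vector $c$ lies in some $W_a$ if and only if $\wt(c)\le r$ and, more precisely, every nonzero $c\in W_a$ satisfies $\wt(c)\le r$ and $\wt(\lambda c + c') \le r$ for all $c'\in W_a$. This intrinsic description is awkward, though, so I plan to argue with the vector $u_a = e_{a,r-1}$, the last coordinate of block $a$, which has weight $1$. A vector has weight $1$ exactly when it has a single nonzero block whose only nonzero entry is in the last position. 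Hence $T(u_a)$ is a nonzero multiple of some $u_{b}$, and injectivity makes $a \mapsto b$ a bijection; write $b = \sigma(a)$.

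Next, by induction on $j$ from $r-1$ down to $0$, I will show that $T(W_a^{(\geq j)}) = W_{\sigma(a)}^{(\geq j)}$. The key observation is that $\wt(x+y)=\wt(x)+\wt(y)$ when $x,y$ have disjoint block supports. Suppose $x = e_{a,j}$ had image $T(x)$ with nonzero components in two blocks $b \neq b'$. Because $T$ is injective and, by induction, maps $W_a^{(\geq j+1)}$ onto $W_{\sigma(a)}^{(\geq j+1)}$, I would add a suitable $y \in W_a^{(\geq j+1)}$ to cancel the block-$\sigma(a)$ tail of $T(x)$, or else compare with $T(u_{a'})$ for other $a'$ to obtain a weight contradiction. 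This step is the main obstacle. The cleanest route I would try first is counting: $W_a^{(\geq j)}$ has all nonzero weights at most $r-j$, and the weight of $\lambda x + y$ is constant, equal to $r-j$, for $\lambda\ne0$. A vector spread over two blocks, each of weight at least $1$, could still have total weight $r-j$, so weights alone may not suffice. I would therefore use linearity more strongly. Adding multiples of $T(u_{a'})$, which are single last-coordinate entries in blocks $\sigma(a')$, shows that the components of $T(x)$ outside $\sigma(a)$ must vanish: otherwise choosing $c = x - \mu u_{a'}$ would lower the weight of $T(c)$ by cancelling a last coordinate, while $\wt(c) = \wt(x)+1$ strictly increases, which is a contradiction. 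Iterating this "cancel the lowest-weight part" trick over the filtration depth handles the general components.

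Once $T$ maps each $W_a$ onto $W_{\sigma(a)}$ preserving the filtration by first nonzero index, the restriction of $T$ to $W_a$ is given by a matrix $B_a$ with $zB_a$ preserving the flag $\langle e_{j},\dots,e_{r-1}\rangle$, that is, $B_a$ is upper triangular and invertible. Setting $\pi = \sigma^{-1}$ and $A_a = B_{\pi(a)}$ then yields $T((z_a)_a) = (z_{\pi(a)}A_a)_a$, as claimed.
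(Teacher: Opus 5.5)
The paper does not prove this proposition. It quotes it from Lee \cite{Lee2003}, and notes the poset-metric generalization in \cite{PanekEtAl2008}, where linear isometries of a poset space are compositions of maps induced by poset automorphisms with triangular maps; the NRT space is the case of $n$ disjoint chains of length $r$. Your argument is a self-contained alternative. Its key idea, testing $T(x)$ against the images $T(u_{a'})=\mu_{a'}u_{\sigma(a')}$ of the weight-one vectors, is the right one.

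The reason you give for that step is wrong, though. Let $y=T(x)$ and let $y_b$ be its block at $b=\sigma(a')\neq\sigma(a)$. If the first nonzero index of $y_b$ is smaller than $r-1$, changing its last entry does not lower its weight. If that entry is zero, there is nothing to cancel. What you actually need is only that $\wt(v+\nu e_{r-1})\le\wt(v)$ for every nonzero $v\in\F_q^r$ and every $\nu\in\F_q$. Then, for any nonzero $x\in W_a$ with $y_b\neq 0$,
\[
\wt(T(x+u_{a'}))\le\wt(T(x))=\wt(x)<\wt(x)+1=\wt(x+u_{a'}),
\]
which is a contradiction.

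This argument applies to every $x\in W_a$ at once. It gives $T(W_a)\subseteq W_{\sigma(a)}$, and equality follows by injectivity and dimension. So the induction on $j$ and the vague ``iterating over the filtration depth'' are unnecessary and should be removed.

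Upper triangularity then follows from weight preservation alone. The set $\{z\in\F_q^r:\wt(z)\le r-j\}$ is exactly the subspace of vectors whose first $j$ entries vanish. Hence $z\mapsto zB_a$ maps it into itself, so $e_jB_a\in\langle e_j,\ldots,e_{r-1}\rangle$.

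Also drop your opening claim that $c$ lies in some $W_a$ if and only if $\wt(c)\le r$. It is false as soon as $r,n\ge 2$ (take $u_a+u_b$), although you never use it.

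With these repairs you have a short, elementary proof tailored to disjoint chains. The cited poset-metric route needs more machinery, but it describes the isometry group for arbitrary posets as a semidirect product of the triangular group and the poset automorphism group.
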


The two main objects of our interest in this manuscript are the following two groups.

\begin{definition} [\textbf{Permutation and automorphism groups}] \label{def Aut and PAut}
We define the automorphism group of $ \M_\A(r,k) $ as the group $ \Aut(\M_\A(r,k)) $ of linear isometries $ T : (\F_q^r)^n \longrightarrow (\F_q^r)^n $ for the NRT metric such that 
$$ T(\M_\A(r,k)) = \M_\A(r,k). $$
Given a permutation $ \pi : \A \longrightarrow \A $, we define $ P_\pi : (\F_q^r)^n \longrightarrow (\F_q^r)^n $ by $ P_\pi((z_a)_{a \in \A}) = (z_{\pi^{-1}(a)})_{a \in \A} $, for all $ (z_a)_{a \in \A} \in (\F_q^r)^n $. Finally, we define the permutation group of $ \M_\A(r,k) $ as the group $ \PAut(\M_\A(r,k)) $ of maps $ P_\pi : (\F_q^r)^n \longrightarrow (\F_q^r)^n $, for permutations $ \pi : \A \longrightarrow \A $, such that
$$ P_\pi(\M_\A(r,k)) = \M_\A(r,k). $$
\end{definition}

\begin{remark}
We associate $ \pi^{-1} $ to $ P_\pi $ so that the map that sends a permutation $ \pi $ of $ \A $ to the linear isometry $ P_\pi $ is a group morphism instead of an anti-morphism (i.e., with our definition, $ P_\pi \circ P_\rho = P_{\pi \circ \rho} $, for all permutations $ \pi, \rho : \A \longrightarrow \A $).
\end{remark}

\begin{remark}
In the literature, multiplicity codes are considered sometimes with the NRT metric \cite{CanHorowitz2026, GuZhang2026, LiEtAl2026, RT1997} and sometimes with the folded Hamming metric \cite{KSY2014, martinez2025linear} (i.e., the classical Hamming metric in $ (\F_q^r)^n $ with length $ n $ and alphabet $ \F_q^r $). With the latter, they are MDS (if $ r | k $) and dually quasi MDS (otherwise), see \cite[Cor. 31]{martinez2025linear}. The permutation and automorphism groups $ \Aut(\M_A(r,k)) $ and $ \PAut(\M_A(r,k)) $ are both subgroups of the automorphism group of $ \M_A(r,k) $ for the folded Hamming metric by \cite[Th. 2]{sr-hamming} (see also \cite[Prop. 5]{martinez2025linear}). However, the results on automorphism groups for Reed--Solomon codes for the Hamming metric \cite{Dur1987} generalize very naturally to multiplicity codes in the NRT metric, as we show throughout this paper. However, we also discuss their folded Hamming metric automorphisms in Section \ref{sec folded Hamming}.
\end{remark}

\subsection{The projective line} \label{subsec projective line}

We will need some basic facts regarding the projective line over $ \F_q $, which we revisit here. For more details, we refer the reader to \cite{BeutelspacherRosenbaum1998, Hirschfeld1998}.

\begin{definition}[\textbf{Projective line}]
Define in $ \F_q^2 \setminus \{ 0 \} $ the equivalence relation $ (u_1,u_2) \sim (v_1,v_2) $ if there exists $ \lambda \in \F_q^* $ such that $ (v_1,v_2) = (\lambda u_1, \lambda u_2) $, for all $ (u_1,u_2),(v_1,v_2) \in \F_q^2 \setminus \{ 0 \} $. Denote by $ [v_1 : v_2] $ the class of $ (v_1,v_2) \in \F_q^2 \setminus \{ 0 \} $ with respect to that relation. The projective line $ \mathbb{P}^1(\F_q) $ is simply defined as the quotient $ (\F_q^2 \setminus \{ 0 \}) / \sim $, i.e., the set of equivalence classes
$$ \mathbb{P}^1(\F_q) = \{ [v_1: v_2] : (v_1,v_2) \in \F_q^2 \setminus \{ 0 \} \}. $$
\end{definition}

Observe that if $ [v_1:v_2] \in \mathbb{P}^1(\F_q) $ is such that $ v_2 \neq 0 $, then the same happens for any other representative of its class, and dividing by $ v_2 $, we obtain 
$$ [v_1:v_2] = [v_1v_2^{-1} : 1]. $$
Similarly, if $ v \neq 0 $, then $ [v:0] = [1:0] $. Thus if we denote $ \infty = [1:0] $ and identify $ a \in \F_q $ with $ [a:1] $, we may write $ \mathbb{P}^1(\F_q) $ as a disjoint union
$$ \mathbb{P}^1(\F_q) = \F_q \cup \{ \infty \}. $$

We will make use of projectivities, especially in fractional form.

\begin{definition}[\textbf{Projectivities}] \label{def projectivity}
Given
$$ M = \left( \begin{array}{cc}
\alpha & \gamma \\
\beta & \delta
\end{array} \right) \in \GL_2(\F_q), $$
we define the projectivity $ \psi_M : \mathbb{P}^1(\F_q) \longrightarrow \mathbb{P}^1(\F_q) $ by
$$ \psi_M([v_1 : v_2]) = [ (v_1,v_2) \cdot M ], $$
which does not depend on the representative $ (v_1,v_2) \in \F_q^2 \setminus \{ 0 \} $. In $ \F_q \cup \{ \infty \} $, it acts in fractional form as
$$ \psi_M(x) = \frac{\alpha x + \beta}{\gamma x + \delta}, $$
for all $ x \in \F_q $ such that $ \gamma x + \delta \neq 0 $. If $ \gamma \neq 0 $, then it also acts as
$$ \psi_M ( - \delta \gamma^{-1} ) = \infty \quad \textrm{and} \quad \psi_M(\infty) = \alpha \gamma^{-1}. $$
We say then that $ -\delta \gamma^{-1} \in \F_q $ is the pole of $ \psi_M $. Finally, we define the group of all projectivities in $ \mathbb{P}^1(\F_q) $ as
$$ \PGL(\F_q) = \{ \psi_M : M \in \GL_2(\F_q) \}. $$
\end{definition}

Observe that projectivities are bijective (i.e., permutations of $ \mathbb{P}^1(\F_q) $) and two matrices $ M,N \in \GL_2(\F_q) $ define the same projectivity ($ \psi_M=\psi_N $) if and only if there exists $ \lambda \in \F_q^* $ such that $ N = \lambda M $. Note also that $ \psi_M(\infty) = \infty $ (i.e., $ \psi_M(\F_q) = \F_q $) if and only if $ \gamma = 0 $, in which case we may take $ \delta = 1 $ and the projectivity becomes
$$ \psi_M (x) = \alpha x + \beta, $$ 
for $ x \in \F_q $, which is an affinity (or affine map) of the line $ \F_q $. 

Note also that, in all cases, since $ M \in \GL_2(\F_q) $ is invertible, then
$$ \Delta = \alpha \delta - \beta \gamma \neq 0 . $$

We will also need cross ratios.

\begin{definition} [\textbf{Cross ratios}]
Given distinct $ a,b,c,d \in \mathbb{P}^1(\F_q) $ with representatives $ u,v,w,t \in \F_q^2 \setminus \{ 0 \} $, respectively, we define the cross ratio
$$ [a,b;c,d] = \frac{\det(u^\intercal,w^\intercal) \det(v^\intercal,t^\intercal)}{\det(u^\intercal,t^\intercal) \det(v^\intercal,w^\intercal)} \in \F_q^* , $$
which does not depend on the chosen representatives. When $ a,b,c,d \in \F_q $, their cross ratio coincides with
$$ [a,b;c,d] = \frac{(a-c)(b-d)}{(a-d)(b-c)}. $$
\end{definition}

We will need the following classical facts, see \cite[Cor. 3.6.3]{BeutelspacherRosenbaum1998} and \cite[Sec. 6.1]{Hirschfeld1998}.

\begin{proposition} \label{prop projectivities 1}
Given three distinct points $ a,b,c \in \mathbb{P}^1(\F_q) $ and another three distinct points $ a^\prime, b^\prime, c^\prime \in \mathbb{P}^1(\F_q) $, there exists a unique projectivity $ \psi_M : \mathbb{P}^1(\F_q) \longrightarrow \mathbb{P}^1(\F_q) $ such that
$$ \psi_M(a) = a^\prime, \quad \psi_M(b) = b^\prime \quad \textrm{and} \quad \psi_M(c) = c^\prime. $$
\end{proposition}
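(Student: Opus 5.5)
Since $\GL_2(\F_q)$ acts on $\F_q^2 \setminus \{0\}$ by right multiplication, $\psi_M$ is well defined on $\mathbb{P}^1(\F_q)$. The natural proof reduces everything to a standard frame: we show that for any three distinct points $a,b,c \in \mathbb{P}^1(\F_q)$ there is a unique projectivity sending $\infty = [1:0]$, $0=[0:1]$ and $1=[1:1]$ to $a,b,c$, respectively. Granting this, if $\phi$ sends $(\infty,0,1)$ to $(a,b,c)$ and $\phi^\prime$ sends $(\infty,0,1)$ to $(a^\prime,b^\prime,c^\prime)$, then $\phi^\prime \circ \phi^{-1}$ sends $(a,b,c)$ to $(a^\prime,b^\prime,c^\prime)$. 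Here we use that projectivities form a group under composition: $\psi_N \circ \psi_M = \psi_{MN}$ and $\psi_M^{-1} = \psi_{M^{-1}}$. For uniqueness, if $\psi_1,\psi_2$ both send $(a,b,c)$ to $(a^\prime,b^\prime,c^\prime)$, then $\phi^{-1} \circ \psi_2^{-1} \circ \psi_1 \circ \phi$ fixes $\infty, 0, 1$. So it suffices to show that a projectivity fixing $\infty,0,1$ is the identity.

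\textbf{Existence.} Choose representatives $u,v,w \in \F_q^2 \setminus \{0\}$ of $a,b,c$. Since $a \neq b$, the vectors $u,v$ are linearly independent, so we can write $w = \lambda u + \mu v$. Both $\lambda$ and $\mu$ are nonzero: if $\mu = 0$ then $c = a$, and if $\lambda = 0$ then $c = b$. Let $M$ be the matrix with rows $\lambda u$ and $\mu v$. Then $M$ is invertible, and
$$(1,0)M = \lambda u, \qquad (0,1)M = \mu v, \qquad (1,1)M = w.$$
Hence $\psi_M(\infty) = a$, $\psi_M(0) = b$ and $\psi_M(1) = c$.

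\textbf{Uniqueness.} Suppose $\psi_M$ fixes $[1:0]$, $[0:1]$ and $[1:1]$. Fixing the first two points means $(1,0)M = \lambda(1,0)$ and $(0,1)M = \mu(0,1)$, so $M = \diag(\lambda,\mu)$. Fixing $[1:1]$ then means $(\lambda,\mu)$ is proportional to $(1,1)$, so $\lambda = \mu$ and $M = \lambda I$. As noted after Definition~\ref{def projectivity}, scalar matrices define the same projectivity as $I$, so $\psi_M$ is the identity.

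There is no real obstacle here. The only care needed is checking the group law $\psi_N \circ \psi_M = \psi_{MN}$ under the right-action convention $[v] \mapsto [vM]$, so that the inverses and compositions above are correct.
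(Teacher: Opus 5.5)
Your proof is correct and complete: the reduction to the standard frame $(\infty,0,1)$, the choice of $M$ with rows $\lambda u$ and $\mu v$, and the check that a projectivity fixing $[1:0]$, $[0:1]$, $[1:1]$ comes from a scalar matrix all work under the paper's right-action convention $\psi_N\circ\psi_M=\psi_{MN}$. The paper gives no proof of its own and simply cites this as a classical fact from the projective geometry literature, where the standard proof is essentially the same frame argument.
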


\begin{proposition} \label{prop projectivities 2}
Projectivities preserve cross ratios. That is, given distinct $ a,b,c,d \in \mathbb{P}^1(\F_q) $ and a projectivity $ \psi_M : \mathbb{P}^1(\F_q) \longrightarrow \mathbb{P}^1(\F_q) $, then
$$ [\psi_M(a),\psi_M(b);\psi_M(c),\psi_M(d)] = [a,b;c,d]. $$
\end{proposition}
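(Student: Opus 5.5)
Both facts are cited from the literature, so the plan is to give short direct proofs in the paper's matrix language.

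\emph{Proposition~\ref{prop projectivities 1}.} Reduce to a normal form. Write $e_1=(1,0)$, $e_2=(0,1)$, so $\infty=[e_1]$, $0=[e_2]$ and $1=[e_1+e_2]$. It suffices to show that for any three distinct points $a,b,c$ there is a unique projectivity sending $\infty,0,1$ to $a,b,c$; composing one such map with the inverse of another then gives existence and uniqueness in general.

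For existence, take representatives $u,v,w\in\F_q^2\setminus\{0\}$ of $a,b,c$. Since $a\neq b$, the vectors $u,v$ are linearly independent. Since $c$ differs from both $a$ and $b$, we can write $w=\lambda u+\mu v$ with $\lambda,\mu\neq 0$. Let $M$ be the invertible matrix with rows $\lambda u$ and $\mu v$. Then $e_1M=\lambda u$, $e_2M=\mu v$ and $(e_1+e_2)M=w$, so $\psi_M$ does the job.

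For uniqueness, suppose $\psi_N$ fixes $\infty$, $0$ and $1$. Fixing $\infty$ and $0$ forces $e_1N=\lambda_1e_1$ and $e_2N=\lambda_2e_2$, so $N$ is diagonal. Fixing $1$ then forces $\lambda_1=\lambda_2$, so $N$ is a scalar matrix and $\psi_N$ is the identity. This uses the earlier remark that $\psi_M=\psi_N$ if and only if $N=\lambda M$.

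\emph{Proposition~\ref{prop projectivities 2}.} Take representatives $u,v,w,t$ of $a,b,c,d$. Then $uM,vM,wM,tM$ represent the images. For row vectors $x,y$ we have $\det((xM)^\intercal,(yM)^\intercal)=\det(M^\intercal(x^\intercal,y^\intercal))=\det(M)\det(x^\intercal,y^\intercal)$. Hence each of the four determinants in the cross ratio is multiplied by $\det M$. Two of them appear in the numerator and two in the denominator, so the factors cancel. Independence of representatives follows the same way, because scaling a representative scales exactly one factor in the numerator and one in the denominator.

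For the affine formula, take representatives $(a,1)$, etc. Then $\det((a,1)^\intercal,(c,1)^\intercal)=a-c$, which gives $\frac{(a-c)(b-d)}{(a-d)(b-c)}$. The only mildly delicate step is the existence part of Proposition~\ref{prop projectivities 1}, specifically checking that $\lambda,\mu\neq0$ from the distinctness of $c$ from $a$ and $b$; the rest is routine determinant algebra.
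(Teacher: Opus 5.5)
Your proof is correct and complete. With representatives $uM,vM,wM,tM$ of the images, the identity $\det((xM)^\intercal,(yM)^\intercal)=\det(M)\det(x^\intercal,y^\intercal)$ multiplies each of the four determinants by the same nonzero factor $\det M$. Two of them sit in the numerator and two in the denominator, so the factors cancel. The left-hand side is well defined because $\psi_M$ is a bijection, so the images are distinct.

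The paper does not prove this statement. It cites it as a classical fact from \cite[Cor. 3.6.3]{BeutelspacherRosenbaum1998} and \cite[Sec. 6.1]{Hirschfeld1998}. The citation buys brevity. Your argument buys a self-contained proof in the paper's own row-vector convention. It also makes explicit what the paper invokes in the proof of Lemma~\ref{lemma pi -1 preserves cross ratios}, where it says that cross ratios do not depend on the chosen bases, ``which is essentially the same result''. Your determinant computation is exactly that change-of-basis invariance.

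Your treatment of Proposition~\ref{prop projectivities 1} is also correct, but it is not needed for the statement at hand.
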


\begin{proposition} \label{prop projectivities 3}
Fix three distinct $ a,b,c \in \F_q $. Then the map $ \phi : \F_q \setminus \{ a,b,c \} \longrightarrow \F_q $ given by $ \phi(d) = [a,b;c,d] $, for all $ d \in \F_q \setminus \{a,b,c \} $, is injective.
\end{proposition}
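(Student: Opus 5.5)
We need to prove Proposition \ref{prop projectivities 3}: for fixed distinct $a,b,c\in\F_q$, the map $d\mapsto [a,b;c,d]$ on $\F_q\setminus\{a,b,c\}$ is injective.

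The plan is to expand the cross ratio in the affine formula and show that it is a fractional linear (Möbius) function of $d$ with nonzero determinant; such a function is injective wherever it is defined. Concretely, for $d\in\F_q\setminus\{a,b,c\}$ we have
$$ \phi(d) = \frac{(a-c)(b-d)}{(a-d)(b-c)} = \lambda\,\frac{d-b}{d-a}, \qquad \lambda = \frac{a-c}{b-c}\in\F_q^*, $$
which is well defined because $a,b,c$ are distinct and $d\neq a$. So $\phi(d)=\psi_M(d)$ for the projectivity with $\alpha=\lambda$, $\beta=-\lambda b$, $\gamma=1$, $\delta=-a$, in the notation of Definition \ref{def projectivity}. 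Its determinant is
$$ \Delta=\alpha\delta-\beta\gamma=-\lambda a+\lambda b=\lambda(b-a)\neq 0, $$
so $M\in\GL_2(\F_q)$.

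Since projectivities are bijections of $\mathbb{P}^1(\F_q)$, the restriction of $\psi_M$ to any subset is injective. In particular, the restriction to $\F_q\setminus\{a,b,c\}$, which avoids the pole $a$, is injective, and it takes values in $\F_q$ there.

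As a direct alternative, one can check injectivity by hand. If $\lambda(d-b)/(d-a)=\lambda(d'-b)/(d'-a)$, then cross-multiplying gives $(d-b)(d'-a)=(d'-b)(d-a)$. This simplifies to $(b-a)(d-d')=0$, hence $d=d'$.

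No step here is a real obstacle. The only care needed is to confirm that the denominators never vanish on the domain and that $\lambda\neq 0$ and $a\neq b$, both of which follow from distinctness of $a,b,c$ and $d\notin\{a,b,c\}$.
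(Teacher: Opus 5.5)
Your proof is correct. The rewriting $\phi(d)=\lambda\frac{d-b}{d-a}$ with $\lambda=\frac{a-c}{b-c}\neq 0$, the determinant $\Delta=\lambda(b-a)\neq 0$, and the identification of the pole as $a$ (which lies outside the domain) all check out. Your cross-multiplication, giving $(b-a)(d-d')=0$, is also right.

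The paper gives no proof of its own for this statement. It quotes it as a classical fact from Beutelspacher--Rosenbaum and Hirschfeld. The standard argument there is essentially your first one: $d\mapsto[a,b;c,d]$ is the unique projectivity sending $a,b,c$ to $\infty,0,1$ (your formula indeed gives $\phi(c)=1$), so it is injective on all of $\mathbb{P}^1(\F_q)$. Your direct cross-multiplication check is the most elementary version, since it does not even rely on bijectivity of projectivities.
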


\section{Projectivities yield NRT automorphisms} \label{sec projectivities autom}

In this section, we start by showing that the projectivities that keep $ \A $ invariant are all automorphisms of $ \M_\A(r,k) $. 
%
%
Throughout this section, fix an arbitrary
$$ M = \left( \begin{array}{cc}
\alpha & \gamma \\
\beta & \delta
\end{array} \right) \in \GL_2(\F_q). $$
On the polynomial side, projectivities will act as follows.

\begin{definition} \label{def Psi M k}
We define $ \Psi_{M,k} : \V_k \longrightarrow \V_k $ by
$$ \Psi_{M,k}(f) = (\gamma x + \delta)^{k-1} f \left( \frac{\alpha x + \beta}{\gamma x + \delta} \right). $$
Sometimes we will write $ \Psi_{M,k} f $ instead of $ \Psi_{M,k}(f) $.
\end{definition}

\begin{proposition} \label{prop psi Mk is automorph of Vk}
The map $ \Psi_{M,k} $ from Definition \ref{def Psi M k} is well defined and a vector space automorphism of $ \V_k $.
\end{proposition}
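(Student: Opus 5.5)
We first check that $\Psi_{M,k}$ maps $\V_k$ into $\V_k$. By linearity of $f \mapsto \Psi_{M,k}(f)$, it suffices to treat the monomials $x^i$ with $0 \leq i \leq k-1$ (if $k=0$ there is nothing to prove, and $k\geq 1$ here anyway). For such $i$,
$$ \Psi_{M,k}(x^i) = (\gamma x+\delta)^{k-1} \frac{(\alpha x+\beta)^i}{(\gamma x+\delta)^i} = (\alpha x+\beta)^i(\gamma x+\delta)^{k-1-i}, $$
which is a polynomial of degree at most $k-1$ because $k-1-i \geq 0$. Hence $\Psi_{M,k}(f) \in \V_k$ for every $f \in \V_k$. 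Linearity over $\F_q$ is clear from the definition, since the map is a composition followed by multiplication by a fixed rational function.

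It remains to prove bijectivity. The plan is to show that $\Psi$ is compatible with matrix multiplication and that the identity matrix acts trivially. Consider $N = \left( \begin{smallmatrix} \alpha' & \gamma' \\ \beta' & \delta' \end{smallmatrix}\right) \in \GL_2(\F_q)$. Denote by $\ell_M(x) = \gamma x+\delta$ the denominator and by $\psi_M(x)$ the fractional map. Then
$$ \Psi_{N,k}(\Psi_{M,k} f) = \ell_N(x)^{k-1}\, \ell_M(\psi_N(x))^{k-1} f(\psi_M(\psi_N(x))). $$
A direct computation gives two identities: $\ell_N(x)\,\ell_M(\psi_N(x))$ equals the denominator linear form of the product $NM$, and $\psi_M\circ\psi_N = \psi_{NM}$ as rational functions. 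This uses the row-vector convention $[v_1:v_2]\mapsto[(v_1,v_2)N]$. Consequently
$$ \Psi_{N,k}\circ \Psi_{M,k} = \Psi_{NM,k}. $$
Since $\Psi_{I,k}$ is the identity, taking $N = M^{-1}$ shows that $\Psi_{M^{-1},k}$ is a left inverse of $\Psi_{M,k}$. As $\V_k$ is finite dimensional, $\Psi_{M,k}$ is then an automorphism.

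The main thing to get right is the order of the composition, that is, whether it yields $NM$ or $MN$. This order is irrelevant to the conclusion, since either form gives an inverse.

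Alternatively, injectivity can be checked directly, avoiding that bookkeeping. Suppose $\Psi_{M,k} f = 0$. Then $f(\psi_M(x)) = 0$ as a rational function. Because $\psi_M$ is nonconstant (as $\Delta \neq 0$), its image is infinite over the algebraic closure, so $f = 0$. Injectivity together with equal dimensions then gives bijectivity.
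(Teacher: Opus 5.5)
Your proof is correct and follows the same route as the paper. You expand on monomials to get $\Psi_{M,k}(f)=\sum_i f_i(\alpha x+\beta)^i(\gamma x+\delta)^{k-1-i}\in\V_k$, use the composition rule $\Psi_{N,k}\circ\Psi_{M,k}=\Psi_{NM,k}$ (which the paper writes as $\Psi_{M,k}\circ\Psi_{N,k}=\Psi_{MN,k}$, the same identity with the names swapped), and combine it with $\Psi_{I,k}=\mathrm{id}$ to obtain the inverse $\Psi_{M^{-1},k}$; your alternative injectivity argument via $f(\psi_M(x))=0$ in $\F_q(x)$ is also valid.
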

\begin{proof}
If $ f = f_0 + f_1 x + \cdots + f_{k-1} x^{k-1} \in \V_k $, then
$$ \Psi_{M,k}(f) = \sum_{i=0}^{k-1} f_i (\alpha x + \beta)^i (\gamma x + \delta)^{k-1-i} \in \V_k. $$
Thus clearly $ \Psi_{M,k} $ is well-defined, linear and its image is contained in $ \V_k $. Now, for another 
$$ N = \left( \begin{array}{cc}
a & c \\
b & d
\end{array} \right) \in \GL_2(\F_q) , $$
we have that
\begin{equation*}
\begin{split}
\Psi_{M,k}(\Psi_{N,k}(f)) & = (\gamma x + \delta)^{k-1} (c \psi_M(x) + d )^{k-1} f(\psi_N(\psi_M(x))) \\
 & = ((c\alpha + d \gamma) x + (c \beta + d \delta))^{k-1} f(\psi_{MN}(x)) = \Psi_{MN,k}(f).
\end{split}
\end{equation*}
In other words, $ \Psi_{M,k} \circ \Psi_{N,k} = \Psi_{MN,k} $, which implies that $ \Psi_{M,k} $ is invertible with inverse $ \Psi_{M^{-1},k} $ (since the identity matrix yields the identity automorphism of $ \V_k $), and we are done.
\end{proof}

We may now show that projectivities give NRT automorphisms of $ \M_\A(r,k) $. The key point is that the associated matrices in each block are upper triangular. 

\begin{proposition} \label{prop projectivities are automorph}
Let $ a \in \F_q $ be a point distinct from the pole of $ \psi_M $. Set
$$ b = \psi_M(a) \in \F_q \quad \textrm{and} \quad c_a = \gamma a + \delta \in \F_q^*. $$
For $ m \in [k] $, define the upper triangular matrix $ R^{(m)}_{a,M,k} \in \F_q^{m \times m} $ by
$$ (R^{(m)}_{a,M,k})_{i,j} = \begin{cases}
 \displaystyle
 \binom{k-1-i}{j-i}
 \Delta^i\gamma^{j-i}c_a^{\,k-1-i-j},& \textrm{ if } i\leq j,\\
 0,& \textrm{ if } i>j,
 \end{cases} $$
for $ i,j \in [0,m-1] $. Then $ R^{(m)}_{a,M,k} $ is invertible and 
\begin{equation}
J^{(m)}_a(\Psi_{M,k}(f)) = J^{(m)}_b(f) \cdot R^{(m)}_{a,M,k},
\label{eq jets Psi and R}
\end{equation}
for all $ f \in \V_k $.
\end{proposition}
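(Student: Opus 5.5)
The plan is to compute the Hasse expansion of $\Psi_{M,k}(f)$ at $a$ directly from Definition \ref{def hasse derivative}, i.e.\ to expand $\Psi_{M,k}(f)(a+y)$ in powers of $y$ and read off the coefficients of $y^j$ for $j \in [0,m-1]$. Since $c_a = \gamma a+\delta \neq 0$, the element $c_a + \gamma y$ is invertible in the power series ring $\F_q[[y]]$. So all manipulations with $\psi_M(a+y)$ can be carried out in $\F_q[[y]]$, and the identity $f(b+z) = \sum_i f^{(i)}(b) z^i$ (a polynomial identity in $z$) can be specialized at any $z \in y\F_q[[y]]$.

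\textbf{Step 1.} By definition,
\[
\Psi_{M,k}(f)(a+y) = (c_a+\gamma y)^{k-1} f\bigl(\psi_M(a+y)\bigr), \qquad \psi_M(a+y) = \frac{\alpha a+\beta+\alpha y}{c_a+\gamma y}.
\]
Subtracting $b = (\alpha a+\beta)/c_a$ and simplifying the numerator with $\alpha(\gamma a+\delta)-\gamma(\alpha a+\beta) = \Delta$ gives the key identity
\[
\psi_M(a+y) - b = \frac{\Delta\, y}{c_a (c_a+\gamma y)}.
\]

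\textbf{Step 2.} Set $z = \psi_M(a+y)-b$ in $f(b+z) = \sum_{i=0}^{k-1} f^{(i)}(b) z^i$; the sum stops at $k-1$ because $\deg f < k$. Multiplying by $(c_a+\gamma y)^{k-1}$, each term becomes a polynomial in $y$, because $i \le k-1$:
\[
\Psi_{M,k}(f)(a+y) = \sum_{i=0}^{k-1} f^{(i)}(b)\, \Delta^i c_a^{-i} y^i (c_a+\gamma y)^{k-1-i}.
\]

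\textbf{Step 3.} Expand $(c_a+\gamma y)^{k-1-i}$ by the binomial theorem and collect the coefficient of $y^j$. The term $y^{i+l}$ carries $\binom{k-1-i}{l}\gamma^l c_a^{k-1-i-l}$, so with $j = i+l$ the coefficient of $y^j$ is
\[
\Psi_{M,k}(f)^{(j)}(a) = \sum_{i\le j} f^{(i)}(b) \binom{k-1-i}{j-i}\Delta^i\gamma^{j-i}c_a^{k-1-i-j}.
\]
This is exactly $(J^{(m)}_b(f)\, R^{(m)}_{a,M,k})_j$ for $j<m$, since only indices $i\le j<m$ contribute. That proves \eqref{eq jets Psi and R}. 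Binomials with $j-i > k-1-i$ vanish, consistent with the matrix entries.

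\textbf{Step 4.} Invertibility follows because $R^{(m)}_{a,M,k}$ is upper triangular with diagonal entries $\Delta^i c_a^{k-1-2i}$, which are nonzero since $\Delta, c_a \in \F_q^*$.

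The main obstacle is the simplification in Step 1 together with justifying the substitution into the power series ring. Everything else is bookkeeping of exponents, in particular checking that the power of $c_a$ comes out as $k-1-i-j$.
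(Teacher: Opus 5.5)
Your proposal is correct and follows essentially the same route as the paper's proof. Both arguments use the identity $\psi_M(a+y) = b + \Delta y/(c_a(c_a+\gamma y))$, substitute it into the Hasse expansion of $f$ at $b$, expand $(c_a+\gamma y)^{k-1-i}$ binomially, read off the coefficient of $y^j$, and get invertibility from the nonzero diagonal entries $\Delta^i c_a^{k-1-2i}$; your remark about working in $\F_q[[y]]$ (the rational function field $\F_q(y)$ would do equally well) simply makes explicit a justification the paper leaves to the reader.
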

\begin{proof}
First, $ R^{(m)}_{a,M,k} $ is invertible since it is clearly upper triangular and the diagonal elements are, for all $ i \in [0,m-1] $,
$$ (R^{(m)}_{a,M,k})_{i,i} = \Delta^i c_a^{k-1-2i} \neq 0, $$
where $ \Delta = \det(M) \neq 0 $, because $ M \in \GL_2(\F_q) $, and $ c_a = \gamma a + \delta \neq 0 $ since $ a $ is not a pole of $ \psi_M $.

Now let $ y $ be an algebraic variable and take $ x = a+y $. Multiplying by denominators, the reader can verify that
$$ \psi_M(a+y) = b + \frac{\Delta y}{c_a(c_a+ \gamma y)}. $$
Therefore, by Definition \ref{def hasse derivative}, we have that
\begin{equation*}
\begin{split}
\Psi_{M,k}f (a+y) & = (c_a + \gamma y)^{k-1} f \left( b + \frac{\Delta y}{c_a(c_a+\gamma y)} \right) \\
 & = (c_a + \gamma y)^{k-1} \sum_{i \in \N} f^{(i)}(b) \left( \frac{\Delta y}{c_a(c_a+\gamma y)} \right)^i \\
 & = \sum_{i \in \N} f^{(i)}(b) \Delta^i c_a^{-i}y^i (c_a + \gamma y)^{k-1-i} \\
 & = \sum_{i \in \N} f^{(i)}(b) \Delta^i c_a^{-i} y^i \sum_{\ell=0}^{k-1-i} \binom{k-1-i}{\ell} \gamma^\ell c_a^{k-1-i-\ell} y^\ell \\
 & = \sum_{i \in \N} f^{(i)}(b) \sum_{j=i}^{k-1} \binom{k-1-i}{j-i} \Delta^i \gamma^{j-i} c_a^{k-1-i-j} y^j,
\end{split}
\end{equation*}
where in the last equality we define $ j = i + \ell $. Thus, the coefficient of $ y^j $ in the $ i $th term of the sum is zero if $ i > j $, and adding the coefficients of $ y^j $ for $ i \in [0,j] $, we obtain
$$ (\Psi_{M,k}f)^{(j)}(a) = \sum_{i=0}^j f^{(i)}(b) \binom{k-1-i}{j-i} \Delta^i \gamma^{j-i} c_a^{k-1-i-j}, $$
for $ j \in [0,k-1] $, which proves (\ref{eq jets Psi and R}), and we are done.
\end{proof}

\begin{remark} \label{remark affine diagonal matrix}
In the affine case, i.e., if we choose $ \gamma = 0 $ and the representative
$$ M = \left( \begin{array}{cc}
\alpha & 0 \\
\beta & 1 
\end{array} \right) \in \GL_2(\F_q), $$
then $ \psi_M(x) = \alpha x + \beta $, for $ x \in \F_q $, and
$$ R^{(m)}_{a,M,k} = \diag(1,\alpha,\alpha^2, \ldots, \alpha^{m-1}) \in \GL_m(\F_q). $$
Notice that it is independent of the value of $ a $ and $ k $ in this affine case, but not in general.
\end{remark}

\begin{example}
Let $ \A = \F_q^* $ and
$$ M = \left( \begin{array}{cc}
0 & 1 \\
1 & 0 
\end{array} \right) \in \GL_2(\F_q) , $$
i.e., the ``inversion'' projectivity $ \psi_M(x) = x^{-1} $, for $ x \in \A $. Then $ \psi_M (\A) = \A $ and
$$ \psi_{M,k}(f) = x^{k-1} f(x^{-1}), $$
for all $ f \in \V_k $. In particular, for $ a \in \F_q^* $, $ k \geq 3 $ and $ m = 2,3 $, we obtain
\[
R^{(2)}_{a,M,k}
=
\begin{pmatrix}
a^{k-1} & (k-1)a^{k-2}\\
0 & -a^{k-3}
\end{pmatrix} \quad \textrm{and} \quad R^{(3)}_
{a,M,k}
=
\begin{pmatrix}
a^{k-1}
&
(k-1)a^{k-2}
&
\binom{k-1}{2}a^{k-3}
\\[1mm]
0
&
-a^{k-3}
&
-(k-2)a^{k-4}
\\[1mm]
0
&
0
&
a^{k-5}
\end{pmatrix}.
\]
\end{example}

As an immediate consequence of Proposition \ref{prop projectivities are automorph}, we obtain the following family of automorphisms of multiplicity codes.

\begin{corollary} \label{cor T_M given by projectivity}
Assume that $ \psi_M(\A) = \A $ (in particular, $ a $ is not the pole of $ \psi_M $, for all $ a \in \A $). If $ k \geq r $, then $ T_M : (\F_q^r)^n \longrightarrow (\F_q^r)^n $ given by
$$ T_M((z_a)_{a \in \A}) = (z_{\psi_M(a)} R^{(r)}_{a,M,k})_{a \in \A}, $$
for all $ (z_a)_{a \in \A} \in (\F_q^r)^n $, belongs to $ \Aut(\M_\A(r,k)) $. If $ k < r $, the same holds for $ T_M : (\F_q^r)^n \longrightarrow (\F_q^r)^n $ given by
$$ T_M((z_a)_{a \in \A}) = (z_{\psi_M(a)} \widetilde{R}_{a,M,k})_{a \in \A}, $$
where
$$ \widetilde{R}_{a,M,k} = \left( \begin{array}{cc}
R^{(k)}_{a,M,k} & 0 \\
0 & I_{r-k}
\end{array} \right) \in \GL_r(\F_q) . $$
\end{corollary}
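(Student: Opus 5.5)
The plan is to check the two requirements of Definition \ref{def Aut and PAut} separately: that $T_M$ is a linear NRT isometry, and that it maps $\M_\A(r,k)$ onto itself.

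\emph{Isometry.} Since $\psi_M(\A)=\A$, no $a\in\A$ is the pole of $\psi_M$. Hence the restriction $\pi=\psi_M|_\A$ is a permutation of $\A$. For each $a\in\A$ the block matrix $A_a$ is either $R^{(r)}_{a,M,k}$ (when $k\ge r$) or $\widetilde R_{a,M,k}$ (when $k<r$). In both cases $A_a$ is upper triangular and invertible, by Proposition \ref{prop projectivities are automorph}; in the second case $\widetilde R_{a,M,k}$ is block diagonal with an identity block. Clearly $T_M$ is linear, so Proposition \ref{prop isometries NRT charact} shows that $T_M$ is an NRT isometry.

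\emph{Code invariance.} Take a codeword $c=(J^{(r)}_b(f))_{b\in\A}$ with $f\in\V_k$. I will show that $T_M(c)={\rm ev}_\A(\Psi_{M,k}f)$.

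If $k\ge r$, then $r\in[k]$, so we may apply (\ref{eq jets Psi and R}) with $m=r$ at each $a\in\A$, taking $b=\psi_M(a)$. This gives $J^{(r)}_{\psi_M(a)}(f)R^{(r)}_{a,M,k}=J^{(r)}_a(\Psi_{M,k}f)$, which is exactly the $a$-th block of $T_M(c)$.

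If $k<r$, then $f$ and $\Psi_{M,k}f$ both have degree $<k$, so their Hasse derivatives of order $\ge k$ vanish identically. Therefore $J^{(r)}_b(f)=(J^{(k)}_b(f),0)$ and $J^{(r)}_a(\Psi_{M,k}f)=(J^{(k)}_a(\Psi_{M,k}f),0)$. The block structure of $\widetilde R_{a,M,k}$ sends $(J^{(k)}_{\psi_M(a)}(f),0)$ to $(J^{(k)}_{\psi_M(a)}(f)R^{(k)}_{a,M,k},0)$. Applying (\ref{eq jets Psi and R}) with $m=k$, this equals $(J^{(k)}_a(\Psi_{M,k}f),0)$, which is the required block.

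Hence $T_M(\M_\A(r,k))\subseteq\M_\A(r,k)$. By Proposition \ref{prop psi Mk is automorph of Vk}, $\Psi_{M,k}$ is bijective on $\V_k$, and ${\rm ev}_\A$ is injective by Proposition \ref{prop ev is injective}. The image is therefore the whole code. Alternatively, $T_M$ is injective and the code is finite, which also gives equality.

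There is no real obstacle here. The only point needing care is the $k<r$ case: one must note that derivatives of order $\ge k$ vanish, so that the padding with $I_{r-k}$ is consistent with the jets.
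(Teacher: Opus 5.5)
Your proof is correct and follows the paper's argument. You combine the jet identity (\ref{eq jets Psi and R}) (with $m=r$, or with $m=k$ plus the vanishing of the last $r-k$ jet components when $k<r$) with the bijectivity of $\Psi_{M,k}$ from Proposition \ref{prop psi Mk is automorph of Vk}, and you also spell out the isometry check via Proposition \ref{prop isometries NRT charact}, which the paper leaves implicit.
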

\begin{proof}
Combine Equation (\ref{eq jets Psi and R}) with Proposition \ref{prop psi Mk is automorph of Vk}. In the case $ k < r $, use Equation (\ref{eq jets Psi and R}) with $ m=k $ and notice that the last $ r-k $ components of $ J^{(r)}_a(f) $ are zero for all $ f \in \V_k $ and $ a \in \A $.
\end{proof}

In the remainder of the paper, we show that for most dimensions these are all the elements in $ \Aut(\M_\A(r,k)) $, except for multiplication by nonzero scalars.

\section{The permutation is a projectivity} \label{sec permutation is projectivity}

In this section, fix $ T \in \Aut(\M_\A(r,k)) $. By Proposition \ref{prop isometries NRT charact}, there exist a permutation $ \pi : \A \longrightarrow \A $ and upper triangular matrices $ A_a \in \GL_r(\F_q) $, for $ a \in \A $, such that
$$ T((z_a)_{a \in \A}) = (z_{\pi(a)} A_a)_{a \in \A}, $$
for all $ (z_a)_{a \in \A} \in (\F_q^r)^n $. We will need the following linear operator, which will end up being $ \Psi_{M,k} $ from Proposition \ref{prop psi Mk is automorph of Vk}, up to a nonzero scalar.

\begin{proposition} \label{prop L linear op}
There exists a unique vector space automorphism $ L : \V_k \longrightarrow \V_k $ such that
\begin{equation}
J^{(r)}_a(Lf) = J^{(r)}_{\pi(a)}(f) A_a,
\label{eq def L}
\end{equation}
for all $ a \in \A $ and all $ f \in \V_k $. We will indistinctively write $ L(f) $ or $ Lf $, for $ f \in \V_k $.
\end{proposition}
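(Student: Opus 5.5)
The plan is to define $L$ by pulling back $T$ through the evaluation isomorphism of Proposition \ref{prop ev is injective}. Concretely, set
$$ L = {\rm ev}_\A^{-1} \circ T|_{\M_\A(r,k)} \circ {\rm ev}_\A : \V_k \longrightarrow \V_k . $$
This is well defined because $T(\M_\A(r,k)) = \M_\A(r,k)$ by Definition \ref{def Aut and PAut}. Each of the three factors is a linear bijection: ${\rm ev}_\A$ and its inverse by Proposition \ref{prop ev is injective}, and $T|_{\M_\A(r,k)}$ because $T$ is injective (it is an isometry, or directly since $\pi$ is a permutation and every $A_a$ is invertible) and maps $\M_\A(r,k)$ onto itself. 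Hence $L$ is a vector space automorphism of $\V_k$.

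Next I will verify Equation (\ref{eq def L}). For $f \in \V_k$, by definition of $L$ we have ${\rm ev}_\A(Lf) = T({\rm ev}_\A(f))$. The $a$-component of the left-hand side is $J^{(r)}_a(Lf)$. Using the description of $T$ from Proposition \ref{prop isometries NRT charact}, with $z_b = J^{(r)}_b(f)$, the $a$-component of the right-hand side is $z_{\pi(a)}A_a = J^{(r)}_{\pi(a)}(f) A_a$. Comparing components for each $a \in \A$ gives (\ref{eq def L}).

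For uniqueness, suppose $L'$ is another linear map $\V_k \to \V_k$ satisfying (\ref{eq def L}). Then for every $f$ the jets $J^{(r)}_a(Lf)$ and $J^{(r)}_a(L'f)$ agree for all $a \in \A$. This means ${\rm ev}_\A(Lf) = {\rm ev}_\A(L'f)$, so $Lf = L'f$ by injectivity of ${\rm ev}_\A$ (Proposition \ref{prop ev is injective}). In fact uniqueness holds even among arbitrary maps, since linearity is never used.

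There is no real obstacle here. The only point needing care is that ${\rm ev}_\A^{-1}$ can be applied to $T({\rm ev}_\A(f))$, which uses that $T$ stabilizes the code. Injectivity of ${\rm ev}_\A$, which relies on $k \leq rn$, is already provided by Proposition \ref{prop ev is injective}.
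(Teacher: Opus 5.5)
Your proposal is correct and follows the same approach as the paper: existence from $T(\M_\A(r,k)) = \M_\A(r,k)$, via $L = {\rm ev}_\A^{-1} \circ T \circ {\rm ev}_\A$, and uniqueness from the injectivity of ${\rm ev}_\A$ in Proposition \ref{prop ev is injective}. You simply spell out the componentwise check of (\ref{eq def L}) and the bijectivity of $L$, which the paper leaves implicit.
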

\begin{proof}
Existence follows because $ T \in \Aut(\M_\A(r,k)) $ and uniqueness by Proposition \ref{prop ev is injective}.
\end{proof}

We will also need the following multiplicity-root subspaces of $ \V_k $.

\begin{definition} \label{def K_a}
For $ a \in \A $ and $ t \in [0, \min\{r,k \}] $, we define the subspace
$$ \K_{a,t} = (x-a)^t \V_{k-t} \subseteq \V_k, $$
or in other words, the subspace of polynomials of degree less than $ k $ that have $ a $ as a root of multiplicity at least $ t $ (see Proposition \ref{prop multiplicity hasse derivatives}).
\end{definition}

Next we show that $ L $ sends such subspaces in the ``opposite direction'' than the permutation $ \pi $ associated to $ T $. Notice that here we use that the matrices $ A_a $ are upper triangular.

\begin{lemma} \label{lemma L(K) is K pi -1}
For every $ a \in \A $ and $ t \in [0, \min\{ r,k\}] $, 
$$ L(\K_{\pi(a),t}) = \K_{a,t}. $$
\end{lemma}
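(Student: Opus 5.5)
We will prove the inclusion $L(\K_{\pi(a),t}) \subseteq \K_{a,t}$ and then conclude equality by a dimension count, using that $L$ is a vector space automorphism of $\V_k$ (Proposition \ref{prop L linear op}).

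\textbf{Inclusion.} Take $f \in \K_{\pi(a),t}$. Since $t \leq k$, we have $f \in \V_k$, and since $t \leq r$, Proposition \ref{prop multiplicity hasse derivatives} says that the first $t$ entries of the jet $J^{(r)}_{\pi(a)}(f)$ vanish:
$$ f^{(0)}(\pi(a)) = \cdots = f^{(t-1)}(\pi(a)) = 0 .$$
Equation (\ref{eq def L}) then gives $J^{(r)}_a(Lf) = J^{(r)}_{\pi(a)}(f) A_a$. The key point is that $A_a$ is upper triangular, so the $j$th entry of $z A_a$ is $\sum_{i \leq j} z_i (A_a)_{i,j}$. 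Hence, if $z_0 = \cdots = z_{t-1} = 0$, then the entries $0,\ldots,t-1$ of $z A_a$ also vanish. Therefore $(Lf)^{(j)}(a) = 0$ for all $j \in [0,t-1]$, and by Proposition \ref{prop multiplicity hasse derivatives} we get $(x-a)^t \mid Lf$. Since $Lf \in \V_k$, this means $Lf \in \K_{a,t}$.

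\textbf{Equality.} Both $\K_{\pi(a),t}$ and $\K_{a,t}$ equal $(x-b)^t \V_{k-t}$ for the appropriate point $b$, so both have dimension $k-t$. Since $L$ is injective, $L(\K_{\pi(a),t})$ is a subspace of $\K_{a,t}$ of the same dimension, and hence the two coincide.

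The only delicate step is the one using upper triangularity: we must check that right multiplication by an upper triangular matrix preserves vectors whose first $t$ coordinates are zero. This holds because each output coordinate $j$ depends only on input coordinates $i \leq j$. Note also that $t \leq r$ is needed so that the first $t$ jet entries are actually recorded in $J^{(r)}$. The boundary case $t=0$ is trivial, since then both sides equal $\V_k$.
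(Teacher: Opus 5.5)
Your proof is correct and follows the paper's argument: Equation (\ref{eq def L}), Proposition \ref{prop multiplicity hasse derivatives}, and the fact that right multiplication by the upper triangular $A_a$ preserves vanishing of the first $t$ coordinates. The only difference is in how equality is obtained. The paper uses this fact as an ``if and only if'' (relying on invertibility of $A_a$), which gives $L^{-1}(\K_{a,t})=\K_{\pi(a),t}$ directly. You prove one inclusion and close with $\dim \K_{\pi(a),t}=\dim \K_{a,t}=k-t$ and the injectivity of $L$, which is equally valid.
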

\begin{proof}
Let $ c = (c_0,c_1, \ldots, c_{r-1}) \in \F_q^r $. Since $ A_a $ is upper triangular and invertible, then the first $ t $ components of $ c $ are zero if and only if so are the first $ t $ components of $ c A_a $ (i.e., $ \wt(cA_a) = \wt(c) $ for NRT weights, see also Proposition \ref{prop isometries NRT charact}). Applying this fact to the defining Equation (\ref{eq def L}) and using Proposition \ref{prop multiplicity hasse derivatives}, the result follows.
\end{proof}

We now consider subspaces for multiplicities of roots among all elements of $ \A $.

\begin{definition}
For $ \ee = (e_a)_{a \in \A} \in [0,r]^\A $, define
$$ |\ee| = \sum_{a \in \A} e_a, \quad P_\ee(x) = \prod_{a \in \A} (x-a)^{e_a} \quad \textrm{and if } |\ee| < k, \textrm{ define} \quad \K_\ee = \bigcap_{a \in \A} \K_{a,e_a}. $$
For $ b \in \A $, we define $ \ones_b \in [0,r]^\A $ as the indicator function of $ b $, i.e., for all $ a \in \A $,
$$ (\ones_b)_a = \left\lbrace \begin{array}{ll}
0 & , \textrm{ if } a \neq b, \\
1 & , \textrm{ if } a = b.
\end{array} \right. $$
\end{definition}

The proof of the following result is straightforward using Lemma \ref{lemma L(K) is K pi -1}.

\begin{lemma} \label{lemma K_ee is P_ee V k - ee}
If $ \ee \in [0,r]^\A $ and $ |\ee| < k $, then
$$ \K_\ee = P_\ee \V_{k-|\ee|}. $$
Moreover, if $ \rho = \pi^{-1} : \A \longrightarrow \A $, and we define $ \rho(\ee) \in [0,r]^\A $ by
$$ \rho(\ee)_a = e_{\pi(a)}, $$
for all $ a \in \A $, then we have
$$ L(\K_\ee) = \K_{\rho(\ee)}. $$
In particular, if $ |\ee| = k-1 $, then $ L(P_\ee) = \lambda_\ee P_{\rho(\ee)} $, for some $ \lambda_\ee \in \F_q^* $. 
\end{lemma}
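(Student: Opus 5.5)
The plan is to prove the three assertions in order, each resting on an earlier one.

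\emph{First assertion.} We show $\K_\ee = P_\ee \V_{k-|\ee|}$ by double inclusion.

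If $f \in \V_k$ lies in every $\K_{a,e_a}$, then $(x-a)^{e_a}$ divides $f$ for each $a \in \A$, by Definition \ref{def K_a} together with Proposition \ref{prop multiplicity hasse derivatives}. The polynomials $(x-a)^{e_a}$ for distinct $a$ are pairwise coprime, so their product $P_\ee$ divides $f$. Since $\deg f < k$ and $\deg P_\ee = |\ee|$, the quotient lies in $\V_{k-|\ee|}$.

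Conversely, any $P_\ee g$ with $g \in \V_{k-|\ee|}$ has degree less than $k$. It is divisible by each $(x-a)^{e_a}$, so it lies in each $\K_{a,e_a}$. Here $|\ee|<k$ guarantees $e_a \le \min\{r,k\}$, so each $\K_{a,e_a}$ is defined.

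\emph{Second assertion.} We show $L(\K_\ee) = \K_{\rho(\ee)}$.

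Since $L$ is a bijection of $\V_k$ (Proposition \ref{prop L linear op}), it commutes with intersections:
$$L\Big(\bigcap_{b\in\A} \K_{b,e_b}\Big) = \bigcap_{b\in\A} L(\K_{b,e_b}).$$
Reindex with $b=\pi(a)$, which is legitimate because $\pi$ is a permutation of $\A$. Lemma \ref{lemma L(K) is K pi -1} then gives $L(\K_{\pi(a),e_{\pi(a)}}) = \K_{a,e_{\pi(a)}} = \K_{a,\rho(\ee)_a}$. Intersecting over $a \in \A$ yields $\K_{\rho(\ee)}$.

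We also note that $|\rho(\ee)| = |\ee| < k$, so $\K_{\rho(\ee)}$ is defined.

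\emph{Final claim.} Suppose $|\ee| = k-1$.

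Then $\K_\ee = P_\ee \V_1 = \langle P_\ee\rangle_{\F_q}$ is one-dimensional, and likewise $\K_{\rho(\ee)} = \langle P_{\rho(\ee)}\rangle_{\F_q}$. Hence $L(P_\ee)$ is a scalar multiple of $P_{\rho(\ee)}$. The scalar is nonzero because $L$ is injective and $P_\ee \neq 0$.

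No step is a real obstacle. The only points needing care are the coprimality argument in the first inclusion and the reindexing by $\pi$ versus $\rho$ in the second assertion.
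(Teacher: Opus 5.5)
Your proof is correct and is essentially the argument the paper intends; the paper only remarks that the lemma is straightforward from Lemma \ref{lemma L(K) is K pi -1}. Your three steps are exactly the ones needed: pairwise coprimality of the factors $(x-a)^{e_a}$ gives $\K_\ee = P_\ee \V_{k-|\ee|}$, injectivity of $L$ lets it commute with the intersection so that reindexing by $\pi$ and Lemma \ref{lemma L(K) is K pi -1} give $\K_{\rho(\ee)}$, and one-dimensionality of $\K_\ee$ for $|\ee| = k-1$ gives the final claim.
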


As we will show later, the vector lines $ \K_\ee $, for $ |\ee|=k-1 $, are crucial for showing that $ \pi $ is given by a projectivity. However, in order to relate these lines, we will need to use the vector planes $ \K_\ee $, for $ |\ee| = k-2 $. Subtract one dimension for the projective point of view. 

\begin{lemma} \label{lemma ee with prescribed values}
Assume $ n = |\A| \geq 4 $ and $ 2 \leq k \leq rn - r $. Assume also $ k \leq n-2 $ if $ r=1 $. Given four distinct points $ a,b,c,d \in \A $, there exists $ \ee \in [0,r]^\A $ such that
$$ |\ee| = k-2 \quad \textrm{and} \quad \max\{ e_a,e_b,e_c,e_d \} \leq r-1. $$
\end{lemma}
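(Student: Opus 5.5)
This is a counting (pigeonhole) statement, so the plan is to build $\ee$ by a greedy filling argument. Entries at the four distinguished points may take values in $[0,r-1]$. Entries at the remaining $n-4$ points of $\A$ may take values in $[0,r]$. Each entry can be chosen independently, so the attainable values of $|\ee|$ form exactly the integer interval $[0,N]$, where
$$ N = 4(r-1) + r(n-4) = rn - 4. $$
Concretely, given a target $s \in [0,N]$, order the points of $\A$ with $a,b,c,d$ last. Fill each point up to its cap, moving on only once it is full, until the running sum reaches $s$. Hence it suffices to check $0 \leq k-2 \leq rn-4$.

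The lower bound $k-2 \geq 0$ is the hypothesis $k \geq 2$. For the upper bound I will split into cases according to $r$.

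\textbf{Case $r \geq 4$.} We have $k \leq rn - r \leq rn - 4$, so $k-2 \leq rn-6 \leq N$.

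\textbf{Cases $r = 2$ and $r = 3$.} We have $k - 2 \leq rn - r - 2 \leq rn - 4$, since $r + 2 \geq 4$.

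\textbf{Case $r = 1$.} Here $N = n-4$, and the extra hypothesis $k \leq n-2$ gives $k-2 \leq n-4 = N$. In this case the four distinguished entries are forced to be $0$ and the others are $0$ or $1$, so $\ee$ is the indicator of a $(k-2)$-subset of $\A \setminus \{a,b,c,d\}$. Such a subset exists precisely because $k-2 \leq n-4$; this explains why the extra hypothesis is needed only when $r = 1$.

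There is no real obstacle here. The only point requiring care is that the bound $k \leq rn-r$ alone is insufficient when $r = 1$, which is why the additional assumption $k \leq n-2$ appears. The hypothesis $n \geq 4$ only ensures that four distinct points exist, so that $N = rn-4$ is computed correctly. Beyond this, the proof is just the case check above together with the greedy construction of $\ee$.
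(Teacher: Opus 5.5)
Your proposal is correct and follows the paper's proof: both compute the maximal admissible sum $4(r-1)+(n-4)r = rn-4$ and check $k-2 \leq rn-4$, using $k \leq rn-r$ for $r \geq 2$ and $k \leq n-2$ for $r=1$. Your split of $r \geq 2$ into $r \geq 4$ and $r \in \{2,3\}$ is unnecessary, since $rn-r-2 \leq rn-4$ holds for all $r \geq 2$, and your greedy filling simply makes explicit the paper's remark that the $k-2$ multiplicities ``may be distributed'' as desired.
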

\begin{proof}
If we allow $ \max\{ e_a,e_b,e_c,e_d \} \leq r-1 $ and $ e_z \leq r $, for all $ z \in \A \setminus \{ a,b,c,d \} $, then $ \ee $ may have total sum up to
$$ |\ee| \leq 4(r-1) + (n-4)r = nr-4. $$
If $ r \geq 2 $, since $ k \leq rn-r $, then
$$ k-2 \leq rn - r -2 \leq rn - 4. $$
In the case $ r=1 $, since $ k \leq n-2 $, then we also have $ k - 2 \leq n-4 = rn -4 $. Thus $ k-2 $ multiplicities may be distributed among the components of $ \ee $ as desired.
\end{proof}

We may now start to see why $ \pi $ is a projectivity: it preserves cross ratios.

\begin{lemma} \label{lemma pi -1 preserves cross ratios}
Assume $ n = |\A| \geq 4 $ and $ 2 \leq k \leq rn - r $. Assume also $ k \leq n-2 $ if $ r=1 $. If $ a,b,c,d \in \A $ are distinct, then
$$ [\pi(a),\pi(b); \pi(c),\pi(d)] = [a,b;c,d]. $$
\end{lemma}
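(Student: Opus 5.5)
The plan is to realize the restriction of $L$ to a suitable two-dimensional subspace $\K_\ee$ as a linear automorphism of $\F_q^2$. Such a map induces a projectivity of $\mathbb{P}^1(\F_q)$ that acts on the four given points exactly as $\pi^{-1}$ does. Proposition \ref{prop projectivities 2} then gives invariance of the cross ratio.

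\textbf{Step 1: the plane.} Fix distinct $a,b,c,d \in \A$. By Lemma \ref{lemma ee with prescribed values}, choose $\ee \in [0,r]^\A$ with $|\ee| = k-2$ and $e_a,e_b,e_c,e_d \leq r-1$. By Lemma \ref{lemma K_ee is P_ee V k - ee}, $\K_\ee = P_\ee \V_2$ is a plane. For each $z \in \{a,b,c,d\}$, the vector $\ee + \ones_z$ still lies in $[0,r]^\A$, because $e_z \leq r-1$, and it satisfies $|\ee+\ones_z| = k-1$. Hence the same lemma gives
$$ L\bigl(P_\ee\,(x-z)\bigr) = \lambda_z\, P_{\rho(\ee+\ones_z)} = \lambda_z\, P_{\rho(\ee)}\,\bigl(x-\pi^{-1}(z)\bigr), \qquad \lambda_z \in \F_q^*, $$
where $\rho(\ee + \ones_z) = \rho(\ee) + \ones_{\pi^{-1}(z)}$ follows directly from the definition $\rho(\ee)_w = e_{\pi(w)}$. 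Lemma \ref{lemma K_ee is P_ee V k - ee} also gives $L(\K_\ee) = \K_{\rho(\ee)} = P_{\rho(\ee)} \V_2$.

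\textbf{Step 2: the linear map on $\V_2$.} Define $\Phi : \V_2 \to \V_2$ by $\Phi(g) = L(P_\ee g)/P_{\rho(\ee)}$. This is a linear isomorphism, and $\Phi(x-z) = \lambda_z (x - \pi^{-1}(z))$ for $z \in \{a,b,c,d\}$. Identify $u_0 + u_1 x \in \V_2 \setminus\{0\}$ with the point $[-u_0 : u_1] \in \mathbb{P}^1(\F_q)$, so that $x - z$ corresponds to $z \in \F_q$. Writing $\Phi$ in coordinates by a matrix $N \in \GL_2(\F_q)$ and conjugating by the fixed linear map $(u_0,u_1)\mapsto(-u_0,u_1)$, $\Phi$ induces a projectivity $\psi$ with $\psi(z) = \pi^{-1}(z)$ for $z \in \{a,b,c,d\}$.

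\textbf{Step 3: conclusion.} Proposition \ref{prop projectivities 2} gives
$$ [\pi^{-1}(a),\pi^{-1}(b);\pi^{-1}(c),\pi^{-1}(d)] = [a,b;c,d] $$
for all distinct $a,b,c,d \in \A$. Applying this to the distinct points $\pi(a),\pi(b),\pi(c),\pi(d)$ yields the statement.

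The only delicate point is Step 2: checking that the identification of $\V_2$ with $\F_q^2$ makes the induced map on lines a genuine projectivity, with $x-z \mapsto z$ compatible with the convention $z \leftrightarrow [z:1]$. This is immediate because the sign change is itself a linear automorphism. Alternatively, one can verify directly with the determinant definition of cross ratios: the determinant of the coefficient vectors of $x-z$ and $x-w$ equals $\pm(z-w)$ with a uniform sign, and a linear map scales every such determinant by $\det N$. Those factors, together with the scalars $\lambda_z$, cancel in the cross-ratio quotient.
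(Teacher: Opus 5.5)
Your proposal is correct and follows essentially the same route as the paper: you pick $\ee$ from Lemma \ref{lemma ee with prescribed values}, restrict $L$ to the plane $\K_\ee = P_\ee \V_2$, use that each line $\K_{\ee+\ones_z}$ is sent to $\K_{\rho(\ee)+\ones_{\rho(z)}}$, and conclude by invariance of cross ratios under the induced projectivity. The only difference is bookkeeping: you identify $x-z$ with $[z:1]$ directly through the sign change, whereas the paper works with the points $[1:-z]$ and appeals to the basis-independence of cross ratios.
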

\begin{proof}
Choose $ \ee \in [0,r]^\A $ as in Lemma \ref{lemma ee with prescribed values}. Since $ |\ee| = k-2 $, by Lemma \ref{lemma K_ee is P_ee V k - ee}, we have that
$$ \K_\ee = P_\ee \V_2 = \langle x P_\ee, P_\ee \rangle_{\F_q}, $$
which has dimension $ 2 $. For $ z \in \{ a,b,c,d \} $, since $ e_z \leq r-1 $, we may increase the multiplicity at $ z $ by one, obtaining the subspace
$$ \K_{\ee+\ones_z} = \langle (x-z) P_\ee \rangle_{\F_q} \subseteq \K_\ee, $$
which has dimension $ 1 $. By checking each component at every $ y \in \A $, the reader may verify that
$$ \rho(\ee + \ones_z) = \rho(\ee) + \ones_{\rho(z)}, $$
for all $ z \in \{ a,b,c,d \} $, for the permutation $ \rho = \pi^{-1} : \A \longrightarrow \A $. Thus by Lemma \ref{lemma K_ee is P_ee V k - ee}, 
$$ L(\K_{\ee+\ones_z}) = \K_{\rho(\ee) + \ones_{\rho(z)}}. $$ 
In the ordered bases $ (xP_\ee, P_\ee) $ and $ (xP_{\rho(\ee)}, P_{\rho(\ee)}) $ of $ \K_\ee $ and $ \K_{\rho(\ee)} $, respectively, the lines $ \K_{\ee + \ones_z} $ and $ \K_{\rho(\ee) + \ones_{\rho(z)}} $ are generated by the vectors with coordinates $ (1,-z), (1,-\rho(z)) \in \F_q^2 $, respectively, for $ z \in \{ a,b,c,d \} $. Hence the projectivity associated to $ L : \K_\ee \longrightarrow \K_{\rho(\ee)} $, in such bases, sends the projective point $ [1:-z] $ to $ [1:-\rho(z)] $, for $ z \in \{ a,b,c,d \} $. Since projectivities preserve cross ratios (Proposition \ref{prop projectivities 2}) and cross ratios do not depend on the chosen bases (which is essentially the same result), we deduce that
\begin{equation*}
\begin{split}
[\rho(a),\rho(b);\rho(c),\rho(d)] & = [[1: -\rho(a)],[1: -\rho(b)];[1: -\rho(c)], [1: -\rho(d)] ] \\
 & = [[1: -a],[1: -b];[1: -c], [1: -d] ] \\
 & = [a,b;c,d].
\end{split}
\end{equation*}
Since $ \rho = \pi^{-1} $, we deduce that $ \pi $ also preserves cross ratios.
\end{proof}

Finally, we can show that $ \pi $ is given by a projectivity.

\begin{proposition} \label{prop pi is a projectivity}
Assume $ n = |\A| \geq 3 $ and $ 2 \leq k \leq rn - r $. Assume also $ k \leq n-2 $ if $ r=1 $. Then there exists a unique projectivity $ \psi_M \in \PGL(\F_q) $ such that
$$ \pi(a) = \psi_M (a), $$
for all $ a \in \A $. In particular, $ \psi_M(\A) = \A $.
\end{proposition}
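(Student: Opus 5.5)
The plan is to reduce everything to Proposition \ref{prop projectivities 1} together with the cross-ratio invariance of Lemma \ref{lemma pi -1 preserves cross ratios}. Since $n \geq 3$, fix three distinct points $a,b,c \in \A$. Their images $\pi(a),\pi(b),\pi(c) \in \A \subseteq \F_q$ are again distinct because $\pi$ is a permutation. Proposition \ref{prop projectivities 1} then gives a unique projectivity $\psi_M$ with $\psi_M(a)=\pi(a)$, $\psi_M(b)=\pi(b)$ and $\psi_M(c)=\pi(c)$.

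\textbf{Uniqueness.} This is immediate from the same proposition: any projectivity that agrees with $\pi$ on $\A$ agrees with $\psi_M$ on the three points $a,b,c$, hence equals $\psi_M$. In particular, if $n=3$ we are already done, because $\A=\{a,b,c\}$. The final claim $\psi_M(\A)=\pi(\A)=\A$ also follows at once once existence is shown.

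\textbf{Existence for $n \geq 4$.} It remains to check that $\psi_M(d)=\pi(d)$ for every $d \in \A \setminus \{a,b,c\}$. Write $a'=\pi(a)$, $b'=\pi(b)$, $c'=\pi(c)$. The hypotheses of Lemma \ref{lemma pi -1 preserves cross ratios} hold, so
$$ [a',b';c',\pi(d)] = [a,b;c,d]. $$
By Proposition \ref{prop projectivities 2},
$$ [a,b;c,d] = [\psi_M(a),\psi_M(b);\psi_M(c),\psi_M(d)] = [a',b';c',\psi_M(d)]. $$
Since $\psi_M$ is a bijection of $\mathbb{P}^1(\F_q)$, the point $\psi_M(d)$ is distinct from $a',b',c'$, so these cross ratios are defined.

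\textbf{Main obstacle.} The only subtle point is that $\psi_M(d)$ might a priori be $\infty$, whereas Proposition \ref{prop projectivities 3} only gives injectivity on $\F_q \setminus \{a',b',c'\}$. I would handle this by computing with the determinant definition and representative $(1,0)$ for $\infty$:
$$ [a',b';c',\infty] = \frac{a'-c'}{b'-c'}. $$
For $e \in \F_q \setminus \{a',b',c'\}$, the equation
$$ [a',b';c',e] = \frac{(a'-c')(b'-e)}{(a'-e)(b'-c')} = \frac{a'-c'}{b'-c'} $$
forces $b'-e = a'-e$, i.e. $a'=b'$, which is a contradiction. Hence the value at $\infty$ is never attained at a finite point. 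Since $\pi(d) \in \F_q$, this rules out $\psi_M(d)=\infty$. Then both $\pi(d)$ and $\psi_M(d)$ lie in $\F_q \setminus \{a',b',c'\}$ and have the same cross ratio with $a',b',c'$, so Proposition \ref{prop projectivities 3} yields $\psi_M(d)=\pi(d)$.
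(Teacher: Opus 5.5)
Your proposal is correct and follows the paper's own route: the unique projectivity through three points from Proposition \ref{prop projectivities 1}, cross-ratio invariance from Lemma \ref{lemma pi -1 preserves cross ratios} and Proposition \ref{prop projectivities 2}, then injectivity from Proposition \ref{prop projectivities 3}, with $n=3$ handled separately. Your explicit exclusion of $\psi_M(d)=\infty$, via $[a',b';c',\infty]=(a'-c')/(b'-c')$, fills a small point the paper passes over when it applies Proposition \ref{prop projectivities 3}, since that result only covers finite points.
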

\begin{proof}
If $ n=3 $, any permutation of $ \A $ can be extended to a projectivity of $ \mathbb{P}^1(\F_q) $ by Proposition \ref{prop projectivities 1}. Now assume $ n \geq 4 $. Fix three distinct points $ a,b,c \in \A $. By Proposition \ref{prop projectivities 1}, there exists a unique projectivity $ \psi_M \in \PGL(\F_q) $ such that
$$ \psi_M(a) = \pi(a), \quad \psi_M(b) = \pi(b) \quad \textrm{and} \quad \psi_M(c) = \pi(c). $$
For any $ d \in \A \setminus \{ a,b,c \} $, Lemma \ref{lemma pi -1 preserves cross ratios} for $ \pi $ and Proposition \ref{prop projectivities 2} for $ \psi_M $ yield
$$ [\pi(a),\pi(b);\pi(c),\pi(d)] = [a,b;c,d] = [\pi(a),\pi(b); \pi(c), \psi_M(d)]. $$
By Proposition \ref{prop projectivities 3}, we have that $ \pi(d) = \psi_M(d) $, for all $ d \in \A \setminus \{a,b,c \} $. Thus $ \pi = \psi_M $ in $ \A $. 
\end{proof}

\begin{remark}
If $ \A = \F_q $, then the condition $ \psi_M (\A) = \A $ means that $ \psi_M(\infty) = \infty $. As shown after Definition \ref{def projectivity}, this means that $ \psi_M $ is affine. Hence for $ \A = \F_q $, Proposition \ref{prop pi is a projectivity} says that there exist $ \alpha, \beta \in \F_q $, with $ \alpha \neq 0 $ (since $ \alpha = \Delta $), such that
$$ \pi(a) = \alpha a + \beta, $$
for all $ a \in \F_q $.
\end{remark}

\section{The $ (k-1) $-degree graph and the scalar $ \lambda $} \label{sec d degree graph}

In this section, we assume that 
\begin{equation}
n = |\A| \geq 3 , \quad 2 \leq k \leq rn - r, \quad \textrm{and} \quad k \leq n-2 \textrm{ if } r=1 .
\label{eq assumption degree graph sec}
\end{equation}
Let $ T \in \Aut(\M_\A(r,k)) $. As shown in Proposition \ref{prop isometries NRT charact}, $ T $ consists of a permutation $ \pi : \A \longrightarrow \A $ and products by upper triangular invertible matrices. By Proposition \ref{prop pi is a projectivity}, there exists a projectivity $ \psi_M \in \PGL(\F_q) $ such that $ \pi(a) = \psi_M(a) $, for all $ a \in \A $. In particular, since $ T_M \in \Aut(\M_\A(r,k)) $ also has permutation associated to $ \psi_M $ by Corollary \ref{cor T_M given by projectivity}, then $ T \circ T_M^{-1} \in \Aut(\M_\A(r,k)) $ is an automorphism whose permutation is the identity.

Thus, in this section we just fix $ T \in \Aut(\M_\A(r,k)) $ whose permutation $ \pi $ is the identity in $ \A $. We will show in this section that then there exists $ \lambda \in \F_q^* $ such that $ L(f) = \lambda f $, for all $ f \in \V_k $. This will later be used to show that $ T(c) = \lambda c $, for all $ c \in (\F_q^r)^n $, which will conclude the classification problem for most parameters.

By Lemma \ref{lemma K_ee is P_ee V k - ee}, there exists $ \lambda_\ee \in \F_q^* $ such that
\begin{equation}
L(P_\ee) = \lambda_\ee P_\ee,
\label{eq eigenpolynomials}
\end{equation}
for all $ \ee \in [0,r]^\A $ such that $ |\ee| = k-1 $. We will show that $ \lambda_\ee $ is independent of such $ \ee \in [0,r]^\A $ using the $ (k-1) $-degree graph.

\begin{definition} \label{def k-1 degree graph}
Let $ \mathcal{G} $ be the graph whose vertices are the vectors $ \ee \in [0,r]^\A $ such that $ |\ee| = k-1 $, and whose edges are the following. Given two vertices $ \ee $ and $ \ee^\prime $, $ \{ \ee,\ee^\prime \} $ is an edge of $ \mathcal{G} $ if
$$ \ee = \cc + \ones_a, \quad \ee^\prime = \cc + \ones_b, $$
for distinct $ a,b \in \A $ and some $ \cc \in [0,r]^\A $ (with $ |\cc|=k-2 $) with at least three components strictly smaller than $ r $ (notice two of them correspond to $ a $ and $ b $).
\end{definition}

The key fact is that this graph is connected.

\begin{lemma} \label{lemma graph is connected}
Assuming (\ref{eq assumption degree graph sec}), the $ (k-1) $-degree graph $ \mathcal{G} $ from Definition \ref{def k-1 degree graph} is connected.
\end{lemma}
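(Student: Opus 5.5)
The plan is to show that any two vertices are joined by a path made of elementary moves. An elementary move transfers one unit of multiplicity from one point to another. Each vertex $\ee$ of $\mathcal{G}$ has $|\ee| = k-1$, and two vertices $\ee, \ee'$ are adjacent exactly when $\ee' = \ee - \ones_a + \ones_b$ with $e_a \geq 1$ and $e_b \leq r-1$. In that case the common vector $\cc = \ee - \ones_a$ has $c_a \leq r-1$ and $c_b \leq r-1$, and we additionally need some third point $z \notin \{a,b\}$ with $c_z \leq r-1$. Given two vertices $\ee \neq \ee'$, I would induct on $\|\ee - \ee'\|_1$. Since $|\ee| = |\ee'|$, there exist $a$ with $e_a > e'_a$ and $b$ with $e_b < e'_b$. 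Then $e_a \geq 1$ and $e_b \leq r-1$, so the candidate neighbour $\ee - \ones_a + \ones_b$ lies in $[0,r]^\A$ and is strictly closer to $\ee'$. The only thing left to verify is the ``third component'' condition.

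The main obstacle is exactly this third-component condition. Set $\cc = \ee - \ones_a$, so $|\cc| = k-2$. Suppose every $z \in \A \setminus \{a,b\}$ has $c_z = r$. Then $|\cc| \geq r(n-2)$, which gives $k \geq r(n-2)+2$.

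If $r = 1$, this contradicts $k \leq n-2$, so the direct move is always an edge.

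If $r \geq 2$, we combine $k \leq rn-r$ with $k \geq rn-2r+2$. This bad situation can only occur when all points other than $a,b$ are full; the condition is $c_a + c_b = k-2-r(n-2) \leq r-2$. In that case I would route through an intermediate vertex. If $e'_a = e'_b = r$, there is nothing to fix by this move, so I would instead pick a different pair.

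The cleanest approach is a two-step detour. Since $k-1 \leq rn-r-1$, the vertex $\ee'$ has total deficit $rn - (k-1) \geq r+1 \geq 3$, so $\ee'$ is not full at some $z \neq a,b$, or else it has enough room at $a,b$. Concretely, I would choose a point $z \notin \{a,b\}$ with $e'_z < r = e_z$. Such a $z$ exists because $\ee$ is full outside $\{a,b\}$ while $|\ee'| = |\ee|$ and $\ee' \neq \ee$; one checks, using the deficit bound, that $\ee'$ cannot have all its extra mass at $a,b$. I would then move a unit from $z$ to $b$ instead of from $a$ to $b$: the new $\cc = \ee - \ones_z$ has $c_z = r-1$, $c_b \leq r-1$, and $c_a = e_a \leq r-1$. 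This third slot is available because $e_a + e_b \leq r-1 \leq 2r-3$, so $e_a \leq r-1$. That move is therefore an edge, and it also reduces the distance to $\ee'$.

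The remaining sub-case is when every $z \neq a,b$ satisfies $e'_z = r$. Then $\ee$ and $\ee'$ agree outside $\{a,b\}$, so the pair $(e_a, e_b)$ must be transformed into $(e'_a, e'_b)$ with the same sum $s = e_a + e_b \leq r-1$. I would handle this by a detour. First move a unit from some full $z$ to $a$; this is an edge with third slot $b$, legal since $e_a < r$ and $e_b < r$ as $s \leq r-1$. Next transfer units between $a$ and $b$ freely, since now $c_z = r-1$ provides the third slot. Finally move the unit back from $a$ to $z$. The degenerate case $n = 3$ with $\A = \{a,b,z\}$ is covered by the same argument, and $n \geq 3$ guarantees that a third point exists at all. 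Checking these inequalities case by case is routine, and connectivity then follows by induction.
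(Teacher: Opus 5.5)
Your argument is correct, but it is organized differently from the paper's.

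\textbf{The paper's route.} It first notes that the unrestricted transfer graph $\mathcal{G}'$ is connected, i.e.\ the graph of all elementary moves $\ee\mapsto\ee-\ones_a+\ones_b$. It then repairs each bad edge $\{\cc+\ones_a,\cc+\ones_b\}$ of $\mathcal{G}'$ \emph{locally}. In that situation $c_s=r$ for all $s\notin\{a,b\}$ and $c_a+c_b\le r-2$. So the single intermediate vertex $\mm=\cc-\ones_s+\ones_a+\ones_b$ gives two edges of $\mathcal{G}$, each with non-full slots $s,a,b$.

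\textbf{Your route.} You run the greedy induction on $\|\ee-\ee'\|_1$, which is exactly the argument behind connectivity of $\mathcal{G}'$, so your proof is more self-contained on that point. When the greedy move $a\to b$ is bad, your choice depends on the target $\ee'$:
\begin{itemize}
\item If there is $z\notin\{a,b\}$ with $e'_z<r=e_z$, you use the move $z\to b$ instead. It decreases the distance, and its common vector is non-full at $z,a,b$ because $e_a+e_b\le r-1$.
\item If $\ee'$ is full outside $\{a,b\}$, you detour: $z\to a$, then transfers inside $\{a,b\}$, then $a\to z$.
\end{itemize}

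\textbf{Comparison.} The shared key idea is borrowing one unit from a full third coordinate. The paper's version buys uniformity: the patch depends only on the bad edge, not on the target, so no case split on $\ee'$ is needed. In fact your second detour works for any bad move regardless of $\ee'$, and its last two steps are precisely the paper's detour. So your first sub-case could be dropped.

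\textbf{Two sentences to fix.}
\begin{itemize}
\item You claim that a $z\notin\{a,b\}$ with $e'_z<r$ always exists ``by the deficit bound''. This is false. Take $r=2$, $n=3$, $k=4$, coordinates ordered $(z,a,b)$, $\ee=(2,1,0)$ and $\ee'=(2,0,1)$. The claim also contradicts your own ``remaining sub-case''; state it as a case distinction instead.
\item The case $e'_a=e'_b=r$ cannot occur, since $e'_a<e_a\le r$.
\end{itemize}
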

\begin{proof}
Clearly, $ \mathcal{G} $ is a subgraph of the ordinary transfer graph $ \mathcal{G}^\prime $. Recall that $ \mathcal{G}^\prime $ is the graph whose vertices are the same vectors $ \ee \in [0,r]^\A $, with $ |\ee| = k-1 $, and where two such vertices $ \ee,\ee^\prime $ form an edge if and only if one can obtain $ \ee^\prime $ from $ \ee $ by subtracting $ 1 $ in one component and adding it to another component. In other words, if there exists $ \cc \in [0,r]^\A $ (with $ |\cc| = k-2 $) and distinct $ a,b \in \A $ such that 
$$ \ee = \cc + \ones_a \quad \textrm{and} \quad \ee^\prime = \cc + \ones_b. $$
This graph $ \mathcal{G}^\prime $ is connected: given any two vertices $ \ee $ and $ \ee^\prime $, one can travel from $ \ee $ to $ \ee^\prime $ by successively transferring one unit from one component to another, since $ |\ee| = k-1 = | \ee^\prime| $. 

Consider now two vertices $ \ee, \ee^\prime $ of $ \mathcal{G} $ that are neighbours in $ \mathcal{G}^\prime $, i.e., there exists $ \cc \in [0,r]^\A $ and distinct $ a,b \in \A $ such that
$$ \ee = \cc + \ones_a \quad \textrm{and} \quad \ee^\prime = \cc + \ones_b. $$
If at least three components of $ \cc $ are strictly smaller than $ r $, then this edge is in $ \mathcal{G} $. Assume this is not the case, that is, $ a $ and $ b $ are the only coordinates with $ c_a \leq r-1 $ and $ c_b \leq r-1 $. Since $ |\cc| = k-2 $, then 
$$ rn - |\cc| = rn - k + 2 \geq r + 2, $$
by the assumptions at the beginning of this section. If we denote the deficits as $ \delta_a = r - c_a $ and $ \delta_b = r - c_b $, then
\begin{equation}
\delta_a + \delta_b \geq r+2 , \quad \textrm{hence} \quad c_a + c_b \leq r-2.
\label{eq c_a + c_b leq r-2}
\end{equation}
Notice that, if $ r=1 $, then $ \delta_a + \delta_b \leq 2 $, which contradicts the first inequality in (\ref{eq c_a + c_b leq r-2}). Hence if $ r=1 $, the edge must be in $ \mathcal{G} $ necessarily. 

From now on, we assume $ r \geq 2 $. Choose $ s \in \A \setminus \{ a,b \} $, which exists since we are assuming that $ n = |\A| \geq 3 $. Replace now the edge $ \{ \ee, \ee^\prime \} $ in $ \mathcal{G}^\prime $ (recall we are assuming it is not in $ \mathcal{G} $) by the two edges $ \{ \ee, \mm \} $ and $ \{ \mm, \ee^\prime \} $, also in $ \mathcal{G}^\prime $, where
$$ \mm = \cc - \ones_s + \ones_a + \ones_b. $$
Notice that $ \mm \in [0,r]^\A $ since we are assuming that $ c_s = r \geq 2 $, $ c_a \leq r-1 $ and $ c_b \leq r-1 $. For the edge $ \{ \ee, \mm \} $, we have that
$$ \ee = (\cc - \ones_s + \ones_a) + \ones_s \quad \textrm{and} \quad \mm = (\cc - \ones_s + \ones_a) + \ones_b, $$
where $ \cc - \ones_s + \ones_a \in [0,r]^\A $ has components strictly smaller than $ r $ in the three distinct coordinates $ a,b,s $, since we are assuming that $ c_s \leq r $, $ c_b \leq r-1 $ and it holds that $ c_a + 1 \leq r-1 $ by Equation (\ref{eq c_a + c_b leq r-2}). Therefore $ \{ \ee, \mm \} $ is an edge of $ \mathcal{G} $. In the same way, $ \{ \mm, \ee^\prime \} $ is also an edge of $ \mathcal{G} $. 

Hence, for any two vertices $ \ee , \ee^\prime $ of $ \mathcal{G} $, one can take a path between them inside the transfer graph $ \mathcal{G}^\prime $ (recall that it is connected), and replace each edge of the path, if necessary, by two edges in $ \mathcal{G} $ as shown above. We thus conclude that $ \mathcal{G} $ is connected.
\end{proof}

We are now ready to show that $ \lambda_\ee $ does not depend on $ \ee $.

\begin{lemma} \label{lemma one eignevalue}
Assuming (\ref{eq assumption degree graph sec}), there exists $ \lambda \in \F_q^* $ such that 
$$ L(P_\ee) = \lambda P_\ee, $$
for all $ \ee \in [0,r]^\A $ such that $ |\ee| = k-1 $.
\end{lemma}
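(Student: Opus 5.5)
Since $\mathcal{G}$ is connected by Lemma \ref{lemma graph is connected}, the plan is to show that $\lambda_\ee = \lambda_{\ee^\prime}$ whenever $\{\ee,\ee^\prime\}$ is an edge of $\mathcal{G}$. Then $\lambda_\ee$ is constant on all vertices, and we call the common value $\lambda$.

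So fix an edge, with $\ee = \cc + \ones_a$ and $\ee^\prime = \cc + \ones_b$. Here $|\cc| = k-2$, and $\cc$ has three distinct coordinates $a,b,s$ with $c_a,c_b,c_s \leq r-1$.

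\emph{Setting up the restriction.} Since $\pi$ is the identity, $\rho(\cc) = \cc$. By Lemma \ref{lemma K_ee is P_ee V k - ee}, $L$ maps the plane $\K_\cc = \langle x P_\cc, P_\cc \rangle_{\F_q}$ onto itself. For each $z \in \{a,b,s\}$ we have $c_z \leq r-1$, so $\ee_z := \cc + \ones_z \in [0,r]^\A$ is a vertex of $\mathcal{G}$. Hence $L((x-z)P_\cc) = \lambda_{\ee_z} (x-z) P_\cc$. Thus the restriction $L|_{\K_\cc}$ is a linear endomorphism of a $2$-dimensional space with three pairwise non-proportional eigenvectors $(x-a)P_\cc$, $(x-b)P_\cc$ and $(x-s)P_\cc$. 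These are pairwise independent because $a,b,s$ are distinct.

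\emph{Three eigenvectors force a scalar.} Write $(x-s)P_\cc = \mu (x-a)P_\cc + \nu (x-b)P_\cc$. Both $\mu$ and $\nu$ are nonzero, since $(x-s)$ is proportional to neither $(x-a)$ nor $(x-b)$. Applying $L$ gives
$$\lambda_{\ee_s}\mu (x-a)P_\cc + \lambda_{\ee_s}\nu (x-b)P_\cc = \lambda_{\ee_a}\mu (x-a)P_\cc + \lambda_{\ee_b}\nu (x-b)P_\cc .$$
Since $(x-a)P_\cc$ and $(x-b)P_\cc$ are linearly independent, comparing coefficients yields $\lambda_{\ee_a} = \lambda_{\ee_s} = \lambda_{\ee_b}$. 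In particular $\lambda_\ee = \lambda_{\ee^\prime}$, as needed. This is exactly where the third coordinate $s$ in Definition \ref{def k-1 degree graph} is used: two eigenvectors alone would not force equal eigenvalues.

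\emph{Main obstacle.} The only subtle point is checking that each $\ee_z$ is a genuine vertex, i.e. that it lies in $[0,r]^\A$ and has $|\ee_z| = k-1$. This follows from $c_z \leq r-1$ and $|\cc| = k-2$. The bookkeeping is then routine, and connectivity of $\mathcal{G}$ finishes the proof. Nonzeroness of $\lambda$ is inherited from Lemma \ref{lemma K_ee is P_ee V k - ee}.
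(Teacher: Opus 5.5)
Your proof is correct and follows the paper's argument: restrict $L$ to the invariant plane $\K_\cc = P_\cc \V_2$, use the three eigenvectors $(x-a)P_\cc$, $(x-b)P_\cc$, $(x-s)P_\cc$ to force equal eigenvalues across each edge of $\mathcal{G}$, and conclude by connectivity (Lemma \ref{lemma graph is connected}). You also supply the linear-algebra step that the paper leaves to the reader, and your computation there, including the check that $\mu,\nu \neq 0$, is sound.
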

\begin{proof}
Let $ \ee, \ee^\prime $ be two neighbour vertices in the $ (k-1) $-degree graph $ \mathcal{G} $. By definition, there exists $ \cc \in [0,r]^\A $ with $ |\cc| = k-2 $, with at least three components strictly smaller than $ r $ and 
$$ \ee = \cc + \ones_a \quad \textrm{and} \quad \ee^\prime = \cc + \ones_b, $$
for distinct $ a,b \in \A $. In particular, there exists $ s \in \A \setminus \{ a,b \} $ such that $ \max \{ c_a,c_b,c_s \} \leq r - 1 $. The $ 2 $-dimensional subspace
$$ K_\cc = P_\cc \V_2 $$
is invariant by $ L $ (by Lemma \ref{lemma K_ee is P_ee V k - ee} since $ \pi $ is the identity) and contains the three eigenvectors
$$ P_{\cc + \ones_a} = (x-a)P_\cc, \quad P_{\cc + \ones_b} = (x-b) P_\cc \quad \textrm{and} \quad P_{\cc + \ones_s} = (x-s)P_\cc, $$
by Equation (\ref{eq eigenpolynomials}). Since the three eigenvectors lie in a plane and any two of them are linearly independent, we conclude that their three eigenvalues coincide (the proof of this Linear Algebra fact is left to the reader). Now, the eigenvalues for $ P_{\cc + \ones_a} = P_\ee $ and $ P_{\cc + \ones_b} = P_{\ee^\prime} $ are, respectively, $ \lambda_\ee $ and $ \lambda_{\ee^\prime} $. Thus they are equal. 

In other words, the eigenvalues $ \lambda_\ee $ are equal for neighbour vertices in $ \mathcal{G} $. Since this graph is connected by Lemma \ref{lemma graph is connected}, we conclude that the eigenvalues in Equation (\ref{eq eigenpolynomials}) are equal for all $ \ee \in [0,r]^\A $ such that $ |\ee| = k-1 $, and we are done.
\end{proof}

A final key ingredient is that $ P_\ee $, for $ |\ee|=k-1 $, generate the whole space $ \V_k $.

\begin{lemma} \label{lemma Pe generated Vk}
Assuming (\ref{eq assumption degree graph sec}), For all $ j \in [0,k-1] $, we have that
$$ \langle \{ P_\ee : \ee \in [0,r]^\A, |\ee| = j \} \rangle_{\F_q} = \F_q[x]_{\leq j} = \V_{j+1}. $$
\end{lemma}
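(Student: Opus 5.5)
We argue by induction on $ j \in [0,k-1] $. The inclusion $ \subseteq $ is clear, since each $ P_\ee $ with $ |\ee| = j $ has degree exactly $ j $. For the base case $ j=0 $, the only such $ \ee $ is the zero vector, and $ P_{\mathbf{0}} = 1 $ spans $ \V_1 $. Note also that for $ j \leq k-1 \leq rn - r - 1 < rn $ the set of $ \ee \in [0,r]^\A $ with $ |\ee| = j $ is nonempty, so every space considered is nonzero.

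For the inductive step, assume $ 1 \leq j \leq k-1 $ and that the polynomials $ P_{\cc} $ with $ |\cc| = j-1 $ span $ \V_j $. Write $ W_j $ for the span of the $ P_\ee $ with $ |\ee| = j $. It suffices to show two things: that $ W_j $ contains some polynomial of degree exactly $ j $, which is automatic, and that $ W_j \supseteq \V_j $. Then $ \dim W_j = j+1 $ and $ W_j = \V_{j+1} $.

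To obtain $ W_j \supseteq \V_j $, I would use the same device as in Lemma \ref{lemma one eignevalue}. Take any $ \cc $ with $ |\cc| = j-1 $. Suppose there are two distinct points $ a,b \in \A $ with $ c_a, c_b \leq r-1 $. Then $ (x-a)P_\cc $ and $ (x-b)P_\cc $ both lie in $ W_j $. Their difference is $ (b-a)P_\cc $, so $ P_\cc \in W_j $. By the induction hypothesis, this gives $ \V_j \subseteq W_j $.

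The main obstacle is guaranteeing the existence of two coordinates of $ \cc $ strictly below $ r $. The number of units missing from $ \cc $ is
$$ rn - |\cc| = rn - j + 1 \geq rn - k + 2 \geq r+2 $$
by (\ref{eq assumption degree graph sec}). If only one coordinate had $ c_a < r $, its deficit $ r - c_a $ would be at most $ r $, contradicting a total deficit of at least $ r+2 $. Hence at least two coordinates are strictly below $ r $, and the argument goes through for every $ \cc $ with $ |\cc| = j-1 $, without needing to choose $ \cc $ cleverly. When $ r = 1 $ the same count applies, with $ k \leq n-2 $ giving enough zero coordinates.
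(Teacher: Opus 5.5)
Your proof is correct and follows essentially the same route as the paper: induction on $j$, with $P_{\cc}$ recovered from the difference $(x-a)P_{\cc}-(x-b)P_{\cc}=(b-a)P_{\cc}$ at two coordinates of $\cc$ below $r$. Your total-deficit count $rn-|\cc|\geq r+2$ is equivalent to the paper's observation that otherwise $n-1$ coordinates would equal $r$, forcing $|\cc|\geq (n-1)r$; it already covers $r=1$, so your separate remark for that case is unnecessary.
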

\begin{proof}
We proceed by induction on $ j $. For $ j=0 $, the result is trivial. Assume the lemma holds for $ j \in [0,k-2] $, and let us prove it for $ j+1 $. Let $ \cc \in [0,r]^\A $ be such that $ |\cc| = j $. At least two components of $ \cc $ are strictly smaller than $ r $. Otherwise, at least $ n-1 $ of its components would equal $ r $, giving
$$ j \geq (n-1)r = nr- r, $$
and we are assuming that
$$ j \leq k-2 \leq nr - r - 2 = nr - r -2. $$
Hence we may choose distinct $ a,b \in \A $ such that $ c_a < r $ and $ c_b < r $. Then
$$ P_{\cc + \ones_a} - P_{\cc + \ones_b} = (b-a) P_\cc. $$
Thus for all $ \cc \in [0,r]^\A $ such that $ |\cc| = j $, we have 
$$ P_\cc \in \langle \{ P_\ee : \ee \in [0,r]^\A, |\ee| = j+1 \} \rangle_{\F_q} . $$
By induction hypothesis, we have that 
$$ \F_q[x]_{\leq j} \subseteq \langle \{ P_\ee : \ee \in [0,r]^\A, |\ee| = j+1 \} \rangle_{\F_q}. $$
Since there exists at least one $ \ee \in [0,r]^\A $ with $ |\ee| = j+1 $, and $ P_\ee $ is monic of degree $ j+1 $, we conclude the result.
\end{proof}

We may now conclude the main result of this section. Recall that, at the beginning of this section, we fixed $ T \in \Aut(\M_\A(r,k)) $ whose associated permutation $ \pi : \A \longrightarrow \A $ is the identity in $ \A $. Recall the map $ L : \V_k \longrightarrow \V_k $ from Proposition \ref{prop L linear op}.

\begin{theorem} \label{th automorph with pi = id is lambda}
Assuming (\ref{eq assumption degree graph sec}), there exists $ \lambda \in \F_q^* $ such that, for all $ f \in \V_k $,
$$ L(f) = \lambda f. $$
\end{theorem}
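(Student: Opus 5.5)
The plan is to combine Lemma \ref{lemma one eignevalue} with Lemma \ref{lemma Pe generated Vk} for $ j = k-1 $, using the linearity of $ L $.

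First, Lemma \ref{lemma one eignevalue} (which relies on the connectivity of the $ (k-1) $-degree graph, Lemma \ref{lemma graph is connected}) gives a single scalar $ \lambda \in \F_q^* $ such that $ L(P_\ee) = \lambda P_\ee $ for every $ \ee \in [0,r]^\A $ with $ |\ee| = k-1 $. In other words, every such $ P_\ee $ lies in the kernel of the linear map $ L - \lambda \, \mathrm{id}_{\V_k} $.

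Second, apply Lemma \ref{lemma Pe generated Vk} with $ j = k-1 $. This is allowed, since $ k-1 \in [0,k-1] $ and the standing assumptions (\ref{eq assumption degree graph sec}) hold. It shows that the polynomials $ P_\ee $ with $ |\ee| = k-1 $ span $ \F_q[x]_{\leq k-1} = \V_k $.

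Third, the kernel of $ L - \lambda \, \mathrm{id} $ is a subspace containing a spanning set of $ \V_k $, so it is all of $ \V_k $. Writing an arbitrary $ f = \sum_\ee \mu_\ee P_\ee $ and using linearity gives $ L(f) = \sum_\ee \mu_\ee \lambda P_\ee = \lambda f $ for all $ f \in \V_k $, as claimed.

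There is no real obstacle left at this stage. The substantive work was done in the earlier lemmas: the connectivity argument, which forces all the eigenvalues $ \lambda_\ee $ to agree, and the spanning induction, which needs at least two coordinates of each $ \cc $ to be below $ r $ and therefore uses $ k \leq rn - r $. The only point to check is that both lemmas are invoked under exactly the hypotheses (\ref{eq assumption degree graph sec}), which they are.
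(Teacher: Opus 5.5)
Your proposal is correct and follows the paper's own proof: Lemma \ref{lemma one eignevalue} gives a single eigenvalue $\lambda \in \F_q^*$ with $L(P_\ee) = \lambda P_\ee$ for every $\ee \in [0,r]^\A$ with $|\ee| = k-1$. Lemma \ref{lemma Pe generated Vk} with $j = k-1$ shows that these $P_\ee$ span $\V_k$, so linearity gives $L(f) = \lambda f$ for all $f \in \V_k$.
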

\begin{proof}
By Lemma \ref{lemma one eignevalue}, there exists $ \lambda \in \F_q^* $ such that $ L(P_\ee) = \lambda P_\ee $, for all $ \ee \in [0,r]^\A $ such that $ |\ee| = k-1 $. Since these polynomials generate $ \V_k $ by Lemma \ref{lemma Pe generated Vk}, we conclude that $ L(f) = \lambda f $, for all $ f \in \V_k $. 
\end{proof}

\section{The automorphism group} \label{sec classification}
 
In this section, we collect the results from Sections \ref{sec permutation is projectivity} and \ref{sec d degree graph} to show that automorphisms of multiplicity codes are given by projectivities, up to multiplication by a scalar in $ \F_q^* $, for the ``central dimension'' range, meaning $ n \geq 3 $ and $ r \leq k \leq nr - r $ (and $ 2 \leq k \leq n-2 $ when $ r=1 $). 

The ``low and high'' dimensional ranges ($ 2 \leq k \leq r-1 $ and $ nr-r+1 \leq k \leq nr-2 $) are treated in Appendix \ref{app low high range}. The rest of the cases (i.e., the four cases $ k=1 $, $ k=rn-1 $, $ n=1 $ and $ n= 2 $) are treated in Appendix \ref{app exceptional cases}. Notice that for $ k = rn $, we obtain the full-space code $ (\F_q^r)^n $.

The following result recovers (by setting $ r=1 $) the classical characterization of the automorphism group (in the classical Hamming metric) of Reed--Solomon codes by Dur \cite[Th. 4]{Dur1987}.

\begin{theorem} [\textbf{Central range}] \label{th NRT aut central range}
Assume that $ n = |\A| \geq 3 $ and $ r \leq k \leq nr- r $. Assume also $ 2 \leq k \leq n-2 $ if $ r=1 $. A linear isometry $ T : (\F_q^r)^n \longrightarrow (\F_q^r)^n $ for the NRT metric lies in $ \Aut(\M_\A(r,k)) $ if and only if there exists $ \lambda \in \F_q^* $ and a projectivity $ \psi_M \in \PGL(\F_q) $ such that $ \psi_M(\A) = \A $ and
$$ T((z_a)_{a \in \A}) = \lambda \left(z_{\psi_M(a)} R^{(r)}_{a,M,k} \right)_{a \in \A}, $$
for all $ (z_a)_{a \in \A} \in (\F_q^r)^n $, where the invertible upper triangular matrices $ R^{(r)}_{a,M,k} \in \F_q^{r \times r} $ are given in Proposition \ref{prop projectivities are automorph}. On polynomials, we have that $ L : \V_k \longrightarrow \V_k $ (Proposition \ref{prop L linear op}) acts as 
$$ L(f) = \lambda \Psi_{M,k}(f) = \lambda (\gamma x + \delta)^{k-1} f \left( \frac{\alpha x + \beta}{\gamma x + \delta} \right), $$
for all $ f \in \V_k $, where the matrix associated to $ \psi_M $ is
$$ M = \left( \begin{array}{cc}
\alpha & \gamma \\
\beta & \delta
\end{array} \right) \in \GL_2(\F_q) . $$
Finally, the projectivity $ \psi_M $ is unique.
\end{theorem}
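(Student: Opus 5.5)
The plan is to combine Corollary \ref{cor T_M given by projectivity}, Proposition \ref{prop pi is a projectivity} and Theorem \ref{th automorph with pi = id is lambda}, and then to lift the polynomial statement $L = \lambda \Psi_{M,k}$ back to the block matrices.

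\textbf{Sufficiency.} This direction is immediate. Since $k \geq r$, Corollary \ref{cor T_M given by projectivity} gives $T_M \in \Aut(\M_\A(r,k))$ whenever $\psi_M(\A) = \A$. Multiplication by $\lambda \in \F_q^*$ is an NRT isometry that preserves the linear code. Hence $\lambda T_M$, which is exactly the displayed map, lies in $\Aut(\M_\A(r,k))$.

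\textbf{Necessity.} Let $T \in \Aut(\M_\A(r,k))$, with permutation $\pi$ and upper triangular matrices $A_a$. The hypotheses here imply (\ref{eq assumption degree graph sec}), since $k \geq r \geq 2$ when $r \geq 2$. So Proposition \ref{prop pi is a projectivity} yields a unique $\psi_M$ with $\pi = \psi_M$ on $\A$ and $\psi_M(\A) = \A$. Set $T' = T \circ T_M^{-1}$. Its permutation is the identity, because $P$-parts compose as permutations and $T_M$ carries the same permutation $\psi_M$. Let $L'$ be its polynomial operator from Proposition \ref{prop L linear op}. Theorem \ref{th automorph with pi = id is lambda} gives $L' = \lambda\,\mathrm{id}$.

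The remaining step, which I expect to be the main obstacle, is passing from the polynomial level back to the vectors. We must show that $T'$ itself equals $\lambda\,\mathrm{id}$, and not merely that it acts as $\lambda$ on the code. Write $T'((z_a)) = (z_a B_a)$. From $L' = \lambda$ we get $J^{(r)}_a(f) B_a = \lambda J^{(r)}_a(f)$ for all $f \in \V_k$ and $a \in \A$. Because $k \geq r$, the jets $J^{(r)}_a(f)$ for $f \in \V_k$ span all of $\F_q^r$: the polynomials $(x-a)^i$ with $i < r \leq k$ give the unit vectors. Hence $B_a = \lambda I_r$ for every $a$.

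This is precisely where $k \geq r$ is used. For $k < r$ the lower block of $B_a$ would be unconstrained, which is why that range is deferred to the appendix. It follows that $T = T' \circ T_M = \lambda T_M$. Comparing (\ref{eq def L}) with (\ref{eq jets Psi and R}) and using uniqueness in Proposition \ref{prop L linear op}, we get $L = \lambda \Psi_{M,k}$.

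\textbf{Uniqueness of $\psi_M$.} Any representation forces $\psi_M|_\A = \pi$, and $\pi$ is determined by $T$ through Proposition \ref{prop isometries NRT charact}. When $n \geq 3$, three distinct points of $\A$ determine the projectivity by Proposition \ref{prop projectivities 1}, so $\psi_M$ is unique. As a side remark, the pair $(\lambda, M)$ is determined only up to the scaling $M \mapsto \mu M$, which changes $\Psi_{M,k}$ by $\mu^{k-1}$. This is harmless because only $\psi_M$ is claimed to be unique.
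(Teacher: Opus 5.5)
Your proposal is correct and follows the paper's proof essentially step by step. You get sufficiency from Corollary \ref{cor T_M given by projectivity}, reduce to $T' = T\circ T_M^{-1}$ with identity permutation, apply Theorem \ref{th automorph with pi = id is lambda}, and use surjectivity of the jet map at each $a$ (which needs $k\geq r$) to force $B_a=\lambda I_r$. Your explicit checks of hypothesis (\ref{eq assumption degree graph sec}), of $L=\lambda\Psi_{M,k}$ via uniqueness in Proposition \ref{prop L linear op}, and of uniqueness of $\psi_M$ via three points (Proposition \ref{prop projectivities 1}) are details the paper leaves implicit.
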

\begin{proof}
Any linear isometry of the form given in the theorem belongs to $ \Aut(\M_\A(r,k)) $ by Corollary \ref{cor T_M given by projectivity}. Conversely, given $ T \in \Aut(\M_\A(r,k)) $, by Proposition \ref{prop pi is a projectivity}, the permutation associated to $ T $ is given by a unique projectivity $ \psi_M \in \PGL(\F_q) $ such that $ \psi_M(\A) = \A $. Now, consider $ T^\prime = T \circ T_M^{-1} \in \Aut(\M_\A(r,k)) $, as at the start of Section \ref{sec d degree graph} (see Corollary \ref{cor T_M given by projectivity} for the definition of $ T_M $). Hence the permutation associated to $ T^\prime $ is the identity. Let $ L^\prime $ be the automorphism of $ \V_k $ associated to $ T^\prime $ from Proposition \ref{prop L linear op}, and let $ A^\prime_a \in \GL_r(\F_q) $, for $ a \in \A $, be the upper triangular matrices associated to $ T^\prime $ from Proposition \ref{prop isometries NRT charact}. By Theorem \ref{th automorph with pi = id is lambda}, there exists $ \lambda \in \F_q^* $ such that
$$ L^\prime(f) = \lambda f, $$ 
for all $ f \in \V_k $. Combined with Proposition \ref{prop L linear op} and the fact that the permutation associated to $ T^\prime $ is the identity, we deduce that
$$ \lambda J^{(r)}_a(f) = J^{(r)}_a(L^\prime f) = J^{(r)}_a(f) A^\prime_a, $$
for all $ a \in \A $. Now since $ k \geq r $, then the evaluation map 
$$ {\rm ev}_a : \V_k \longrightarrow \F_q^r, $$
given by $ {\rm ev}(f) = J^{(r)}_a(f) $, for $ f \in \V_k $, is surjective. We deduce that 
$$ A^\prime_a = \diag(\lambda, \lambda, \ldots, \lambda) = \lambda I_r \in \F_q^{r \times r} , $$
for all $ a \in \A $. In other words, $ T^\prime (c) = \lambda c $, for all $ c \in (\F_q^r)^n $. Thus $ T = \lambda T_M $, as desired. 
\end{proof}

As a group, we have the following isomorphism.

\begin{corollary}
Assume that $ n = |\A| \geq 3 $ and $ r \leq k \leq nr- r $. Assume also $ 2 \leq k \leq n-2 $ if $ r=1 $. Define $ \GL(\A) $ as the subgroup of matrices $ M \in \GL_2(\F_q) $ such that $ \psi_M(\A) = \A $. We have a natural group isomorphism
$$ \Aut(\M_\A(r,k)) \cong (\F_q^* \times \GL(\A)) / \mathcal{H}_k, $$
for the subgroup $ \mathcal{H}_k = \{ (c^{-(k-1)},cI_2) : c \in \F_q^* \} $ of $ \F_q^* \times \GL(\A) $. The isomorphism sends $ T \in \Aut(\M_\A(r,k)) $ to the class of $ (\lambda,M) $ modulo $ \mathcal{H}_k $, where $ \lambda \in \F_q^* $ and $ M \in \GL(\A) $ are as in Theorem \ref{th NRT aut central range}. 
\end{corollary}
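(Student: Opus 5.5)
Define $\Phi : \F_q^* \times \GL(\A) \longrightarrow \Aut(\M_\A(r,k))$ by $\Phi(\lambda, M) = \lambda T_M$, where $T_M$ is the map of Corollary \ref{cor T_M given by projectivity} (with $k \geq r$). The plan is to show that $\Phi$ is a surjective group morphism with kernel exactly $\mathcal{H}_k$; the isomorphism then follows from the first isomorphism theorem. Its inverse sends $T$ to the class of $(\lambda, M)$ as described.

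Surjectivity is exactly the ``only if'' direction of Theorem \ref{th NRT aut central range}. To check the morphism property, I will work on the polynomial side, since $T$ is determined by its associated operator $L$. More precisely, $T \in \Aut(\M_\A(r,k))$ is determined by its action on the code. Indeed, the final step of the proof of Theorem \ref{th NRT aut central range} shows that, for $k \geq r$, the maps ${\rm ev}_a$ are surjective, so the matrices $A_a$ and the permutation are recovered from $L$. Hence the map $T \mapsto L$ is injective, and it is multiplicative: if $T_1,T_2$ correspond to $L_1,L_2$, then $T_1 \circ T_2$ corresponds to $L_1 \circ L_2$ or $L_2 \circ L_1$. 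This follows by applying (\ref{eq def L}) twice and keeping track of the order, since $T_M$ uses $z_{\psi_M(a)}$. The operator of $\lambda T_M$ is $\lambda \Psi_{M,k}$, and Proposition \ref{prop psi Mk is automorph of Vk} gives $\Psi_{M,k} \circ \Psi_{N,k} = \Psi_{MN,k}$. So $(\lambda,M)(\mu,N) = (\lambda\mu, MN)$ maps to $\lambda\mu \Psi_{MN,k} = (\lambda \Psi_{M,k})(\mu \Psi_{N,k})$. The main bookkeeping obstacle is the order convention. If the correspondence $T \mapsto L$ turns out to be an anti-morphism, then $\Phi$ is an anti-morphism as well, and I will compose with $M \mapsto M^{-1}$ (or use transposes) to obtain a genuine morphism. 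The kernel is unaffected, since $\mathcal{H}_k$ is central.

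For the kernel, suppose $\lambda \Psi_{M,k} = \mathrm{id}$. Then the permutation $\psi_M$ is the identity on $\A$. Since $n \geq 3$, uniqueness in Proposition \ref{prop projectivities 1} forces $\psi_M = \mathrm{id}$, so $M = cI_2$ for some $c \in \F_q^*$. In that case $\Psi_{cI_2,k}(f) = c^{k-1} f(x) = c^{k-1} f$, and the condition becomes $\lambda c^{k-1} = 1$, that is, $\lambda = c^{-(k-1)}$. Conversely, every such pair lies in the kernel. Therefore $\ker \Phi = \mathcal{H}_k$, which is a subgroup because $c \mapsto (c^{-(k-1)}, cI_2)$ is a morphism.

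Finally, the uniqueness of $\psi_M$ in Theorem \ref{th NRT aut central range} shows that the class of $(\lambda,M)$ is well defined. Replacing $M$ by $cM$ multiplies $\Psi_{M,k}$ by $c^{k-1}$, which must be compensated by replacing $\lambda$ with $c^{-(k-1)}\lambda$, and this is precisely a coset of $\mathcal{H}_k$.
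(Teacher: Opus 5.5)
Your proposal is correct and follows the paper's route: $\Phi(\lambda,M)=\lambda T_M$ is surjective by Theorem \ref{th NRT aut central range}, and its kernel is $\mathcal{H}_k$, because a projectivity fixing three distinct points of $\A$ is trivial, so $M=cI_2$ and $\Psi_{cI_2,k}=c^{k-1}\,\mathrm{id}$. You go slightly beyond the paper by checking the morphism property, which the paper takes for granted, and the order issue you hedge on resolves in your favour: (\ref{eq def L}) says exactly $T\circ{\rm ev}_\A={\rm ev}_\A\circ L$, so $T_1\circ T_2$ corresponds to $L_1\circ L_2$ and no inversion of $M$ is needed.
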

\begin{proof}
Define the group morphism
$$ \phi : \F_q^* \times \GL(\A) \longrightarrow \Aut(\M_\A(r,k)) $$
by $ \phi(\lambda,M) = \lambda T_M $ (Corollary \ref{cor T_M given by projectivity}), which is surjective by Theorem \ref{th NRT aut central range}. Now, let $ (\lambda,M), (\mu,N) \in \F_q^* \times \GL(\A) $ be such that $ \lambda T_M = \mu T_N $. By uniqueness of the projectivity, we have that $ \psi_M = \psi_N $, thus there exists $ c \in \F_q^* $ such that $ N = cM $. Now, we have that 
$$ R^{(r)}_{a,cM,k} = c^{k-1} R^{(r)}_{a,M,k}, $$
for $ a \in \A $. Hence
$$ \lambda T_M = \mu T_N = \mu T_{cM} = \mu c^{k-1} T_M. $$
This implies that $ \mu = c^{-(k-1)} \lambda $. In particular, $ \mu T_N $ is the identity if and only if $ N = c I_2 $ and $ \mu = c^{-(k-1)} $, thus $ \ker(\phi) = \mathcal{H}_k $ and the result follows.
\end{proof}

Define $ \PGL(\A) $ as the subgroup of projectivities $ \psi_M \in \PGL(\F_q) $ such that $ \psi_M(\A) = \A $. Although it is true that $ \PGL(\A) \cong \GL(\A) / \{ c I_2 : c \in \F_q^* \} $, in general we have that
$$ (\F_q^* \times \GL(\A)) / \mathcal{H}_k \not\cong \F_q^* \times \PGL(\A). $$

\begin{example}
Let $ q=5 $, $ \A = \F_5^* $ and $ k = 2 $. In this case, the automorphism group of the Reed--Solomon code over $ \F_5^* $ of dimension $ 2 $ is isomorphic to
$$ \frac{\F_5^*\times \GL(\A)}
{\{(c^{-1},cI_2):c\in\F_5^*\}}
\cong
\GL(\A). $$
The center of this group is $ \{ cI_2 : c \in \F_5^* \} $, of size $ 4 $. On the other hand, $ \PGL(\A) $ is isomorphic to the dihedral group $ D_8 $. Thus
$$ \F_5^* \times \PGL(\A) \cong C_4 \times D_8, $$
whose center has size $ 4 \cdot 2 = 8 $. 
\end{example}

In some cases, however, we may obtain simpler groups. Next we give a sufficient condition to obtain the group $ \GL(\A) $.

\begin{proposition}
If $ q-1 $ and $ k-1 $ are coprime, then
$$ (\F_q^* \times \GL(\A)) / \mathcal{H}_k \cong \GL(\A). $$
\end{proposition}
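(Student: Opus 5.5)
The plan is to exhibit an explicit isomorphism by composing the inclusion $\GL(\A) \hookrightarrow \F_q^* \times \GL(\A)$, $M \mapsto (1,M)$, with the quotient map modulo $\mathcal{H}_k$. I will show this composite is a bijective group morphism.

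\textbf{Step 1: the composite is a morphism.} This is immediate, since it is the composition of two group morphisms.

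\textbf{Step 2: injectivity.} Suppose $(1,M) \in \mathcal{H}_k$. Then there is $c \in \F_q^*$ with $M = cI_2$ and $c^{-(k-1)} = 1$. The order of $c$ divides both $q-1$ and $k-1$, and these are coprime, so $c = 1$. Hence $M = I_2$.

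\textbf{Step 3: surjectivity.} Given a class of $(\lambda, M)$, I want $c \in \F_q^*$ with $(\lambda, M)\cdot(c^{-(k-1)}, cI_2) = (1, cM)$. This requires $c^{k-1} = \lambda$.

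The only real point of the argument is solving this equation. Since $\F_q^*$ is cyclic of order $q-1$ and $\gcd(k-1,q-1)=1$, the power map $c \mapsto c^{k-1}$ is an endomorphism of a finite group with trivial kernel (by the same order argument as in Step 2). It is therefore bijective, so a suitable $c$ exists. Note that $cM \in \GL(\A)$, because $\psi_{cM} = \psi_M$.

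The case $k=1$ needs no separate treatment: there $\gcd(0,q-1) = q-1$, so the hypothesis forces $q = 2$. Then $\F_q^*$ is trivial and the statement is immediate.

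\textbf{Conclusion.} The quotient is isomorphic to $\GL(\A)$. Concretely, a class is sent to $cM$ for the unique $c$ with $c^{k-1} = \lambda$.
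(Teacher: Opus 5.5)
Your proof is correct and is essentially the paper's argument run in the opposite direction. The paper applies the first isomorphism theorem to the morphism $(\lambda,M)\mapsto\lambda^d M$ from $\F_q^*\times\GL(\A)$ onto $\GL(\A)$, with $d$ an inverse of $k-1$ modulo $q-1$ and kernel $\mathcal{H}_k$. Your inverse map $[(\lambda,M)]\mapsto cM$ with $c^{k-1}=\lambda$ is exactly this morphism, since $c=\lambda^d$, and both arguments rest on the fact that $c\mapsto c^{k-1}$ is a bijection of $\F_q^*$.
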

\begin{proof}
We may choose the inverse $ d \in [q-2] $ of $ k-1 $ modulo $ q-1 $, and deduce that $ \mathcal{H}_k = \{ (c,c^{-d}I_2) : c \in \F_q^* \} $. Map $ (\lambda,M) $ to $ \lambda^d M $. This yields a surjective morphism from $ \F_q^* \times \GL(\A) $ to $ \GL(\A) $. Clearly its kernel is $ \mathcal{H}_k $, giving the desired isomorphism.
\end{proof}

When $ k-1 $ and $ q-1 $ share a common factor, the automorphism group may not be isomorphic to $ \GL(\A) $.

\begin{example}
Take $ q = 7 $, $ \A = \F_7^* $ and $ k = 3 $. In this case, $ \GL(\F_7^*) $ has an element of order $ 12 $:
$$ M = \left( \begin{array}{cc}
0&1\\3&0
\end{array} \right). $$
However, every element of the group $ (\F_7^* \times \GL(\F_7^*)) / \mathcal{H}_3 $ has order at most $ 6 $. This is because for every $ (\lambda, M) \in \F_7^* \times \GL(\F_7^*) $, $ M $ is diagonal or antidiagonal. In the first case, $ (\lambda^6,M^6) = (1,I_2) $, and in the second case, $ M^2 = d I_2 $, for some $ d \in \F_7^* $. Thus $ (\lambda^6,M^6) $ equals $ (1,d^3 I_2) $ modulo $ \mathcal{H}_3 $, which is the identity in the quotient since $ (1,d^3I_2) \in \mathcal{H}_3 $.
\end{example}

We conclude with the whole-field case ($ \A = \F_q $).

\begin{corollary} [\textbf{Whole field}]
Assume that $ q \geq 3 $, $ \A = \F_q $ and $ r \leq k \leq qr- r $. Assume also $ 2 \leq k \leq q-2 $ if $ r=1 $. We have that $ T \in \Aut(\M_{\F_q}(r,k)) $ if and only if there exist $ \lambda, \alpha \in \F_q^* $ and $ \beta \in \F_q $ such that
$$ T((z_a)_{a \in \F_q}) = \lambda (z_{\alpha a + \beta} \cdot \diag(1,\alpha,\ldots, \alpha^{r-1}))_{a \in \F_q}, $$
for $ (z_a)_{a \in \F_q} \in (\F_q^r)^q $. In particular, if $ {\rm AGL}(\F_q) $ denotes the group of affine maps of $ \F_q $, then 
$$ \Aut(\M_{\F_q}(r,k)) \cong \F_q^* \times {\rm AGL}(\F_q) . $$
\end{corollary}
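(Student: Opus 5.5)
This corollary is Theorem \ref{th NRT aut central range} specialized to $ \A = \F_q $, so $ n = q \geq 3 $ and the dimension hypotheses coincide. The plan has three steps: first identify which projectivities stabilize $ \F_q $, then compute the matrices $ R^{(r)}_{a,M,k} $ for those projectivities, and finally read off the group structure.

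\emph{Step 1: the projectivities are affine.} Since $ \psi_M $ is a bijection of $ \mathbb{P}^1(\F_q) = \F_q \cup \{\infty\} $, the condition $ \psi_M(\F_q) = \F_q $ is equivalent to $ \psi_M(\infty) = \infty $. As noted after Definition \ref{def projectivity}, this holds exactly when $ \gamma = 0 $. We then choose the representative with $ \delta = 1 $, so $ \psi_M(x) = \alpha x + \beta $ with $ \alpha \neq 0 $.

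\emph{Step 2: the block matrices.} By Remark \ref{remark affine diagonal matrix}, this representative gives $ R^{(r)}_{a,M,k} = \diag(1,\alpha,\ldots,\alpha^{r-1}) $. One can also check it directly from the formula: $ \gamma = 0 $ kills the off-diagonal entries, and $ c_a = 1 $, $ \Delta = \alpha $. Substituting into Theorem \ref{th NRT aut central range} yields the stated form of $ T $ in both directions of the equivalence.

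\emph{Step 3: the group isomorphism.} Define $ \Phi : \F_q^* \times {\rm AGL}(\F_q) \to \Aut(\M_{\F_q}(r,k)) $ by sending $ (\lambda, x \mapsto \alpha x + \beta) $ to the map $ T $ above.
\begin{itemize}
\item \emph{Surjectivity.} This is Steps 1 and 2.
\item \emph{Injectivity.} The permutation part of $ T $ determines the affine map uniquely, because Theorem \ref{th NRT aut central range} states the projectivity is unique. Given the affine map, the first diagonal entry of each block equals $ \lambda $, so $ \lambda $ is also recovered.
\item \emph{Homomorphism.} Composition of two such maps must correspond to the product of the scalars together with composition of the affine maps. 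The diagonal factors multiply as $ \diag(1,\alpha,\ldots)\diag(1,\alpha',\ldots) = \diag(1,\alpha\alpha',\ldots) $, consistent with the leading coefficient $ \alpha\alpha' $ of the composed affine map.
\end{itemize}

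The main obstacle is the order convention in that last check. $ T $ acts by $ z_a \mapsto z_{\psi(a)} $, which is contravariant in $ \psi $, so $ \Phi $ may only be an anti-homomorphism. If so, I would define $ \Phi $ using $ \psi^{-1} $, or compose with inversion on $ {\rm AGL}(\F_q) $, to obtain a genuine isomorphism.

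Unlike the general case, the normalization $ \delta = 1 $ is a canonical choice of representative. This removes the $ \mathcal{H}_k $ quotient, and the automorphism group becomes a direct product.
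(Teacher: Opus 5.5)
Your proposal is correct and follows the paper's proof. You specialize the central-range theorem to $\A=\F_q$, observe that $\psi_M(\F_q)=\F_q$ forces $\gamma=0$ so one may normalize $\delta=1$ (absorbing the change of representative into $\lambda$), read off $R^{(r)}_{a,M,k}=\diag(1,\alpha,\ldots,\alpha^{r-1})$ from the affine remark, and use the uniqueness of this normalized matrix to obtain a direct product. Your suspicion about the order convention is justified, since one computes $\Phi(\lambda,\psi)\circ\Phi(\mu,\phi)=\Phi(\lambda\mu,\phi\circ\psi)$, and your fix of using $\psi^{-1}$ is exactly the paper's choice of sending $T$ to $(\lambda,\psi_{M^{-1}})$.
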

\begin{proof}
The projectivity $ \psi_M $ in Theorem \ref{th NRT aut central range} associated to $ T $ satisfies $ \psi_M(\F_q) = \F_q $, thus it is affine (see Subsection \ref{subsec projective line}), i.e., we may choose $ \gamma = 0 $ and $ \delta = 1 $. As in Remark \ref{remark affine diagonal matrix}, in this case we have
$$ R^{(r)}_{a,M,k} = \diag(1, \alpha,\alpha^2, \ldots, \alpha^{r-1}), $$
for all $ a \in \F_q $, which gives the desired form of $ T $. Now in this case, for the affine map $ \psi_M $, there is only one matrix $ M \in \GL_2(\F_q) $ of the form
$$ M = \left( \begin{array}{cc}
\alpha & 0 \\
\beta & 1
\end{array} \right), $$
where $ \alpha \in \F_q^* $ and $ \beta \in \F_q $. Sending $ T $ to the pair $ (\lambda,\psi_{M^{-1}}) $ gives the desired group isomorphism ($ \psi_{M^{-1}} $ is chosen since $ \psi_N \circ \psi_M = \psi_{MN} $, thus $ \psi_{M^{-1}} \circ \psi_{N^{-1}} = \psi_{(MN)^{-1}} $, for $ M,N \in \GL_2(\F_q) $).
\end{proof}

\section{The permutation group} \label{sec permutation}

In this section, we compute the permutation group $ \PAut(\M_\A(r,k)) $ using the results from Section \ref{sec classification}. In contrast with the full NRT automorphism group (Theorems \ref{th NRT aut central range}, \ref{th NRT aut low range} and \ref{th NRT aut high range}), the group $ \PAut(\M_\A(r,k)) $ is quite different for the cases $ r = 1 $ and $ r \geq 2 $. In the first case (classical Reed--Solomon codes), it is formed by all affine maps that permute $ \A $ (see also \cite{CampsEtAl2025}), whereas in the pure multiplicity case ($ r \geq 2 $), it is formed by all affine translations that permute $ \A $.

For brevity, we will only treat the case of at least $ 3 $ points. We state the results for almost all dimensions ($ 2 \leq k \leq rn-2 $), but we will only prove the central case ($ r \leq k \leq rn-r $). The proof of the lower and higher ranges is analogous but using Theorems \ref{th NRT aut low range} and \ref{th NRT aut high range}.

\begin{theorem}  \label{th permutation group}
Assume that $ n = |\A| \geq 3 $ and $ 2 \leq k \leq nr- 2 $. A permutation $ \pi : \A \longrightarrow \A $ satisfies that $ P_\pi \in \PAut(\M_\A(r,k)) $ (Definition \ref{def Aut and PAut}) if and only if it is of the following form:
\begin{enumerate}
\item
If $ r=1 $: $ \pi(a) = \alpha a + \beta $, for all $ a \in \A $, for some $ \alpha \in \F_q^* $ and $ \beta \in \F_q $ (affinity).
\item
If $ r \geq 2 $: $ \pi(a) = a + \beta $, for all $ a \in \A $, for some $ \beta \in \F_q $ (translation).
\end{enumerate}
Notice that in all cases, $ \pi(\A) = \A $.
\end{theorem}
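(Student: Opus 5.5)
The plan is to read off the permutation group from the classification of $\Aut(\M_\A(r,k))$, since $P_\pi\in\PAut(\M_\A(r,k))$ exactly when $P_\pi$ is an NRT automorphism with all block matrices $A_a=I_r$. First I dispose of the case distinction. If $r=1$, the hypotheses $2\le k\le nr-2=n-2$ place us entirely in the central range of Theorem \ref{th NRT aut central range}. If $r\ge 2$, the range $2\le k\le nr-2$ splits into three parts: the low range $2\le k\le r-1$, the central range $r\le k\le nr-r$, and the high range $nr-r+1\le k\le nr-2$. These are covered respectively by Theorem \ref{th NRT aut low range}, Theorem \ref{th NRT aut central range} and Theorem \ref{th NRT aut high range}.

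\emph{Sufficiency.} This is a direct computation. For $\pi(a)=\alpha a+\beta$ with $\pi(\A)=\A$, take $M$ with $\gamma=0$, $\delta=1$. By Remark \ref{remark affine diagonal matrix}, $T_M$ has block matrices $\diag(1,\alpha,\ldots,\alpha^{r-1})$ (padded with $I_{r-k}$ if $k<r$). When $r=1$ this is the $1\times 1$ identity, so $T_M=P_{\pi^{-1}}$. When $r\ge2$ and $\alpha=1$, it is $I_r$, so again $T_M$ is a pure permutation. Equivalently, $f(x+\beta)$ has Hasse derivatives $f^{(i)}(a+\beta)$ at $a$. Since $\PAut$ is a group, $P_\pi$ then lies in it as well.

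\emph{Necessity in the central range.} Let $P_\pi\in\PAut$. Then $P_\pi=\lambda T_M$ for some $\lambda$ and some $M$ with $\psi_M(\A)=\A$, and comparing blocks gives $\lambda R^{(r)}_{a,M,k}=I_r$ for every $a\in\A$.
\begin{itemize}
\item For $r=1$ this reads $\lambda c_a^{k-1}=1$. So the polynomial $\lambda(\gamma x+\delta)^{k-1}-1$ vanishes on all $n$ points of $\A$. If $\gamma\neq 0$ it has degree $k-1\le n-3<n$, which is impossible. Hence $\gamma=0$ and $\pi$ is affine.
\item For $r\ge2$, compare the diagonal entries $i=0,1$: $\lambda c_a^{k-1}=1$ and $\lambda\Delta c_a^{k-3}=1$. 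Their quotient gives $c_a^2=\Delta$ for all $a\in\A$. If $\gamma\ne0$, the values $c_a=\gamma a+\delta$ are $n\ge3$ distinct elements, but at most two of them can square to $\Delta$, a contradiction. So $\gamma=0$; normalise $\delta=1$. Then $c_a=1$, so $\lambda=1$, and $\Delta=\alpha=1$. Hence $\pi(a)=a+\beta$.
\end{itemize}

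\emph{Necessity in the low and high ranges ($r\ge 2$).} I would repeat the same computation with the explicit forms in Theorems \ref{th NRT aut low range} and \ref{th NRT aut high range}. In the low range the blocks are $\lambda\widetilde R_{a,M,k}$. Since $k\ge2$, the upper-left $2\times2$ corner is the same as before, so the identical argument applies. In the high range I expect the appendix theorem to again force the form $\lambda T_M$ times block matrices whose first two diagonal entries are $\lambda c_a^{k-1}$ and $\lambda\Delta c_a^{k-3}$.

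The main obstacle is precisely this high-range step: I have to see the exact shape of the extra freedom in Theorem \ref{th NRT aut high range} to check that imposing $A_a=I_r$ still forces $c_a^2$ to be constant. If that shape is inconvenient, I would fall back on a direct argument. The key facts are that, since $k\ge 2$, $L(x)=g$ satisfies $J^{(r)}_a(g)=(\pi(a),1,0,\ldots,0)$, and that $L$ is multiplicative modulo $\prod_{a}(x-a)^r$ (Leibniz rule for Hasse derivatives). I would try to use these to show that $g$ is linear with leading coefficient $1$.
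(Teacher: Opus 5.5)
Your sufficiency argument and your necessity argument in the central and low ranges match the paper's: you compare diagonal entries of $\lambda R^{(r)}_{a,M,k}=I_r$, or of the upper-left block $\lambda R^{(k)}_{a,M,k}$ in Theorem~\ref{th NRT aut low range}. In the low range the lower-right block is an arbitrary upper triangular $B_a$, not $I_{r-k}$ as in $\lambda\widetilde R_{a,M,k}$, but you never use that block, so this slip is harmless.

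\textbf{The gap is the high range} $nr-r+1\le k\le nr-2$ (which occurs for $r\ge 3$). You leave it open, and your guess about it is wrong. In Theorem~\ref{th NRT aut high range} the block is $A_a=\lambda R^{(r)}_{a,M,k}H_a$, where
$$H_a=\begin{pmatrix} B_a & (I_t-B_a)W_{a,s}\\ 0 & I_s\end{pmatrix},\qquad s=nr-k\ge 2,\quad t=r-s\ge 1.$$
Here $B_a$ is an arbitrary invertible upper triangular $t\times t$ matrix. Both factors are upper triangular, so $(A_a)_{ii}=\lambda\,(R^{(r)}_{a,M,k})_{ii}\,(B_a)_{ii}$ for $i<t$. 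Thus the first diagonal entry, and also the second if $t\ge 2$, carries an unknown factor from $B_a$. Imposing $A_a=I_r$ on the first two diagonal entries therefore does not give $c_a^2=\Delta$.

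The missing observation, which is what the paper uses, is to look at the last two diagonal entries. Since $s\ge 2$, the last two diagonal entries of $H_a$ equal $1$. Then $A_a=I_r$ gives
$$\lambda\Delta^{r-2}c_a^{\,k-2r+3}=1=\lambda\Delta^{r-1}c_a^{\,k-2r+1},$$
hence $c_a^2=\Delta$ for all $a\in\A$. From there your central-range argument finishes: $n\ge 3$ values cannot all square to $\Delta$ unless $\gamma=0$, and then $\alpha=\delta$, so $\psi_M$ is a translation.

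Your fallback, via the ring automorphism of $\F_q[x]/(\prod_{a\in\A}(x-a)^r)$ induced by a pure permutation, is only a plan. The natural degree count is inconclusive exactly where you need it. Write $d=\deg L(x)$ with $d\ge 2$. The smallest $j$ with $jd\ge k$ satisfies $j\le k-1$ and $L(x)^j$ has degree in $[k,2k-2]$. This only contradicts $L(x)^j \bmod \prod_a(x-a)^r\in\V_k$ when no reduction occurs, that is, when $2k-2<nr$. That condition fails throughout the high range. As written, the proposal therefore does not prove the theorem for $r\ge 3$ and $nr-r+1\le k\le nr-2$. The diagonal-entry fix above closes the gap.
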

\begin{proof}
We first notice that $ P_\pi \in \PAut(\M_\A(r,k)) $ if and only if $ P_\pi^{-1} = P_{\pi^{-1}} \in \PAut(\M_\A(r,k)) $.

Given $ \beta \in \F_q $, it is easy to see that if $ f(x) \in \V_k $ and $ g(x) = f(x+\beta) \in \V_k $, then $ f^{(i)}(a + \beta) = g^{(i)}(a) $, for all $ i \in \mathbb{N} $ and $ a \in \A $ (Hasse derivatives commute with translations). Hence if $ \pi(a) = a + \beta $, for $ a \in \A $, then $ P_\pi^{-1} \in \PAut(\M_\A(r,k)) $, thus $ P_\pi \in \PAut(\M_\A(r,k)) $. For $ r=1 $, it is straightforward to verify that all affine permutations leave Reed--Solomon codes invariant.

Assume now $ r \leq k \leq rn-r $ (and $ 2 \leq k \leq n-2 $ if $ r=1 $). Let $ P_\pi \in \PAut(\M_\A(r,k)) $, hence $ P_{\pi^{-1}} = P_\pi^{-1} \in \PAut(\M_\A(r,k)) $. Since $ \PAut(\M_\A(r,k)) $ $ \subseteq \Aut(\M_\A(r,k)) $, by Theorem \ref{th NRT aut central range}, there exists $ \lambda \in \F_q^* $ and a projectivity $ \psi_M \in \PGL(\F_q) $ such that $ \psi_M(\A) = \A $ and
$$ P_\pi^{-1}((z_a)_{a \in \A}) = (z_{\pi(a)})_{a \in \A} = \lambda \left(z_{\psi_M(a)} R^{(r)}_{a,M,k} \right)_{a \in \A}, $$
for all $ (z_a)_{a \in \A} \in (\F_q^r)^n $. By uniqueness of $ \psi_M $ (see Theorem \ref{th NRT aut central range}), it must hold that $ \pi(a) = \psi_M(a) $, for $ a \in \A $. Using that the evaluation map $ {\rm ev} : \V_k \longrightarrow \F_q^r $ is surjective ($ k \geq r $), as in the proof of Theorem \ref{th NRT aut central range}, we deduce that
\begin{equation}
\lambda R^{(r)}_{a,M,k} = I_r ,
\label{eq lambda R = I_r for permutations}
\end{equation}
for all $ a \in \A $. 

Consider first the case $ r \geq 2 $. By Proposition \ref{prop projectivities are automorph}, looking at the first two diagonal elements, we have that
$$ \lambda (\gamma a + \delta)^{k-1} = 1 \quad \textrm{and} \quad \lambda \Delta(\gamma a + \delta)^{k-3} = 1. $$
Dividing, we obtain
$$ \frac{\Delta}{(\gamma a + \delta)^2} = 1, \quad \textrm{that is,} \quad (\gamma a + \delta)^2 - \Delta = 0. $$
This holds for all $ a \in \A $ and we are assuming $ n = |\A| \geq 3 $. Hence the polynomial $ (\gamma x + \delta)^2 - \Delta $ has degree at most $ 2 $ but at least $ 3 $ roots, thus it must be zero. Therefore $ \gamma = 0 $ and $ \Delta = \delta^2 $. Since $ \Delta = \alpha \delta - \beta \gamma \neq 0 $, we deduce that $ \delta \neq 0 $ and $ \delta^2 = \alpha \delta $. In other words, $ \delta = \alpha $. Since the projectivity does not change when multiplied by a nonzero scalar, we may choose $ \alpha=\delta=1 $, and the projectivity reads
$$ \psi_M(a) = \frac{\alpha a + \beta}{\gamma a + \delta} = a + \beta, $$
for all $ a \in \A $, which is of the desired form. Finally, for $ \gamma = 0 $ and $ \alpha=\delta=1 $, we have that $ R^{(r)}_{a,M,k} = I_r $ by Remark \ref{remark affine diagonal matrix}. Hence $ \lambda = 1 $ by Equation (\ref{eq lambda R = I_r for permutations}).

Finally, consider $ r= 1 $. In this case Equation~(\ref{eq lambda R = I_r for permutations}) becomes
$$ \lambda(\gamma a+\delta)^{k-1}=1, $$
for $ a \in \A $. Since $2\leq k\leq n-2$, the polynomial
\[
    (\gamma x+\delta)^{k-1}-\lambda^{-1}
\]
has degree at most $k-1<n$ and vanishes at every point of $\A$.
It is therefore the zero polynomial. Since $k-1>0$, its leading
coefficient gives $\gamma=0$. As before, $ \delta \neq 0 $, dividing by it and redefining $ \alpha $ and $ \beta $ as $ \alpha\delta^{-1} $ and $ \beta \delta^{-1} $, respectively, we have for $ a \in \A $,
$$ \pi(a) = \psi_M(a) = \alpha a + \beta. $$

For $ r \geq 3 $ and the ranges $ 2 \leq k \leq r-1 $ and $ nr-r+1 \leq k \leq nr-2 $, the proofs are similar, except we need to use Theorems \ref{th NRT aut low range} and \ref{th NRT aut high range}, respectively, from Appendix \ref{app low high range}. We leave the details to the reader. In the first case, use the first two entries of the diagonal and in the second case, the last two. In this way, one obtains 
$$ (\gamma a+\delta)^2=\Delta, $$
for all $ a \in \A $, and one proceeds as above.
\end{proof}

As groups, we have the following isomorphisms.

\begin{corollary}
Assume that $ n = |\A| \geq 3 $ and $ 2 \leq k \leq nr- 2 $. Define $ {\rm AGL}(\A) $ and $ {\rm T}(\A) $ as the groups of affine maps and translations of $ \F_q $, respectively, that leave $ \A $ invariant. Then we have natural group isomorphisms from Theorem \ref{th permutation group}:
\begin{enumerate}
\item
If $ r=1 $: $ \PAut(\M_\A(1,k)) \cong {\rm AGL}(\A) $.
\item
If $ r \geq 2 $: $ \PAut(\M_\A(r,k)) \cong {\rm T}(\A) $.
\end{enumerate}
\end{corollary}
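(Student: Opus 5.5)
The plan is to read the isomorphisms directly off Theorem \ref{th permutation group}, using the group morphism $\pi \mapsto P_\pi$ from the remark after Definition \ref{def Aut and PAut}. Let $S_\A$ be the symmetric group on $\A$. The map $\pi \mapsto P_\pi$ is an injective group morphism from $S_\A$ into the linear isometries, and it is injective because $n \geq 1$ and $r \geq 1$. By definition, $\PAut(\M_\A(r,k))$ is the image of the subgroup $G = \{\pi \in S_\A : P_\pi(\M_\A(r,k)) = \M_\A(r,k)\}$. Hence $\PAut(\M_\A(r,k)) \cong G$, and it remains to identify $G$ with ${\rm AGL}(\A)$ or ${\rm T}(\A)$.

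Next, define the restriction morphism $\mathrm{res} : {\rm AGL}(\A) \to S_\A$ by $\phi \mapsto \phi|_\A$. This is well defined, since $\phi(\A) = \A$ and $\phi$ is a bijection of $\F_q$. It is clearly compatible with composition.

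I would then check that $\mathrm{res}$ is injective. An affine map $x \mapsto \alpha x + \beta$ that fixes every point of $\A$ makes the polynomial $(\alpha - 1)x + \beta$, of degree at most $1$, vanish on $n \geq 3 > 1$ points. So $\alpha = 1$ and $\beta = 0$.

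By Theorem \ref{th permutation group}:
\begin{itemize}
\item if $r = 1$, the image of $\mathrm{res}$ is exactly $G$. Every affine permutation of $\A$ lies in $G$, and every element of $G$ is the restriction of some affinity, which automatically satisfies $\phi(\A) = \A$. This gives $G \cong {\rm AGL}(\A)$.
\item if $r \geq 2$, restricting $\mathrm{res}$ to the subgroup ${\rm T}(\A)$ gives an injective morphism whose image is exactly $G$, again by the theorem. This gives $G \cong {\rm T}(\A)$.
\end{itemize}

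The only subtle point is in the $r \geq 2$ case: an element of $G$ is described by the theorem only through its values on $\A$, so one must make sure the extension is an honest affine map or translation of $\F_q$ that preserves $\A$. That extension is exactly what the theorem provides. Uniqueness of the extension, which is what makes the correspondence a bijection rather than merely a surjection, comes from the injectivity argument above using $n \geq 3$ (in fact $n \geq 2$ suffices). Composing all these morphisms gives the stated natural isomorphisms.
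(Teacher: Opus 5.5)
Your proposal is correct and follows essentially the paper's route. The paper simply reads the corollary off Theorem~\ref{th permutation group}, and you supply the implicit details: $\pi\mapsto P_\pi$ is an injective morphism identifying $\PAut(\M_\A(r,k))$ with the stabilizer of the code in the symmetric group of $\A$, and restriction to $\A$ is an injective morphism from ${\rm AGL}(\A)$ (resp.\ ${\rm T}(\A)$) onto that stabilizer, since an affine map is determined by its values on $n\geq 2$ points.
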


\section{The folded Hamming metric} \label{sec folded Hamming}

Since multiplicity codes are sometimes considered with the folded Hamming metric \cite{KSY2014, martinez2025linear}, we conclude the manuscript with an algebraic description of the automorphisms of multiplicity codes in this metric. As we will see, all the NRT automorphisms found in Section \ref{sec classification} are also automorphisms for the folded Hamming metric. However, for the latter, there are many more.

\begin{definition}
We define the $ r $-folded Hamming weight in $ (\F_q^r)^n $ as
$$ \wt_{FH}((z_a)_{a \in \A}) = | \{ a \in \A : z_a \neq 0 \} |, $$
for all $ (z_a)_{a \in \A} \in (\F_q^r)^n $. For two $ c,c^\prime \in (\F_q^r)^n $, we define their $ r $-folded Hamming distance as $ \dd_{FH}(c,c^\prime) = \wt_{FH}(c-c^\prime) $. Notice that this is the same as the classical Hamming metric in $ (\F_q^r)^n $ of length $ n $ and alphabet $ \F_q^r $. 
\end{definition}

The linear isometries for the folded Hamming metric were characterized in \cite[Th. 2]{sr-hamming} (see also \cite[Prop. 5]{martinez2025linear}).

\begin{proposition} [\hspace{1pt}\cite{sr-hamming}] \label{prop isometries folded charact}
A linear map $ T : (\F_q^r)^n \longrightarrow (\F_q^r)^n $ is an isometry for the $ r $-folded Hamming metric if and only if there exists a permutation $ \pi : \A \longrightarrow \A $ and arbitrary invertible matrices $ A_a \in \GL_r(\F_q) $, for $ a \in \A $, such that 
$$ T((z_a)_{a \in \A}) = (z_{\pi(a)} A_a)_{a \in \A}, $$
for all $ (z_a)_{a \in \A} \in (\F_q^r)^n $. 
\end{proposition}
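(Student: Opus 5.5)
The statement is the folded Hamming analogue of Proposition \ref{prop isometries NRT charact}: linear isometries of $(\F_q^r)^n$ with respect to weight counting nonzero blocks are exactly monomial-type maps with arbitrary invertible $r\times r$ blocks. I would prove both directions directly.

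\emph{Sufficiency.} If $T((z_a)_{a\in\A}) = (z_{\pi(a)}A_a)_{a\in\A}$ with each $A_a$ invertible, then $z_{\pi(a)}A_a \neq 0$ if and only if $z_{\pi(a)}\neq 0$. Since $\pi$ is a bijection, the number of nonzero blocks is unchanged, so $\wt_{FH}(T(z))=\wt_{FH}(z)$. Linearity is clear.

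\emph{Necessity.} Let $T$ be a linear isometry. For $b\in\A$ let $E_b=\{(z_a)_{a \in \A}: z_a=0 \text{ for } a\neq b\}\cong\F_q^r$ be the $b$th block subspace. Every nonzero vector of $E_b$ has weight $1$, so $T(E_b)$ is an $r$-dimensional subspace all of whose nonzero vectors have weight $1$. I claim such a subspace lies inside a single block $E_c$. Suppose $u\in T(E_b)$ is supported on block $c$ and $v\in T(E_b)$ is supported on block $c'\neq c$. Then $u+v$ is a nonzero element of $T(E_b)$ of weight $2$, a contradiction. Hence $T(E_b)\subseteq E_c$, and by dimension $T(E_b)=E_c$ for a unique $c=\sigma(b)$. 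Since $T$ is injective (it preserves weight, so it has trivial kernel) and the sum $\bigoplus_b E_b$ is direct, $\sigma$ is injective and therefore a permutation of $\A$. Set $\pi=\sigma^{-1}$. Then block $a$ of $T(z)$ depends only on $z_{\pi(a)}$ and does so through a linear bijection $\F_q^r\to\F_q^r$, which we write as $z_{\pi(a)}\mapsto z_{\pi(a)}A_a$ with $A_a\in\GL_r(\F_q)$. Linearity of $T$ and $z=\sum_b z|_{E_b}$ then give the stated formula.

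The only step with real content is showing that a weight-one subspace is confined to a single block, and the sum argument above handles it. Everything else is bookkeeping. Since the statement is cited from \cite{sr-hamming}, this self-contained argument only records why it holds.
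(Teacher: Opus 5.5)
Your proof is correct and complete. Sufficiency is immediate. For necessity, the key lemma is that a nonzero subspace whose nonzero vectors all have folded weight one lies inside a single block: otherwise the sum of two vectors supported on different blocks would have weight two. This gives $T(E_b)=E_{\sigma(b)}$. Injectivity of $T$ together with $E_b\cap E_{b'}=\{0\}$ then makes $\sigma$ a permutation, and the matrices $A_a$ are just the linear bijections $E_{\pi(a)}\to E_a$ obtained by restricting $T$.

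The paper does not prove this statement at all; it quotes it from the literature, citing one reference and pointing to a second. The citation keeps the exposition short. Your argument makes the statement self-contained and elementary: it is the block version of the classical proof that linear Hamming isometries are monomial, and it uses nothing about finiteness of the field.

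It also shows slightly more than stated. It only uses that $T$ is a linear bijection sending vectors of folded weight one to vectors of folded weight one. Any such map already has the form $(z_a)_{a\in\A}\mapsto(z_{\pi(a)}A_a)_{a\in\A}$, and is therefore an isometry.
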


\begin{definition} [\textbf{Folded automorphism groups}] \label{def Aut and PAut folded}
We define the $ r $-folded automorphism group of $ \M_\A(r,k) $ as the group $ \FAut(\M_\A(r,k)) $ of linear isometries $ T : (\F_q^r)^n \longrightarrow (\F_q^r)^n $ for the $ r $-folded Hamming metric such that 
$$ T(\M_\A(r,k)) = \M_\A(r,k). $$
\end{definition}

Clearly, every linear isometry for the NRT metric is also an isometry for the folded Hamming metric (compare Propositions \ref{prop isometries NRT charact} and \ref{prop isometries folded charact}). In particular, we have the chain
$$ \PAut(\M_\A(r,k)) \subseteq \Aut(\M_\A(r,k)) \subseteq \FAut(\M_\A(r,k)), $$
and all the NRT automorphisms from Section \ref{sec classification} are also automorphisms of multiplicity codes for the folded Hamming metric. We now give an algebraic characterization of the latter. Again, for brevity, we will only treat the ``central dimension'' range for at least three evaluation points. Since clearly the NRT and $ r $-folded Hamming metrics both coincide with the classical Hamming metric for $ r=1 $, we only consider $ r \geq 2 $.

Consider as in Definition \ref{def K_a} the subspaces
$$ \K_a = \K_{a,r}=(x-a)^r\V_{k-r}, $$
for $a\in\A$. Denoting the tuple of such subspaces as $ \mathfrak K_{\A,r,k} = (\K_a)_{a\in\A} $, we define
$$ \Aut(\mathfrak K_{\A,r,k})
    =
    \left\{
        (L,\sigma)\in\GL(\V_k)\times\operatorname{Sym}(\A):
        L(\mathcal K_a)=\mathcal K_{\sigma(a)}
        \textrm{ for all }a\in\A
    \right\}, $$
where $ \GL(\V_k) $ is the group of vector space automorphisms of $ \V_k $ and $ \operatorname{Sym}(\A) $ is the symmetric group of $ \A $.

\begin{theorem} [\textbf{Folded Hamming automorphisms}] \label{th folded automorphisms}
Assume that $ n = |\A| \geq 3 $, $ r \geq 2 $ and $ r \leq k \leq nr- r $. Then there is a natural group isomorphism
\[
    \FAut(\M_\A(r,k))
    \cong
    \Aut(\mathfrak K_{\A,r,k}).
\]
More precisely, if $ T \in \FAut(\M_\A(r,k)) $ is given by
\[
    T((z_a)_{a\in\A})
    =
    (z_{\pi(a)}A_a)_{a\in\A},
\]
for all $ (z_a)_{a \in \A} \in (\F_q^r)^n $, for a permutation $ \pi $ of $ \A $ and $ A_a\in\GL_r(\F_q) $, for $ a \in \A $, and $L:\V_k\longrightarrow\V_k$ is the induced polynomial automorphism (adapt Proposition \ref{prop L linear op} to the folded Hamming metric), then the group isomorphism is given by
\[
    T\longmapsto(L,\pi^{-1}).
\]

If $k>r$, the subspaces $\mathcal K_a$, for $a\in\A$, are
distinct and therefore
\[
    \FAut(\M_\A(r,k))
    \cong \{ L \in \GL(\V_k) : L(\{ \K_a : a \in \A \}) = \{ \K_a : a \in \A \} \}.
\]
If $k=r$, then $\mathcal K_a = \{ 0 \} $, for all $ a \in \A $, and
\[
    \FAut(\M_\A(r,r))
    \cong
    \GL(\V_r)\times\operatorname{Sym}(\A).
\]
\end{theorem}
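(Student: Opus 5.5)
Given $T \in \FAut(\M_\A(r,k))$ with permutation $\pi$ and matrices $A_a \in \GL_r(\F_q)$ as in Proposition \ref{prop isometries folded charact}, I will first rebuild the polynomial operator. Exactly as in Proposition \ref{prop L linear op}, there is a unique $L \in \GL(\V_k)$ with $J^{(r)}_a(Lf) = J^{(r)}_{\pi(a)}(f) A_a$ for all $a \in \A$; existence follows from $T(\M_\A(r,k)) = \M_\A(r,k)$, and uniqueness and invertibility follow from Proposition \ref{prop ev is injective}. In the analogue of Lemma \ref{lemma L(K) is K pi -1}, only $t=r$ survives, since $A_a$ is now arbitrary: $cA_a = 0$ if and only if $c=0$. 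Thus $J^{(r)}_a(Lf)=0$ if and only if $J^{(r)}_{\pi(a)}(f)=0$, and Proposition \ref{prop multiplicity hasse derivatives} gives $L(\K_{\pi(a)}) = \K_a$, that is, $L(\K_b) = \K_{\pi^{-1}(b)}$. Hence $(L,\pi^{-1}) \in \Aut(\mathfrak K_{\A,r,k})$. I will also check that $T \mapsto (L,\pi^{-1})$ is a group morphism. If $T$ has data $(\pi,A_a)$ and $T'$ has data $(\pi',A'_a)$, then $T\circ T'$ has permutation $\pi'\circ\pi$, which is compatible with $L_{T\circ T'} = L_T \circ L_{T'}$, using that the $L$ of a composition is forced by uniqueness. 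This takes some care with the order of composition, and the inverse on $\pi$ is exactly what makes it a homomorphism rather than an anti-homomorphism.

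For injectivity, suppose $L = \mathrm{id}$ and $\pi = \mathrm{id}$. Then $J^{(r)}_a(f) = J^{(r)}_a(f)A_a$ for all $f$. Because $k \geq r$, the map $f \mapsto J^{(r)}_a(f)$ is surjective onto $\F_q^r$ (this is the argument used in Theorem \ref{th NRT aut central range}), so $A_a = I_r$ and $T=\mathrm{id}$. For surjectivity, take $(L,\sigma)$ with $L(\K_a) = \K_{\sigma(a)}$ and set $\pi = \sigma^{-1}$. For each $a$, the map $J^{(r)}_a : \V_k \to \F_q^r$ is surjective with kernel $\K_a$, so it induces an isomorphism $\V_k/\K_a \cong \F_q^r$. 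Since $L(\K_{\pi(a)}) = \K_a$, $L$ induces an isomorphism $\V_k/\K_{\pi(a)} \to \V_k/\K_a$. Composing these gives a unique $A_a \in \GL_r(\F_q)$ with $J^{(r)}_a(Lf) = J^{(r)}_{\pi(a)}(f)A_a$. The resulting $T$ is a folded isometry by Proposition \ref{prop isometries folded charact}, and it maps the code into itself, hence onto itself since $L$ is bijective.

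The main obstacle is the reduction for $k>r$. There I need the $\K_a$ to be pairwise distinct and $\sigma$ to be determined by $L$. If $k>r$, the nonzero space $\K_a$ consists of multiples of $(x-a)^r$, and $\K_a = \K_b$ with $a \neq b$ would force every nonzero element to be divisible by $(x-a)^r(x-b)^r$, which has degree $2r$. This fails for $(x-a)^r$ itself, which is a nonzero element of $\K_a$ not divisible by $(x-b)^r$. So the $\K_a$ are distinct, $\sigma$ is recovered from $L$, and projecting to $L$ gives the stated isomorphism. If $k=r$, then every $\K_a = \{0\}$, so the condition on $(L,\sigma)$ is vacuous and $\Aut(\mathfrak K_{\A,r,r}) = \GL(\V_r)\times\operatorname{Sym}(\A)$. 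The hypothesis $k \leq nr-r$ is not needed beyond what is already used; what matters throughout is $r \leq k \leq rn$.
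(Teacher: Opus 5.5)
Your proposal is correct and follows essentially the same route as the paper. You build $L$ from $T$ and get $L(\K_{\pi(a)})=\K_a$ from the invertibility of $A_a$; for the converse you use that $J^{(r)}_a$ is surjective with kernel $\K_a$ (since $k\geq r$) to induce a unique $A_a$ through the quotients, and for $k>r$ you use that the $\K_a$ are distinct. Your explicit homomorphism and injectivity checks are fine, and your remark that only $r\leq k\leq rn$ is used is accurate: the paper's proof likewise never invokes $n\geq 3$ or $k\leq nr-r$.
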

\begin{proof}
Let $ T \in\FAut(\M_\A(r,k)) $ be given by
\[
    T((z_a)_{a\in\A})
    =
    (z_{\pi(a)}A_a)_{a\in\A},
\]
for all $ (z_a)_{a \in \A} \in (\F_q^r)^n $. The vector space automorphism $ L : \V_k \longrightarrow \V_k $ acts as
\begin{equation}
J_a^{(r)}(Lf)
    =
    J_{\pi(a)}^{(r)}(f)A_a,
\label{eq L for folded}
\end{equation}
for all $ f \in \V_k $ and $ a \in \A $ (adapt Proposition \ref{prop L linear op} to the folded Hamming metric). Since $A_a$ is invertible, then we deduce from (\ref{eq L for folded}) that $ J_a^{(r)}(Lf) = 0 $ if and only if $ J_{\pi(a)}^{(r)}(f) = 0 $, which means that $ L(\K_{\pi(a)})=\K_a $. In other words, if $\sigma=\pi^{-1}$,
\[
    L(\K_b)=\K_{\sigma(b)},
\]
for all $ b \in \A $. Hence $(L,\sigma)\in\Aut(\mathfrak K_{\A,r,k})$.

Conversely, let $ (L,\sigma)\in\Aut(\mathfrak K_{\A,r,k}) $ and define $\pi=\sigma^{-1}$. Fix $ a \in \A $. Since $k\geq r$, the linear map
\[
    J_a^{(r)}:\V_k\longrightarrow\F_q^r
\]
is surjective and has kernel $\K_a = L(\K_{\pi(a)}) $. Hence there exists a unique matrix $ A_a \in \GL_r(\F_q) $ and induced vector space isomorphisms that make the following diagram commute:
\[
\begin{tikzcd}[column sep=large,row sep=large]
\V_k/\K_{\pi(a)}
    \arrow[r,"\overline{L}"]
    \arrow[d,"\overline{J}_{\pi(a)}^{(r)}"']
&
\V_k/\K_{a}
    \arrow[d,"\overline{J}_{a}^{(r)}"]
\\
\F_q^r
    \arrow[r,"\,A_a"']
&
\F_q^r ,
\end{tikzcd}
\]
where we identify the matrix $ A_a $ with the linear map that consists in multiplying by $ A_a $ (on the right, since we define $ \F_q^r = \F_q^{1 \times r} $). In other words, we have that
$$ J_a^{(r)}(Lf)
    =
    J_{\pi(a)}^{(r)}(f)A_a, $$
for every $f\in\V_k$. The resulting linear map $ T : (\F_q^r)^n \longrightarrow (\F_q^r)^n $, given by
\[
    T((z_a)_{a\in\A})
    =
    (z_{\pi(a)}A_a)_{a\in\A},
\]
for $ (z_a)_{a \in \A} \in (\F_q^r)^n $, is an isometry for the folded Hamming metric and, by construction,
$ T \in \FAut(\M_\A(r,k))$. This proves the first assertion.

The two maps above are group morphisms and inverses of each other. In the direction $ \FAut(\M_\A(r,k)) \longrightarrow \Aut(\mathfrak K_{\A,r,k}) $, it is because the permutation associated to $ T $ is unique and the induced map $ L $ is unique by injectivity of evaluation (Proposition \ref{prop L linear op}). In the opposite direction, it is because the matrices $ A_a $, for $ a \in \A $, that make the above diagrams commute are also unique.

Finally, if $k>r$, then $ \mathcal K_a=(x-a)^r\V_{k-r} $ is nonzero, and the spaces $\mathcal K_a$ are distinct, for $ a \in \A $.
Therefore the permutation of their labels is determined by $L$,
which gives the stated set-stabilizer description. If $k=r$, every
$\mathcal K_a$ is zero, and the last assertion follows immediately.
\end{proof}

We conclude by using the previous characterization to produce an example of an automorphism for the folded Hamming metric that is not an automorphism for the NRT metric.

\begin{example}
Let $\A=\F_3$ and $r=k=2$. Since
\[
    \K_a = \K_{a,2}=(x-a)^2\V_0=0,
\]
for every $a\in\F_3$, Theorem~\ref{th folded automorphisms} shows that
every $L\in\GL(\V_2)$, together with the identity permutation of
$\F_3$, induces a folded-Hamming automorphism of
$\M_{\F_3}(2,2)$.
Consider the vector space automorphism $ L : \V_2 \longrightarrow \V_2 $ given by
\[
    L(c_0+c_1x)
    =
    (c_0+c_1)+(c_0+2c_1)x,
\]
for $ c_0,c_1 \in \F_3 $. In the canonical ordered basis $ (1,x) $ of $ \V_2 $, the matrix of $L$ is 
\[
    H=
    \left( \begin{array}{cc}
     1&1\\
     1&2
    \end{array} \right) \in \GL_2(\F_3) .
\]
Fix $ a \in \F_3 $. The matrix of $ J_a^{(2)} : \V_2 \longrightarrow \F_3^2 $ in the ordered basis $ (1,x) $ and the canonical ordered basis of $ \F_3^2 $ is
\[
    E_a=
    \left( \begin{array}{cc}
    1&0\\
    a&1
    \end{array} \right).
\]
That is, $ J_a^{(2)}(f)=(c_0,c_1)E_a $, for $ (c_0,c_1) \in \F_3^2 $. Hence the corresponding matrix $ A_a \in \GL_2(\F_3) $ is
\[
    A_a=E_a^{-1}HE_a
    =
    \left( \begin{array}{cc}
        1+a&1\\
        1+a-a^2&2-a
    \end{array} \right).
\]
Explicitly,
\[
    A_0=
    \left( \begin{array}{cc}
        1&1\\
        1&2
    \end{array} \right),
    \quad
    A_1=
    \left( \begin{array}{cc}
        2&1\\
        1&1
    \end{array} \right) \quad \textrm{and} \quad
    A_2=
    \left( \begin{array}{cc}
        0&1\\
        2&0
    \end{array} \right).
\]
The corresponding automorphism $ T : (\F_3^2)^3 \longrightarrow (\F_3^2)^3 $, given by
\[
    T((z_a)_{a\in\F_3})=(z_aA_a)_{a\in\F_3},
\]
for $ (z_a)_{a \in \F_3} \in (\F_3^2)^3 $,
belongs to $\FAut(\M_{\F_3}(2,2))$. However, $A_0$ and $A_1$
have no zero entries, and in particular are neither upper nor lower
triangular. Thus $T$ is not an NRT automorphism, i.e., it does not belong to $\Aut(\M_{\F_3}(2,2))$.
\end{example}

\section*{Disclosure on AI usage}

The full permutation groups and partial results on the automorphism groups were first obtained by the authors. The idea of focusing on the NRT metric instead of the folded Hamming metric is due to the authors, after realizing the classical characterization does not hold for the folded Hamming metric. OpenAI's ChatGPT-5.6 Sol was used to fill the gaps in the proof to obtain the NRT automorphism group in the general case. AI was not used for the writing of the article. The authors take responsibility for the correctness of the results.

\section*{Acknowledgements}

This work was started while H. H. L{\'o}pez, R. San Jos{\'e} and E. Camps Moreno were visiting the IMUVa-Mathematics Research Institute, University of Valladolid, Spain.

This work has been partially supported by Grant PID2022-138906NB-C21 funded by MICIU/AEI/ 10.13039/501100011033 and by ERDF/EU and Grant CLU-2025-1-02-IMUVA by the Department of Education of the Junta de Castilla y León and FEDER Funds. Hiram H. L\'opez was partially supported by the NSF grants DMS-2401558 and DMS-2502705 and the Commonwealth Cyber Initiative. Rodrigo San-Jos\'e was also partially supported by the NSF grant DMS-2401558, the Commonwealth Cyber Initiative, and an AMS-Simons Travel Grant.

\bibliographystyle{plain}

\begin{appendix}

\section*{Appendices}

\section{The low and high dimensional cases} \label{app low high range}

In this appendix, we treat the ``low and high'' dimensional ranges. In these cases, the automorphisms have more degrees of freedom. They cannot be characterized simply by a projectivity up to a nonzero scalar.

\begin{theorem} [\textbf{Low range}] \label{th NRT aut low range}
Assume that $ n = |\A| \geq 3 $, $ r \geq 3 $ and $ 2 \leq k \leq r-1 $. A linear isometry $ T : (\F_q^r)^n \longrightarrow (\F_q^r)^n $ for the NRT metric lies in $ \Aut(\M_\A(r,k)) $ if and only if there exists $ \lambda \in \F_q^* $, a projectivity $ \psi_M \in \PGL(\F_q) $ such that $ \psi_M(\A) = \A $, and arbitrary upper triangular matrices $ B_a \in \GL_{r-k}(\F_q) $, for $ a \in \A $, such that 
$$ T((z_a)_{a \in \A}) = (z_{\psi_M(a)} A_a)_{a \in A}, $$
for all $ (z_a)_{a \in \A} \in (\F_q^r)^n $, and for the upper triangular matrices
$$ A_a = \left( \begin{array}{cc}
\lambda R^{(k)}_{a,M,k} & 0 \\
0 & B_a 
\end{array} \right) \in \GL_r(\F_q), $$
for all $ a \in \A $, where the invertible upper triangular matrices $ R^{(k)}_{a,M,k} \in \F_q^{k \times k} $ are given in Proposition \ref{prop projectivities are automorph}. Finally, the projectivity $ \psi_M $ is unique.
\end{theorem}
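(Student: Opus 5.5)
Both directions of the low-range statement can be obtained by combining Corollary \ref{cor T_M given by projectivity} (case $k<r$) with the machinery of Sections \ref{sec permutation is projectivity} and \ref{sec d degree graph}, which already holds in this range. Note that the hypotheses of those sections hold here: $n\geq 3$, $2\leq k\leq r-1\leq rn-r$, and $r\geq 3\geq 2$, so (\ref{eq assumption degree graph sec}) is satisfied.

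For sufficiency, take $T$ of the stated form. Every $f\in\V_k$ has $J^{(r)}_b(f)=(J^{(k)}_b(f),0,\ldots,0)$. Hence Proposition \ref{prop projectivities are automorph} with $m=k$ gives
$$ J^{(r)}_{\psi_M(a)}(f)A_a=\bigl(\lambda J^{(k)}_{\psi_M(a)}(f)R^{(k)}_{a,M,k},0\bigr)=J^{(r)}_a(\lambda\Psi_{M,k}f). $$
The blocks $B_a$ act only on zero coordinates. Therefore $T(\M_\A(r,k))\subseteq\M_\A(r,k)$, and equality follows since $T$ is injective. The matrices $A_a$ are upper triangular and invertible, so $T$ is an NRT isometry by Proposition \ref{prop isometries NRT charact}.

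For necessity, take $T\in\Aut(\M_\A(r,k))$ with permutation $\pi$ and matrices $A_a$. By Proposition \ref{prop pi is a projectivity}, which only requires $2\leq k\leq rn-r$, the permutation satisfies $\pi=\psi_M$ on $\A$ with $\psi_M(\A)=\A$, and $\psi_M$ is unique. Next set $T'=T\circ T_M^{-1}$, with $T_M$ the $k<r$ version of Corollary \ref{cor T_M given by projectivity}. Its permutation is the identity, so Theorem \ref{th automorph with pi = id is lambda} gives $L'(f)=\lambda f$. Writing $A'_a$ for the matrices of $T'$, we get $\lambda J^{(r)}_a(f)=J^{(r)}_a(f)A'_a$ for all $f\in\V_k$.

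The main point is that $J^{(k)}_a:\V_k\to\F_q^k$ is now surjective (indeed bijective), whereas $J^{(r)}_a$ is not. Its image is exactly $\F_q^k\times\{0\}^{r-k}$. Write $A'_a$ in blocks $\begin{pmatrix}X&Y\\0&B\end{pmatrix}$, where the lower-left block vanishes by upper triangularity. Then the identity above gives $(z,0)A'_a=(zX,zY)=(\lambda z,0)$ for all $z\in\F_q^k$. Hence $X=\lambda I_k$ and $Y=0$, while $B=:B'_a$ is an arbitrary invertible upper triangular matrix.

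Finally, write $T=T'\circ T_M$ and compose the block matrices. The matrix of $T_M$ at $a$ is $\widetilde R_{a,M,k}=\diag(R^{(k)}_{a,M,k},I_{r-k})$. The composite $T'\circ T_M$ acts at $a$ by $z\mapsto z_{\psi_M(a)}\widetilde R_{a,M,k}A'_{\psi_M(a)}$. This product equals $\diag\bigl(\lambda R^{(k)}_{a,M,k},B'_{\psi_M(a)}\bigr)$, where $\lambda$ is a scalar and the product of block-diagonal matrices is block diagonal. Setting $B_a=B'_{\psi_M(a)}$ yields the claimed form. The only real care needed is tracking which index ($a$ or $\psi_M(a)$) labels the $A'$ block under composition, and this relabeling is harmless because the $B_a$ are arbitrary. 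Uniqueness of $\psi_M$ follows from Proposition \ref{prop pi is a projectivity}.
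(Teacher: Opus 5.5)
Your proof is correct and follows essentially the same route as the paper's: both use Proposition \ref{prop pi is a projectivity} and Theorem \ref{th automorph with pi = id is lambda} to reach $L=\lambda\Psi_{M,k}$, and then use that $J^{(k)}_b$ is an isomorphism onto $\F_q^k\times\{0\}^{r-k}$ to force the block form. The only difference is that the paper compares blocks directly on $A_a$, while you compare them on $T\circ T_M^{-1}$ and compose back; the index point you flag resolves as $T'\circ T_M$ acting at $a$ by $z_{\psi_M(a)}\widetilde R_{a,M,k}A'_a$, so $B_a=B'_a$.
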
 
\begin{proof}
The proof is essentially the same as in Theorem \ref{th NRT aut central range}, since we may still use Proposition \ref{prop pi is a projectivity} and Theorem \ref{th automorph with pi = id is lambda}, because we are assuming that $ n \geq 3 $ and $ 2 \leq k \leq r-1 $.

Let $ T \in \Aut(\M_\A(r,k)) $. Hence there exist $ \lambda \in \F_q^* $, a projectivity $ \psi_M \in \PGL(\F_q) $ such that $ \psi_M(\A) = \A $, and with associated map $ L : \V_k \longrightarrow \V_k $, as in Proposition \ref{prop L linear op}, which acts as
$$ L(f)=\lambda\Psi_{M,k}(f), $$
for $ f \in \V_k $. However, notice that now if $ f \in \V_k $, since $ k \leq r-1 $, then $ f^{(i)} = 0 $, for all $ i \in [k,r] $ (which is a nonempty interval). Thus for any $ (z_a)_{a \in \A} \in \M_\A(r,k) $, the $ a $th block is
$$ z_a = (z_{a,0}, z_{a,1}, \ldots, z_{a,k-1}, 0,0, \ldots, 0) \in \F_q^r. $$
For $ a \in \A $, we may subdivide the upper triangular matrix $ A_a \in \GL_r(\F_q) $ associated to $ T $ as
$$ A_a = \left( \begin{array}{cc}
M_a & N_a \\
0 & B_a
\end{array} \right), $$
for some upper triangular $ M_a \in \GL_k(\F_q) $ and $ B_a \in \GL_{r-k}(\F_q) $, and some arbitrary matrix $ N_a \in \F_q^{k \times (r-k)} $.  By Proposition~\ref{prop L linear op},
$$ J_a^{(k)}(Lf) = J_{\psi_M(a)}^{(k)}(f)M_a. $$
On the other hand, Proposition~\ref{prop projectivities are automorph}
gives
$$ J_a^{(k)}(Lf) = \lambda J_{\psi_M(a)}^{(k)}(f) R_{a,M,k}^{(k)}. $$
Now in contrast with Theorem \ref{th NRT aut central range}, the map $ J_{\psi_M(a)}^{(r)}:\V_k \longrightarrow \F_q^k \times \{0 \}^{r-k} $ is a vector space isomorphism. Hence we conclude that 
$$ M_a=\lambda R_{a,M,k}^{(k)} \quad \textrm{and} \quad N_a = 0, $$
but $ B_a $ may be arbitrary (since it only interacts with the last $ r-k $ zeros of every block), for all $ a \in \A $. Conversely, by the same arguments, any map $ T $ of this form belongs to $ \Aut(\M_\A(r,k)) $.
\end{proof} 

The ``high-dimensional'' range equally has more degrees of freedom than the central range. Since we were assuming $ 2 \leq k \leq rn - r $ in most of Sections \ref{sec permutation is projectivity} and \ref{sec d degree graph}, we will use a duality argument to go from Theorem \ref{th NRT aut low range} (the low range) to the high range.

Duals of univariate multiplicity codes were first given in \cite{RVW2024}. See also \cite[Prop. B.3]{DualsMultiplicity2026}. They have also been independently obtained later in \cite{LiEtAl2026}. We consider duals with respect to the usual $ \F_q $-linear inner product of $ (\F_q^r)^n = \F_q^{rn} $.

\begin{lemma}[\hspace{1pt}\cite{RVW2024,DualsMultiplicity2026}] \label{lemma duals multiplicity}
Let $ h_{a,i} \in \V_{rn} $ be the unique polynomial in $ \V_{rn} $ such that
$$ h_{a,i}^{(j)}(b)=\delta_{a,b}\delta_{i,j}, $$
where $ \delta_{a,b} $ and $ \delta_{i,j} $ denote the Kronecker delta, for all $ a,b \in \A $ and $ i,j \in [0,r-1] $. Define $ \lambda_{a,i} \in \F_q $ be the coefficient of $ x^{rn-1} $ in $ h_{a,i} $, for all $ a \in \A $ and $ i \in [0,r-1] $. Consider the upper anti-triangular matrix $ M_a \in \GL_r(\F_q) $ whose $ (i,j) $th entry is $ \lambda_{a,i+j} $ if $ i+j \leq r-1 $ and zero otherwise, for $ i,j \in [0,r-1] $. Then
$$ \M_\A(r,k)^\perp = \varphi_\A ( \M_\A(r,rn-k)), $$
where $ \varphi_\A( (z_a)_{a \in \A} ) = (z_a M_a)_{a \in \A} $, for all $ (z_a)_{a \in \A} \in (\F_q^r)^n $.
\end{lemma}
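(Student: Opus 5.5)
We need to prove the duality lemma: $\M_\A(r,k)^\perp = \varphi_\A(\M_\A(r,rn-k))$. The plan is to prove one inclusion and then close by counting dimensions. By Proposition \ref{prop ev is injective}, $\dim \M_\A(r,k) = k$ and $\dim \M_\A(r,rn-k) = rn-k$. Each $M_a$ is upper anti-triangular with anti-diagonal entries $\lambda_{a,r-1}$, and these are nonzero (shown below), so $\varphi_\A$ is invertible. Both sides therefore have dimension $rn-k$, and it suffices to show that every $c = (J^{(r)}_a(f))_a$ with $f \in \V_k$ is orthogonal to every $\varphi_\A((J^{(r)}_a(g))_a)$ with $g \in \V_{rn-k}$.

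The key identity comes from Hermite interpolation. Since ${\rm ev}_\A : \V_{rn} \to (\F_q^r)^n$ is an isomorphism, every $h \in \V_{rn}$ satisfies
\[ h = \sum_{a \in \A} \sum_{i=0}^{r-1} h^{(i)}(a) h_{a,i}. \]
Comparing coefficients of $x^{rn-1}$ gives
\[ \mathrm{coef}_{x^{rn-1}}(h) = \sum_{a,i} \lambda_{a,i} h^{(i)}(a). \]
Apply this to $h = fg$, which has degree at most $rn-2 < rn-1$, so the left-hand side is $0$. The Leibniz rule for Hasse derivatives, $(fg)^{(m)} = \sum_{i+j=m} f^{(i)} g^{(j)}$, then yields
\[ 0 = \sum_{a \in \A} \sum_{i+j \le r-1} \lambda_{a,i+j} f^{(i)}(a) g^{(j)}(a) = \sum_{a \in \A} J^{(r)}_a(f)\, M_a\, J^{(r)}_a(g)^\intercal. \]
This is exactly the standard inner product of $c$ with $\varphi_\A$ of the $g$-codeword, provided $M_a$ is symmetric (it is, since its entries depend only on $i+j$). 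So the bilinear form matches the definition of $\varphi_\A$ exactly.

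It remains to show $\lambda_{a,r-1} \neq 0$, which is the step needing care. The polynomial $h_{a,r-1}$ vanishes to order $r$ at every $b \neq a$ and to order $r-1$ at $a$. Hence it is divisible by
\[ P = (x-a)^{r-1}\prod_{b \neq a}(x-b)^r, \]
which has degree $rn-1$. So $h_{a,r-1} = \mu P$ for some scalar $\mu$, with $\mu \neq 0$ because $h_{a,r-1}^{(r-1)}(a) = 1$. Its leading coefficient is $\mu \neq 0$, so $\lambda_{a,r-1} \neq 0$.

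Finally, the edge cases $k = 0$ and $k = rn$ are trivial under the convention $\V_0 = \{0\}$, though the lemma only needs the range $k \in [rn]$. The main obstacle is keeping the index conventions consistent, namely the row-vector convention and the $i+j$ indexing; the Leibniz rule for Hasse derivatives follows immediately from Definition \ref{def hasse derivative} by multiplying the expansions of $f(x+y)$ and $g(x+y)$.
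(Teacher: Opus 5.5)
Your proof is correct and complete. The paper does not prove this lemma itself; it quotes it from the cited works, so there is no in-paper argument to compare yours with.

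Your route has four steps:
\begin{enumerate}
\item expand $fg$ in the Hermite basis $\{h_{a,i}\}$;
\item read off the coefficient of $x^{rn-1}$, which vanishes because $\deg(fg)\le rn-2$;
\item use the Hasse--Leibniz rule to rewrite this as $\sum_{a} J^{(r)}_a(f)\,M_a\,J^{(r)}_a(g)^\intercal=0$, where $M_a$ is a symmetric (Hankel) matrix;
\item close with a dimension count, after showing $\lambda_{a,r-1}\neq 0$.
\end{enumerate}
This is the natural extension of the classical proof of generalized Reed--Solomon duality. It would make the paper self-contained on this point.

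Your divisibility argument for $\lambda_{a,r-1}\neq 0$ actually gives an explicit value. Since $1=h_{a,r-1}^{(r-1)}(a)=\mu\prod_{b\neq a}(a-b)^r$, you get $\lambda_{a,r-1}=\prod_{b\neq a}(a-b)^{-r}$. For $r=1$ this is $G_\A'(a)^{-1}$, which matches the formula $U_a=M_a=(G_\A'(a)^{-1})$ used in Proposition \ref{prop aut codimension one}.
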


Define now the $ r \times r $ invertible matrices over $ \F_q $,
$$ R_r = \left( \begin{array}{ccccc}
0 & 0 & \ldots & 0 & 1 \\
0 & 0 & \ldots & 1 & 0 \\
\vdots & \vdots & & \vdots & \vdots \\
0 & 1 & \ldots & 0 & 0 \\
1 & 0 & \ldots & 0 & 0
\end{array} \right) \quad \textrm{and} \quad U_a = M_a R_r = \left( \begin{array}{ccccc}
\lambda_{a,r-1} & \lambda_{a,r-2} & \ldots & \lambda_{a,1} & \lambda_{a,0} \\
0 & \lambda_{a,r-1} & \ldots & \lambda_{a,2} & \lambda_{a,1} \\
\vdots & \vdots & \ddots & \vdots & \vdots \\
0 & 0 & \ldots & \lambda_{a,r-1} & \lambda_{a,r-2} \\
0 & 0 & \ldots & 0 & \lambda_{a,r-1}
\end{array} \right), $$
for $ a \in \A $. Notice that $U_a$ is an upper triangular Toeplitz matrix. In particular, for all $ a \in \A $,
$$ R_rU_a^\intercal R_r=U_a \quad \textrm{and} \quad R_rU_a^{-\intercal}R_r=U_a^{-1}. $$
Define $ s = nr-k \in [2,r-1] $ and $ t = r-s \in [1, r-2] $. For all $ a \in\A $, write
$$ U_a = \left( \begin{array}{cc}
P_a & Q_a\\
0 & S_a
\end{array} \right), $$
where $ P_a \in \GL_t(\F_q) $ and $ S_a \in \GL_s(\F_q) $ are upper triangular, and define
$$ W_{a,s} = Q_aS_a^{-1} \in \F_q^{t \times s}. $$

\begin{theorem} [\textbf{High range}] \label{th NRT aut high range}
Assume that $ n = |\A| \geq 3 $, $ r \geq 3 $ and $ nr-r +1 \leq k \leq nr-2 $. A linear isometry $ T : (\F_q^r)^n \longrightarrow (\F_q^r)^n $ for the NRT metric lies in $ \Aut(\M_\A(r,k)) $ if and only if there exists $ \lambda \in \F_q^* $, a projectivity $ \psi_M \in \PGL(\F_q) $ such that $ \psi_M(\A) = \A $, and arbitrary upper triangular matrices $ B_a \in \GL_t(\F_q) $, for $ a \in \A $, such that 
$$ T((z_a)_{a \in \A}) = (z_{\psi_M(a)} A_a)_{a \in A}, $$
for all $ (z_a)_{a \in \A} \in (\F_q^r)^n $, and for the upper triangular matrices
\begin{equation}
A_a = \lambda R^{(r)}_{a,M,k} \left( \begin{array}{cc}
B_a & (I_t - B_a)W_{a,s} \\
0 & I_s 
\end{array} \right) \in \GL_r(\F_q),
\label{eq A_a for high range}
\end{equation}
for all $ a \in \A $, where the invertible upper triangular matrices $ R^{(r)}_{a,M,k} \in \F_q^{r \times r} $ are given in Proposition \ref{prop projectivities are automorph}. Finally, the projectivity $ \psi_M $ is unique.
\end{theorem}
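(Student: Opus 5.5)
The plan is to reduce the high range to the low range (Theorem \ref{th NRT aut low range}) by duality, using Lemma \ref{lemma duals multiplicity}. The dual code has dimension $s = nr-k \in [2,r-1]$, which lies in the low range.

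\emph{Step 1: duality of automorphisms.} If $T(z) = (z_{\pi(a)}A_a)_a$ preserves a code $\C$, then the adjoint-inverse $T^{-\intercal}$, given by $(z_{\pi(a)} A_a^{-\intercal})_a$ with the indices suitably re-permuted, preserves $\C^\perp$. The matrices $A_a^{-\intercal}$ are lower triangular. To return to upper triangular matrices, conjugate each block by $R_r$: the matrices $R_r A_a^{-\intercal} R_r$ are upper triangular and invertible. Next write $\M_\A(r,k)^\perp = \varphi_\A(\M_\A(r,s))$, where $\varphi_\A$ acts blockwise by $M_a = U_a R_r$.

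Putting this together, $T \in \Aut(\M_\A(r,k))$ if and only if the map
$$ T^\ast : (z_a)_a \longmapsto \left(z_{\pi(a)}\, U_{\pi(a)} R_r A_a^{-\intercal} R_r U_a^{-1}\right)_a $$
(up to the exact index convention) lies in $\Aut(\M_\A(r,s))$. This $T^\ast$ is again an NRT isometry, since $U_a$ is upper triangular. The correspondence $T \leftrightarrow T^\ast$ is a bijection between the two automorphism groups, with the same permutation $\pi$.

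\emph{Step 2: apply the low range.} Theorem \ref{th NRT aut low range} with dimension $s$ gives that $\pi = \psi_M$ restricted to $\A$, with $\psi_M(\A)=\A$ and $\psi_M$ unique. It also gives
$$ U_{\pi(a)} R_r A_a^{-\intercal} R_r U_a^{-1} = \begin{pmatrix} \mu R^{(s)}_{a,M,s} & 0\\ 0 & B'_a\end{pmatrix} $$
with $B'_a \in \GL_t(\F_q)$ arbitrary upper triangular. Solving for $A_a$ then expresses $A_a$ through $U_a$, $U_{\pi(a)}$, $R^{(s)}_{a,M,s}$ and $B'_a$. Uniqueness of the projectivity transfers directly.

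\emph{Step 3: identify the form \eqref{eq A_a for high range}.} Rather than compute the formula above directly, I would use a cleaner route. Since $nr-r+1\le k$ means $k\ge r$, Corollary \ref{cor T_M given by projectivity} gives $T_M \in \Aut(\M_\A(r,k))$. Hence $T' = (\lambda T_M)^{-1}T$ has identity permutation. The same duality argument applied to $T'$ shows that its dual $T'^\ast$ has identity permutation and blocks $\mathrm{diag}(\mu I_s, B''_a)$; here I use that for the identity permutation $L$ is a scalar by Theorem \ref{th automorph with pi = id is lambda} applied at dimension $s$. Choosing $\lambda$ to absorb $\mu$, each block $A'_a$ of $T'$ satisfies
$$ U_a R_r A_a'^{-\intercal} R_r U_a^{-1} = \begin{pmatrix} I_s & 0\\ 0 & B''_a\end{pmatrix}. $$
This gives
$$ A'_a = \left(R_r U_a^{-1}\begin{pmatrix} I_s & 0\\ 0& B''_a\end{pmatrix}U_a R_r\right)^{-\intercal}. $$
Using $R_rU_a^\intercal R_r=U_a$, this equals $U_a^{-1}\begin{pmatrix} B_a & 0\\ 0 & I_s\end{pmatrix}U_a$, where $B_a = R_t B''^{-\intercal}_a R_t$ is again an arbitrary upper triangular matrix; the reversal swaps the two diagonal blocks. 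Multiplying out with $U_a=\begin{pmatrix}P_a&Q_a\\0&S_a\end{pmatrix}$ should yield $\begin{pmatrix}\tilde B_a & (I_t-\tilde B_a)Q_aS_a^{-1}\\ 0& I_s\end{pmatrix}$ with $\tilde B_a=P_a^{-1}B_aP_a$, which is still arbitrary upper triangular. This is exactly the displayed form with $W_{a,s}$. Composing with $\lambda T_M$ then gives $A_a = \lambda R^{(r)}_{a,M,k}\,(\cdot)$.

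The converse direction follows by running the equivalences backwards, or by directly checking that such blocks fix the kernel structure. The main obstacle is the bookkeeping in Step 3: getting the transposes, the reversal $R_r$ and the permutation indices consistent so that the result is exactly $(I_t-B_a)W_{a,s}$ rather than a conjugated variant. I would first verify it in the case $\pi=\mathrm{id}$, $M=I_2$, where it reduces to the block computation above.
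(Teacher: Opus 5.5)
Your route is the paper's. Your $T^\ast$ is exactly the paper's isomorphism $\theta$, with the same blocks $C_a=U_{\pi(a)}R_r(A_a^{-1})^\intercal R_rU_a^{-1}$. It is followed, as in the paper, by the low-range theorem at dimension $s$, normalization by $\lambda T_M$, and the block computation with $U_a=\begin{pmatrix}P_a&Q_a\\0&S_a\end{pmatrix}$. The structure is sound, but Step 3 contains two bookkeeping errors that, as written, would not produce \eqref{eq A_a for high range}.

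\textbf{The conjugation is reversed.} Start from your correct expression $A'_a=(R_rU_a^{-1}DU_aR_r)^{-\intercal}$ with $D=\diag(I_s,B''_a)$. Using $R_rU_a^\intercal R_r=U_a$ and $R_rU_a^{-\intercal}R_r=U_a^{-1}$, you get
\[
A'_a = R_rU_a^\intercal D^{-\intercal}U_a^{-\intercal}R_r = U_a\,(R_rD^{-\intercal}R_r)\,U_a^{-1} = U_a\begin{pmatrix} B_a & 0\\ 0 & I_s\end{pmatrix}U_a^{-1},
\]
not $U_a^{-1}\diag(B_a,I_s)U_a$. This is not cosmetic. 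Your order gives
\[
U_a^{-1}\diag(B_a,I_s)\,U_a=\begin{pmatrix}\tilde B_a&(\tilde B_a-I_t)P_a^{-1}Q_a\\0&I_s\end{pmatrix},\qquad \tilde B_a=P_a^{-1}B_aP_a,
\]
whose off-diagonal block involves $P_a^{-1}Q_a$ rather than $W_{a,s}=Q_aS_a^{-1}$. With the correct order you get $\tilde B_a=P_aB_aP_a^{-1}$ and off-diagonal block $(I_t-\tilde B_a)Q_aS_a^{-1}$, as required.

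\textbf{The normalization is on the wrong side.} With $T'=(\lambda T_M)^{-1}T$ you have $T=\lambda T_M\circ T'$, so
\[
T(z)_a=\lambda\,(T'z)_{\psi_M(a)}R^{(r)}_{a,M,k},\qquad\text{that is,}\qquad A_a=\lambda A'_{\pi(a)}R^{(r)}_{a,M,k}.
\]
The correction factor lands to the right of $R^{(r)}_{a,M,k}$ and is indexed by $\pi(a)$, which is not the stated form. To obtain $A_a=\lambda R^{(r)}_{a,M,k}H_a$, normalize on the other side, $H=T\circ(\lambda T_M)^{-1}$, as the paper does. Choose $\lambda$ using $\theta(\mu S)=\mu^{-1}\theta(S)$, so that the upper-left block of $\theta(H)$ is $I_s$.

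With these two corrections your argument coincides with the paper's proof. This includes the converse by reversing the steps, and uniqueness of $\psi_M$ inherited from the low range.
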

\begin{proof}
Let $ T : (\F_q^r)^n \longrightarrow (\F_q^r)^n $ be a linear isometry for the NRT metric with permutation $ \pi : \A \longrightarrow \A $ and upper triangular matrices $ A_a \in \GL_r(\F_q) $, for $ a \in \A $, as in Proposition \ref{prop isometries NRT charact}. Let $ \theta(T) : (\F_q^r)^n \longrightarrow (\F_q^r)^n $ be the linear isometry with the same permutation $ \pi $ but upper triangular matrices
\begin{equation}
C_a = U_{\pi(a)}R_r(A_a^{-1})^\intercal R_rU_a^{-1}.
\label{eq C_a}
\end{equation}
We have that $ T(\C)=\C $ if and only if $ (T^{-1})^\intercal(\C^\perp)=\C^\perp $, where $ (T^{-1})^\intercal $ also has permutation $ \pi $ but has upper triangular matrices $ (A_a^{-1})^\intercal $, for $ a \in \A $. Together with Lemma \ref{lemma duals multiplicity}, we have that
\begin{equation}
\theta : \Aut(\M_\A(r,k)) \longrightarrow \Aut(\M_\A(r,rn-k))
\label{eq theta}
\end{equation}
is a group isomorphism. In particular, $ \theta $ preserves the permutation $ \pi $.

Let now $T\in\Aut(\M_\A(r,k))$. Since $s=rn-k$ satisfies 
$2\leq s<r$, Theorem~\ref{th NRT aut low range}, applied to
$\theta(T)$, shows that its permutation $ \pi $ is given by a unique
projectivity $\psi_M$ such that $ \psi_M(\A) = \A $.

Let $T_M \in \Aut(\M_\A(r,k))$ be as in Corollary~\ref{cor T_M given by projectivity}.
Since $\theta$ preserves permutations,
$\theta(T_M)$ also has permutation $\pi = \psi_M$. By Theorem \ref{th NRT aut low range} (the low range case), its action on the first $s$ coordinates in each block in $ \F_q^r $ is a global
nonzero scalar multiple of $R_{a,M,s}^{(s)}$. Hence, after multiplying
$T_M$ by a suitable global scalar $\lambda\in\F_q^*$, the automorphism
$$ H=T\circ(\lambda T_M)^{-1} \in \Aut(\M_\A(r,k)) $$
has identity block permutation and $\theta(H)$ has upper triangular matrices
of the form
\[
    \left( \begin{array}{cc}
        I_s&0\\
        0&C_a
        \end{array} \right),
\]
where $C_a \in \GL_t(\F_q)$ is an arbitrary invertible upper triangular matrix.

For a linear isometry with identity permutation, Equation (\ref{eq C_a}) gives
\[
\left( \begin{array}{cc}
        I_s&0\\
        0&C_a
        \end{array} \right)
    =
    U_aR_r(H_a^{-1})^\intercal R_rU_a^{-1},
\]
where $H_a \in \GL_r(\F_q) $ is the upper triangular matrix of $H$ at $ a \in \A $. Since
$R_rU_a^\intercal R_r=U_a$, we deduce that
\[
    H_a
    =
    U_aR_r
    \left( \begin{array}{cc}
        I_s&0\\
        0&(C_a^{-1})^\intercal
        \end{array} \right)
    R_rU_a^{-1}.
\]
After reversing the two diagonal blocks,
\[
    H_a
    =
    U_a
    \left( \begin{array}{cc}
        D_a&0\\
        0&I_s
        \end{array} \right)
    U_a^{-1},
\]
where $ D_a=R_t(C_a^{-1})^\intercal R_t \in \GL_t(\F_q) $ is equally an arbitrary invertible upper triangular matrix. Using
\[
    U_a=
    \begin{pmatrix}
        P_a&Q_a\\
        0&S_a
    \end{pmatrix},
\]
a direct multiplication yields
\[
    U_a
    \begin{pmatrix}
        D_a&0\\
        0&I_s
    \end{pmatrix}
    U_a^{-1}
    =
    \begin{pmatrix}
        P_aD_aP_a^{-1}
        &
        (I_t-P_aD_aP_a^{-1})Q_aS_a^{-1}\\
        0&I_s
    \end{pmatrix}.
\]
Conjugation by $P_a$ is a bijection of the group of invertible upper
triangular $t\times t$ matrices. Renaming
\[
    B_a=P_aD_aP_a^{-1},
    \qquad
    W_{a,s}=Q_aS_a^{-1},
\]
we obtain
\[
    H_a=
    \begin{pmatrix}
        B_a&(I_t-B_a)W_{a,s}\\
        0&I_s
    \end{pmatrix}.
\]
Finally,
\[
    T=H\circ(\lambda T_M),
\]
and therefore its local matrix is exactly (\ref{eq A_a for high range}).

Every step above is reversible, which proves the converse as well.
Uniqueness of the projectivity follows from the corresponding
uniqueness in the low-dimensional case (Theorem \ref{th NRT aut low range}).
\end{proof}

\section{The cases $ k=1 $, $ k=rn-1 $, $ n=1 $ and $ n= 2 $} \label{app exceptional cases}

In this appendix, we treat the cases $ k=1 $, $ k=rn-1 $, $ n=1 $ and $ n= 2 $, which are not treated by Theorems \ref{th NRT aut central range}, \ref{th NRT aut low range} and \ref{th NRT aut high range}.

\begin{proposition} [\textbf{Dimension one}] \label{prop aut dimension one}
Assume $ k = 1 $. Then $ \Aut(\M_\A(r,1)) $ consists of the maps $ T : (\F_q^r)^n \longrightarrow (\F_q^r)^n $ given by
$$ T((z_a)_{a \in \A}) = \lambda \left( z_{\pi (a)} \left( \begin{array}{cc}
1 & 0 \\
0 & B_a
\end{array} \right) \right)_{a \in \A}, $$
for all $ (z_a)_{a \in \A} \in (\F_q^r)^n $, where $ \lambda \in \F_q^* $, the matrices $ B_a \in \GL_{r-1}(\F_q) $ are arbitrary upper triangular, and the permutation $ \pi : \A \longrightarrow \A $ is arbitrary.
\end{proposition}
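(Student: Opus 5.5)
For $k=1$ the code is small enough to handle directly. By definition, $\V_1=\F_q$ is the space of constants, and for a constant $f=c$ we have $J^{(r)}_a(c)=(c,0,\ldots,0)$ for every $a\in\A$. Hence
$$\M_\A(r,1)=\langle \mathbf{u}\rangle_{\F_q}, \qquad \mathbf{u}=((1,0,\ldots,0))_{a\in\A}.$$
Since the code is a single line, $T(\M_\A(r,1))=\M_\A(r,1)$ holds exactly when $T(\mathbf{u})=\lambda\mathbf{u}$ for some $\lambda\in\F_q^*$. The plan is to write $T$ in the form of Proposition \ref{prop isometries NRT charact} and translate this single condition into conditions on $\pi$ and the matrices $A_a$.

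So take any NRT linear isometry $T((z_a)_{a\in\A})=(z_{\pi(a)}A_a)_{a\in\A}$, with $\pi$ a permutation and $A_a$ upper triangular and invertible. Every block of $\mathbf{u}$ equals $e_1=(1,0,\ldots,0)$, so the permutation does not affect it and
$$T(\mathbf{u})=(e_1A_a)_{a\in\A}.$$
Here $e_1A_a$ is the first row of $A_a$. The condition $T(\mathbf{u})=\lambda\mathbf{u}$ therefore says that, for every $a\in\A$, the first row of $A_a$ is $(\lambda,0,\ldots,0)$, with the same $\lambda$ for all $a$.

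Because $A_a$ is upper triangular, its first column is already $(\lambda,0,\ldots,0)^\intercal$. Combined with the first-row condition, $A_a$ is block diagonal:
$$A_a=\begin{pmatrix}\lambda&0\\0&C_a\end{pmatrix},$$
with $C_a$ upper triangular and invertible. Setting $B_a=\lambda^{-1}C_a$ gives exactly the stated form. Note that $\pi$ is never constrained, so it is arbitrary.

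For the converse, any map of the stated form is an NRT isometry by Proposition \ref{prop isometries NRT charact}, since $\lambda\,\mathrm{diag}(1,B_a)$ is upper triangular and invertible. By the same computation it sends $\mathbf{u}$ to $\lambda\mathbf{u}$, so it preserves the code. No step here is a real obstacle. The only points needing care are that the scalar $\lambda$ is the same for every block, because a single vector $\mathbf{u}$ spans the code, and that when $r=1$ the block $B_a$ is empty and the statement reduces to $\lambda P$ for a permutation matrix $P$.
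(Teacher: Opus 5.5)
Your proof is correct and follows essentially the same approach as the paper: observe that $\M_\A(r,1)$ is the line spanned by $((1,0,\ldots,0))_{a\in\A}$, and deduce that the first row of every $A_a$ must be $(\lambda,0,\ldots,0)$ with a common $\lambda$. You also spell out why upper triangularity then forces the block-diagonal form, which the paper leaves implicit when it factors out $\lambda$.
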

\begin{proof}
Notice in this case 
$$ \M_\A(r,1) = \{ ((c,0,\ldots,0))_{a \in \A} : c \in \F_q \}. $$
Clearly all of the maps $ T $ as in the proposition lie in $ \Aut(\M_\A(r,1)) $. Now if $ T \in \Aut(\M_\A(r,1)) $, then the first rows of the upper triangular matrices $ A_a \in \GL_r(\F_q) $ from Proposition \ref{prop isometries NRT charact} must be of the form $ (\lambda, 0,\ldots,0) \in \F_q^r $ (independent of $ a $), for $ a \in \A $. Factoring out the scalar $ \lambda \in \F_q^* $, we obtain the desired expression for $ T $.  
\end{proof}

\begin{proposition}[\textbf{Codimension one}] \label{prop aut codimension one}
Assume that $n\geq2$ and $k=rn-1$.
Then $\Aut(\M_\A(r,rn-1))$ consists exactly of the maps $ T : (\F_q^r)^n \longrightarrow (\F_q^r)^n $ given by $ T((z_a)_{a\in\A})
    =
    (z_{\pi(a)}A_a)_{a\in\A} $, for $ (z_a)_{a \in \A} \in (\F_q^r)^n $, where the permutation $\pi:\A\longrightarrow\A$ is arbitrary and, for $ a \in \A $,
\[
    A_a
    =
    \lambda U_{\pi(a)}
    \left( \begin{array}{cc}
    B_a&0\\
        0&1
    \end{array} \right)
    U_a^{-1},
    \qquad
    \lambda\in\F_q^*,
\]
for some $ \lambda \in \F_q^* $ and some arbitrary upper triangular matrix
$B_a\in\GL_{r-1}(\F_q)$.

For $r=1$, this reduces to the classical Reed--Solomon formula
\[
    A_a
    =
    \lambda\frac{G_\A'(a)}{G_\A'(\pi(a))},
\]
where $ G_\A = \prod_{a \in \A} (x-a) \in \F_q[x] $ is the vanishing polynomial of $ \A $.
\end{proposition}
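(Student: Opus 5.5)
The approach is duality. We transfer the problem to $\M_\A(r,1)$, whose automorphism group is given by Proposition \ref{prop aut dimension one}, and transport it back. By Lemma \ref{lemma duals multiplicity} with $k=rn-1$, we have $\M_\A(r,rn-1)^\perp = \varphi_\A(\M_\A(r,1))$. As in the proof of Theorem \ref{th NRT aut high range}, $T(\C)=\C$ if and only if $(T^{-1})^\intercal(\C^\perp)=\C^\perp$. Conjugating by $\varphi_\A$ then gives the map $\theta$ of (\ref{eq theta}). Concretely, $T$ with permutation $\pi$ and blocks $A_a$ lies in $\Aut(\M_\A(r,rn-1))$ if and only if the isometry with the same permutation $\pi$ and blocks $C_a = U_{\pi(a)}R_r(A_a^{-1})^\intercal R_rU_a^{-1}$ lies in $\Aut(\M_\A(r,1))$. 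That derivation used only Lemma \ref{lemma duals multiplicity} and the identities $R_rU_a^\intercal R_r=U_a$, so no dimension hypothesis is involved and it applies for $n\geq 2$.

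The first step is to apply Proposition \ref{prop aut dimension one}: $C_a=\mu\,\mathrm{diag}(1,B'_a)$ with arbitrary $\pi$, arbitrary $\mu\in\F_q^*$ and arbitrary upper triangular $B'_a\in\GL_{r-1}(\F_q)$. The second step is to solve for $A_a$. From $R_r(A_a^{-1})^\intercal R_r = U_{\pi(a)}^{-1}C_aU_a$ we get
$$A_a = R_r\bigl(U_a^{-1}C_a^{-1}U_{\pi(a)}\bigr)^\intercal R_r = (R_rU_{\pi(a)}^\intercal R_r)(R_rC_a^{-\intercal}R_r)(R_rU_a^{-\intercal}R_r) = U_{\pi(a)}\,(R_rC_a^{-\intercal}R_r)\,U_a^{-1},$$
using $R_r^2=I$ and the Toeplitz symmetry. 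Now $R_r\,\mathrm{diag}(1,B'_a)^{-\intercal}R_r = \mathrm{diag}(R_{r-1}B_a'^{-\intercal}R_{r-1},1)$, where $B_a:=R_{r-1}B_a'^{-\intercal}R_{r-1}$ again ranges over all invertible upper triangular $(r-1)\times(r-1)$ matrices. Setting $\lambda=\mu^{-1}$ yields exactly the stated form. Since every step is a bijective reparametrization, this gives both inclusions. The main thing to check is that the $A_a$ obtained are upper triangular, so that $T$ is an NRT isometry. This holds automatically because $U_{\pi(a)}$, $U_a^{-1}$ and the middle factor are all upper triangular.

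For $r=1$, each $U_a$ is the scalar $\lambda_{a,0}$, and $B_a$ is empty, so $A_a=\lambda\lambda_{\pi(a),0}/\lambda_{a,0}$. It remains to identify $\lambda_{a,0}$. Here $h_{a,0}=G_\A/((x-a)G_\A'(a))$ is the Lagrange polynomial, whose leading coefficient (of $x^{n-1}$) is $1/G_\A'(a)$. Substituting gives $A_a=\lambda G_\A'(a)/G_\A'(\pi(a))$. The only delicate point I anticipate is the bookkeeping of transposes and reversals, where I must confirm that the middle block really lands in the top-left position with the $1$ in the bottom-right corner.
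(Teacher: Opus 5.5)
Your proposal is correct and follows the paper's proof. You transport Proposition~\ref{prop aut dimension one} back through the duality isomorphism $\theta$ of (\ref{eq theta}), invert $C_a = U_{\pi(a)}R_r(A_a^{-1})^\intercal R_rU_a^{-1}$ using $R_rU_a^\intercal R_r=U_a$, reverse the diagonal blocks and set $\lambda=\mu^{-1}$, exactly as the paper does. Your explicit Lagrange-polynomial computation $\lambda_{a,0}=1/G_\A'(a)$ fills in the paper's one-line remark that $U_a=(G_\A'(a)^{-1})$ when $r=1$.
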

\begin{proof}
Apply the complementary-dimension isomorphism from the proof of
Theorem~\ref{th NRT aut high range} to the code $ \M_\A(r,1) $.
By Proposition~\ref{prop aut dimension one}, an automorphism of
$\M_\A(r,1)$ has arbitrary block permutation $\pi$ and upper triangular matrices
\[
\left( \begin{array}{cc}
\mu&0\\
        0&C_a
\end{array} \right) \in \GL_r(\F_q),
\]
where $\mu\in\F_q^*$ is independent of $a$ and
$C_a\in\GL_{r-1}(\F_q)$ is arbitrary upper triangular.

Applying the inverse of the group isomorphism $ \theta : \Aut(\M_\A(r,rn-1)) \longrightarrow \Aut(\M_\A(r,1)) $ from (\ref{eq theta}) and reversing the two
diagonal blocks gives
\[
    U_{\pi(a)}
    \begin{pmatrix}
        R_{r-1}(C_a^{-1})^\intercal R_{r-1}&0\\
        0&\mu^{-1}
    \end{pmatrix}
    U_a^{-1}.
\]
Put $\lambda=\mu^{-1}$. Notice that $ R_{r-1}(C^{-1})^\intercal R_{r-1} $ may be written as $\lambda B_a$, where $ B_a \in \GL_{r-1}(\F_q) $ is an arbitrary invertible upper triangular matrix.
This proves the displayed formula. 

For the case $ r=1 $, simply notice that then, for all $ a \in \A $,
\[
    U_a=M_a=(G_\A'(a)^{-1}).
\]
\end{proof}

\begin{proposition} [\textbf{One point}]
Let $ \A = \{ a \} $ ($ n=1 $) and $ 1 \leq k \leq r $. Then $ \Aut(\M_\A(r,k)) $ consists of the maps $ T : (\F_q^r)^n \longrightarrow (\F_q^r)^n $ given by
$$ T(z_a) = z_a \left( \begin{array}{cc}
B & 0 \\
0 & C
\end{array} \right), $$
for all $ z_a \in \F_q^r $, where the matrices $ B \in \GL_k(\F_q) $ and $ C \in \GL_{r-k}(\F_q) $ are arbitrary upper triangular (and the permutation $ \pi $ is clearly the identity).
\end{proposition}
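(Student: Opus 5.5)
With one point, Proposition \ref{prop isometries NRT charact} says a linear NRT isometry of $\F_q^r$ has the form $T(z_a) = z_a A$ for a single invertible upper triangular matrix $A \in \GL_r(\F_q)$; the permutation is trivially the identity. So the plan is to determine exactly which such $A$ satisfy $\M_\A(r,k) A = \M_\A(r,k)$. The first step is to identify the code. Since $k \leq r$, the Hasse derivatives $f^{(i)}$ of $f \in \V_k$ vanish for $i \geq k$. By Taylor expansion at $a$ (Definition \ref{def hasse derivative} with $y = x-a$), $f = \sum_{i<k} f^{(i)}(a)(x-a)^i$, so the jet map sends $\V_k$ onto $\F_q^k \times \{0\}^{r-k}$. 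Together with Proposition \ref{prop ev is injective} (dimension $k$), this gives
$$ \M_\A(r,k) = \F_q^k \times \{0\}^{r-k} \subseteq \F_q^r, $$
the span of the first $k$ standard basis row vectors.

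Next, write $A$ in block form
$$ A = \left( \begin{array}{cc} B & N \\ 0 & C \end{array} \right), $$
with $B \in \F_q^{k\times k}$ and $C \in \F_q^{(r-k)\times(r-k)}$ upper triangular and $N \in \F_q^{k \times (r-k)}$. Since $A$ is invertible and upper triangular, its diagonal entries are nonzero, so $B \in \GL_k(\F_q)$ and $C \in \GL_{r-k}(\F_q)$.

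For a row vector $(u,0)$ with $u \in \F_q^k$, we have $(u,0)A = (uB, uN)$. The condition $(u,0)A \in \F_q^k \times \{0\}$ for all $u$ forces $uN = 0$ for all $u$, hence $N = 0$. Conversely, if $N=0$, then $A$ maps $\F_q^k\times\{0\}$ into itself, and equality holds because $B$ is invertible (or simply by dimension count). So the automorphisms are exactly the block-diagonal $A$ with arbitrary invertible upper triangular blocks $B$ and $C$, which is the stated form. In the edge case $k=r$ the block $C$ is empty and every upper triangular $A \in \GL_r(\F_q)$ works, consistent with the code being the full space.

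There is no real obstacle here. The only point needing care is the first step, namely that the jet at a single point is a bijection from $\V_k$ onto the first $k$ coordinates when $k \le r$; this follows from the Taylor expansion in $x-a$ given by the Hasse derivatives.
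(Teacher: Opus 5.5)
Your proof is correct and takes essentially the same approach as the paper. The paper's proof consists only of the observation that $\M_\A(r,k) = \F_q^k \times \{0\}^{r-k}$. You add the details it leaves implicit: the Taylor-expansion justification of that identity, and the block computation showing that an invertible upper triangular $A$ stabilizes this subspace exactly when its off-diagonal block $N$ is zero.
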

\begin{proof}
Simply notice that in this case $ \M_\A(r,k) = \F_q^k \times \{ 0 \}^{r-k} $.
\end{proof}

For brevity, we do not make the two-point case fully explicit. The reader may obtain explicit formulas for the corresponding upper triangular matrices. We also treat only the case $ 1 \leq k \leq r $ and leave the case $ r < k \leq 2r $ to the reader.

\begin{proposition}[\textbf{Two points}]
\label{prop two points}
Let $\A=\{a,b\} \subseteq \F_q$, with $a\neq b$, and let $ k \in [r] $. Define
$$ p_j(x)=(x-a)^j(x-b)^{k-1-j} \in \V_k, $$
for $ j\in[0,k-1] $. Then a linear NRT isometry $T : (\F_q^r)^2 \longrightarrow (\F_q^r)^2 $ belongs to $\Aut(\M_\A(r,k))$ if and only if
its permutation $\pi$ is arbitrary and
the induced polynomial operator $L:\V_k\longrightarrow\V_k$ (Proposition \ref{prop L linear op}) satisfies
\[
    L(p_j)=
    \begin{cases}
        \lambda_jp_j& \textrm{ if } \pi={\rm Id}_\A,\\
        \lambda_jp_{k-1-j}& \textrm{ if } \pi(a)=b,\ \pi(b)=a,
    \end{cases}
\]
for $ j\in[0,k-1] $, for arbitrary scalars $\lambda_0,\lambda_1,\ldots,\lambda_{k-1}\in\F_q^*$.
On the first $k$ coordinates of each block, the corresponding
upper triangular matrices are uniquely determined by $L$, whereas,
if $k<r$, arbitrary invertible upper triangular matrices
may act on the last $r-k$ coordinates.
\end{proposition}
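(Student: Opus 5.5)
We will argue through the polynomial operator $L$ of Proposition \ref{prop L linear op}. The proof of Lemma \ref{lemma L(K) is K pi -1} only uses that the $A_a$ are upper triangular, so it is valid for $n=2$ and $k\le r$. It gives $L(\K_{\pi(c),t})=\K_{c,t}$ for $c\in\A$ and $t\in[0,k]$, and the same holds for the intersections $\K_\ee$ (Lemma \ref{lemma K_ee is P_ee V k - ee}, which also needs nothing beyond $|\ee|<k$).

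\textbf{Necessity.} Take $\ee=(j,k-1-j)$, i.e. $e_a=j$ and $e_b=k-1-j$. Since $k\le r$, both entries lie in $[0,r]$, and $|\ee|=k-1<k$. Then $\K_\ee=\langle p_j\rangle_{\F_q}$ is a line. Lemma \ref{lemma K_ee is P_ee V k - ee} gives $L(\K_\ee)=\K_{\rho(\ee)}$ with $\rho=\pi^{-1}$.
\begin{itemize}
\item If $\pi$ is the identity, then $\rho(\ee)=\ee$, so $L(p_j)=\lambda_jp_j$.
\item If $\pi$ is the transposition, then $\rho(\ee)=(k-1-j,j)$, so $L(p_j)=\lambda_jp_{k-1-j}$.
\end{itemize}
In both cases $\lambda_j\neq0$ because $L$ is invertible. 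Unlike the case $n\ge3$, no graph argument is used here; this is exactly why the $\lambda_j$ remain free.

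\textbf{Sufficiency.} First note that $p_0,\dots,p_{k-1}$ form a basis of $\V_k$. They are $k$ polynomials, and they are linearly independent: from a relation $\sum c_jp_j=0$, evaluating the successive Hasse derivatives at $b$ and using the orders of vanishing shows all $c_j=0$ (equivalently, substitute $y=(x-a)/(x-b)$).

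So prescribe $L$ on this basis as in the statement, with arbitrary $\lambda_j\in\F_q^*$ and arbitrary $\pi$. We must produce upper triangular $A_a,A_b$ with $J^{(r)}_c(Lf)=J^{(r)}_{\pi(c)}(f)A_c$. Because $k\le r$, the map $J^{(k)}_c:\V_k\to\F_q^k$ is an isomorphism, and the last $r-k$ coordinates of every jet vanish. Hence on the first $k$ coordinates we are forced to take
\[
\widehat{A}_c=(J^{(k)}_{\pi(c)})^{-1}\circ L\circ J^{(k)}_c,
\]
written in matrix form. We then set $A_c=\diag(\widehat{A}_c,B_c)$ with $B_c$ an arbitrary invertible upper triangular matrix, as in Theorem \ref{th NRT aut low range}. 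The off-diagonal block must vanish, exactly as in that proof.

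\textbf{Main obstacle.} It remains to check that $\widehat{A}_c$ is upper triangular, which is the step requiring care. In terms of flags, $\widehat{A}_c$ is upper triangular exactly when, for every $t$, vectors with first $t$ coordinates zero are sent to vectors with first $t$ coordinates zero. Pulled back through the jets, this means $L(\K_{\pi(c),t})=\K_{c,t}$ for all $t\in[0,k]$.

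By Proposition \ref{prop multiplicity hasse derivatives}, $\K_{a,t}$ is spanned by those $p_j$ with $j\ge t$, and $\K_{b,t}$ by those $p_j$ with $k-1-j\ge t$. Under both prescribed actions, $L$ maps these spans onto the required ones:
\begin{itemize}
\item if $\pi$ is the identity, $L$ scales each $p_j$, so each span is preserved;
\item if $\pi$ swaps $a$ and $b$, then $j\mapsto k-1-j$ exchanges the two families of spans, as required.
\end{itemize}
Therefore $\widehat{A}_c$ is upper triangular, and it is invertible because $L$ and the jet maps are. The resulting $T$ is an NRT isometry by Proposition \ref{prop isometries NRT charact}, and by construction it maps $\M_\A(r,k)$ onto itself. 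This also shows that the first $k\times k$ blocks are uniquely determined by $L$, while the $B_c$ are free, which is the final assertion.
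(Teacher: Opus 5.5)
Your proposal is correct and follows essentially the same route as the paper. For necessity, $L$ must map the lines $\K_{a,j}\cap\K_{b,k-1-j}=\langle p_j\rangle_{\F_q}$ according to $\pi$ by Lemma \ref{lemma L(K) is K pi -1}. For sufficiency, the prescribed $L$ respects the two multiplicity flags, so the jet isomorphisms $J^{(k)}_c$ induce unique upper triangular blocks, and the blocks on the last $r-k$ coordinates are free. You also spell out details the paper leaves implicit: the basis property of the $p_j$, the flag criterion for upper triangularity, and the vanishing of the off-diagonal block. The only slip is notational: with matrices acting on the right, $\widehat{A}_c$ is the map $z\mapsto J^{(k)}_c\bigl(L\bigl((J^{(k)}_{\pi(c)})^{-1}(z)\bigr)\bigr)$, so your composition must be read left to right.
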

\begin{proof}
The polynomials
$p_0,p_1,\ldots,p_{k-1}$ form a basis of $\V_k$. If the block permutation
is the identity, Lemma~\ref{lemma L(K) is K pi -1} shows that $L$ preserves the
two complete flags
\[
    \K_{a,0}\supsetneq\K_{a,1}\supsetneq\cdots\supsetneq\K_{a,k}
    \quad\textrm{and}\quad
    \K_{b,0}\supsetneq\K_{b,1}\supsetneq\cdots\supsetneq\K_{b,k}.
\]
For every $j\in[0,k-1]$, we have that
\[
    \K_{a,j}\cap\K_{b,k-1-j}
    =
    \langle p_j\rangle_{\F_q}.
\]
Hence each $p_j$ is an eigenvector of $L$, giving $ L(p_j)=\lambda_jp_j $, for arbitrary $\lambda_j\in\F_q^*$.

If $\pi$ exchanges $a$ and $b$, Lemma~\ref{lemma L(K) is K pi -1} instead gives
\[
    L\bigl(
    \K_{a,j}\cap\K_{b,k-1-j}
    \bigr)
    =
    \K_{a,k-1-j}\cap\K_{b,j},
\]
and therefore $ L(p_j)=\lambda_jp_{k-1-j} $, for arbitrary $\lambda_j\in\F_q^*$.

Conversely, every operator of either displayed form sends the two
multiplicity flags according to the prescribed permutation.
Since
\[
    J_a^{(k)},J_b^{(k)}:\V_k\longrightarrow\F_q^k
\]
are vector space isomorphisms, it induces unique invertible upper
triangular matrices on the first $k$ coordinates at both blocks.
If $k<r$, the remaining coordinates vanish on every codeword, so
arbitrary invertible upper triangular matrices may act independently
on those coordinates.
\end{proof}

\end{appendix}

\end{document}